\newcommand{\beq}{\begin{equation}}
\newcommand{\bet}{\begin{table}}
\newcommand{\eeq}{\end{equation}}
\newcommand{\N}{\mathbb{N}}
\newcommand{\opt}{\textsc{opt}\xspace}
\newcommand{\onl}{\textsc{onl}\xspace}
\newcommand{\work}{\textsc{work}}
\newcommand{\E}{\mathbb{E}}
\newtheorem{theorem}{Theorem}[section]
\newtheorem{lemma}[theorem]{Lemma}
\newtheorem{definition}[theorem]{Definition}
\newtheorem{corollary}[theorem]{Corollary}
\newtheorem{prop}[theorem]{Proposition}
\newtheorem{rem}[theorem]{Remark}
\newtheorem{remark}[theorem]{Remark}
\newtheorem{property}[theorem]{Property}
\newcommand{\defn}[1]{{\textit{\textbf{\boldmath #1}}}}
\newcommand{\poly}{poly}
\newcommand{\rathish}[1]{}
\newcommand{\hao}[1]{}
\newcommand{\haonew}[1]{}
\newcommand{\rathishnew}[1]{}
\newcommand\myworries[1]{}
\title{Almost Tight Approximation Hardness and Online Algorithms for Resource Scheduling}
\author{Rathish Das}
\affiliation{%
 \institution{University of Houston}
 \city{Houston}
 \state{Texas}
 \country{United States}}
 \email{rathish@central.uh.edu}
\author{Hao Sun}
\affiliation{%
 \institution{University of Houston}
 \city{Houston}
 \state{Texas}
 \country{United States}}
 \email{hsun33@central.uh.edu}
\keywords{Resource scheduling, precedence-constrained scheduling, hardness of approximation, online competitive resource scheduling algorithms}
\begin{document}

\begin{abstract}
We study the precedence-constrained resource scheduling problem [SICOMP'75]~\footnote{Part of this work has been presented in SPAA 2025~\cite{das2025approximation}.}. There are $n$ jobs where each job takes a certain time to finish and has a resource requirement throughout the execution time. There are precedence among the jobs. The problem asks that given a resource budget, schedule the jobs obeying the precedence constraints to minimize makespan (maximum completion time of a job) such that at any point in time, the total resource being used by all the jobs is at most the given resource budget. In the offline setting, an important open question is whether a polynomial-time $O(1)$-factor approximation algorithm can be found. We prove that hardness of constant factor approximation unless P=NP. 
 
In fact, we prove a stronger lower bound that for some constant $\alpha > 0$, there is no $o((\log t_{\max})^{\alpha})$-factor approximation algorithm for the offline version of the problem with maximum job length $t_{\max}$, even if there is only one type of resource, unless P = NP.
Further, we show that for some constant $\alpha > 0$, there is no $o( (  \log n )^\alpha )  $-factor approximation algorithm for the offline version of the problem, even if there is only one type of resource, unless NP $\subset$ DTIME$( O( 2^{    \text{polylog} (n)  }  )  ) $.

We further show a connection between this scheduling problem and a seemingly unrelated problem called the shortest common super-sequence (SCS) problem, which has wide application in Biology and Genomics. We prove that an $o(\log t_{\max})$-factor approximation of the scheduling problem would imply the existence of an $o(|\Sigma|)$-approximation algorithm for SCS with alphabet $\Sigma$.

We then consider the online precedence-constrained resource scheduling problem~\footnote{Independently and concurrently, Perotin, Sun, and Raghavan~\cite{perotin2025new} also came up with online algorithms for this problem.}. We present $\Omega(\log n)$ and $\Omega(\log t_{\max})$ lower bounds of the competitive ratio of any randomized online algorithm, where $n$ is the number of jobs and $t_{\max}$ is the maximum job length. Moreover, we present a matching $O(\min\{\log n, \log t_{\max}\})$-competitive deterministic online algorithm. 
\end{abstract}

\maketitle

\section{Introduction}

Precedence-constrained resource scheduling~\cite{garey1975bounds,  svensson2010conditional, augustine2006strip,   queyranne2006approximation, hall1997scheduling, MalleableTree, Malleable2006, perotin2021moldable, precedenceprocessor, citation-key, BRUCKER19993, KOLISCH2001249, Malleable2013} 
\hao{ references \cite{niemeier2015scheduling, jansen2019approximation, Mounie1999Malleble,jansen2004Malleble, lepere2002approximation} do not use precedence constraints  \cite{perotin2021moldable} moldable scheduling \cite{precedenceprocessor}  simply has multiple processors, \cite{citation-key} is a thesis, \cite{KOLISCH2001249} does have resource and precedent? but has other things as well } 
is an important problem and has been studied for several decades. In this problem, we are given a set of jobs, and there are precedence constraints among the jobs: if job $i$ has precedence over job $j$, then $j$ can only start after job $i$ is finished. Besides precedence constraints,  each job has a resource requirement (i.e., the job needs one or more than one unit of a resource during its execution). There is a total resource budget constraint. The goal is to finish the execution of all the jobs as soon as possible such that the total resource usage at any point in time does not exceed the given resource budget.



The precedence-constrained resource scheduling problem is  NP-hard~\cite{ScheduleNPhard} in the offline setting.
There are known $O(\log n)$-approximation algorithms~\cite{augustine2006strip} for the offline precedence-constrained resource scheduling problem where $n$ is the number of jobs. It remains an important open question of whether the approximation ratio can be improved. 


\paragraph{\textbf{This paper.}} We prove the hardness of approximation (which almost matches the current known upper bound in the approximation ratio) of the precedence-constrained resource scheduling problem posed nearly 50 years ago in~\cite{garey1975bounds}. Second, we consider the problem in the online precedence-constrained setting~\cite{graham1966bounds, azar2002line, DBLP:conf/stoc/FeldmannKST93, megow2009stochastic, DBLP:conf/icml/LassotaLMS23, epstein1998lower, epstein2000note} where a job is revealed after all of its predecessor jobs have been completed. We present matching upper and lower bounds in the competitive ratio in the online setting.

\paragraph{\textbf{Problem definition.}} We redefine the \textit{precedence-constrained resource scheduling problem} from ~\cite{garey1975bounds} as follows.
There are $d \ge 1$ types of resources $\mathcal{R}_1, \mathcal{R}_2 \ldots, \mathcal{R}_d$, where resource $\mathcal{R}_i$ has a budget of $r_i$ units. Examples of resource types are machines, storage space, or network bandwidth~\footnote{E.g., job $j$ may require 1 machine and 4MB memory to execute, while job $j^{'}$ may require 8 machines and 32MB memory to execute for the entire time of their executions.}. Resources are reusable, i.e., after a job relinquishes its resource (e.g., the machine on which the job is running), the resource can be reused by other jobs.

Let $J$ be a set of jobs where each job $j\in J$ takes $t_j$ units of time to complete and requires $r_{i,j}\in [0, r_{i}]$ units of resource $\mathcal{R}_i$ for $i \in [1,d]$ at all times during its execution,

There are precedence constraints given by a partial order $\prec$ on the jobs such that if $j \prec j^{'}$, then job $j$ must complete before $j^{'}$ can start (i.e., $j$ is a predecessor of $j^{'}$). This can be modelled as a directed acyclic graph (DAG) $D = (V, E)$, where the vertices $V$ of the DAG are the jobs and if there is a directed edge $(u,v)\in E$ from vertex $u$ to $v$, then $u \prec v$. 

A \defn{feasible schedule} is an assignment of the jobs to their starting times such that at each point in time, the total resource $\mathcal{R}_i$ usage is at most $r_i$ for $i\in[1,d]$ and no job starts before each of its predecessors completes.

In the offline setting of the problem, the whole input DAG is given in the beginning.
In the \textbf{online setting}, as defined in~\cite{graham1966bounds, azar2002line, DBLP:conf/stoc/FeldmannKST93, megow2009stochastic, DBLP:conf/icml/LassotaLMS23, epstein1998lower, epstein2000note}, 
the jobs are unknown beforehand, and a job is revealed after all of its predecessor jobs have finished. 

In both the online and offline versions of the problem, the goal is the same: Given a resource budget $r_i$ of resource $\mathcal{R}_i$ for $i \in [1,d]$ and given a DAG $D$, generate a feasible schedule with minimum makespan (time taken to complete all the jobs).


Throughout this paper, we consider \defn{non-preemptive scheduling}, that is, once a job starts executing, it cannot be stopped until it is finished. \myworries{On reviewer said we should talk about the nonpreemptive case.  It should be known that preemptive is factor 2 approximable right?   \cite{azar2002line} talk about preemptive but its with processors no resource \cite{hall1997scheduling}  minimize avg completion time 3 apx   fairly sure makespan is also easy to minimize   \cite{schedule2NPhard}  preemptive NPhard \cite{epstein1998lower} lower bound including preemptive  }
\myworries{Is the way worded in section 6 more formal and concise? Do we need both an introductory definition and a short formal one? }\myworries{mention preemptive easy apx?}
\hao{Is the way worded in section 6 more formal and concise? Do we need both an introductory definition and a short formal one?  \newline
Here we are given a DAG $D = (V,E)$ and $d \ge 1$ types of resource where resource $\mathcal{R}_i$ has capacity $r_i$ (after normalizing separately for each resource type, assume that each resource type has capacity $r$).
Each node $v\in V$  is a job that requires $r_{i,v}$ units of resource $\mathcal{R}_i$ for $t_v$  units of time to complete.   
For each edge $(u,v) \in E$,  we must complete job $u$ before starting job $v$. 
At any time, for each resource $\mathcal{R}_i$, we may use a total of $r$ units of resource $\mathcal{R}_i$.  
The objective is to complete all jobs as soon as possible. }

Resource scheduling problems in parallel computation are critically important due to their direct impact on the performance, efficiency, and overall effectiveness of parallel computing systems. In any parallel computing environment, resources such as processors, memory, and network bandwidth are limited. By minimizing waiting times and delays, effective resource scheduling helps reduce the makespan---the total time required to complete all tasks/jobs. The faster tasks are completed, the more efficient the parallel system becomes, resulting in quicker outcomes and increased productivity.

\paragraph{\textbf{Motivation of the online setting.}} 
The online model of precedence-constrained scheduling is very well-motivated  \myworries{well motivated -$>$ well-motivated ?} and studied for many decades. Some of the practical applications of this model where a job is revealed after all of its predecessors finish execution are as follows. 
\defn{Multi-threaded parallel computation}~\cite{blumofe1993space, cormen2022introduction, blumofe1999scheduling} can be modelled as a DAG: Each node $v$ in the DAG is a thread $\mathcal{T}_u$ (a sequence of instructions), and an edge from node $u$ to node $v$ represents thread $\mathcal{T}_u$ either spawns or returns to thread $\mathcal{T}_v$. The DAG \defn{unfolds dynamically}, i.e., the runtime scheduler does not know about a thread until all of its predecessor threads complete their execution. There are many implementations of online multi-threaded schedulers, such as Cilk~\cite{blumofe1995cilk, frigo1998implementation, bender2000scheduling}, that are built on the model that a job is revealed after all of its predecessors are finished.   

Another important application of the online model of precedence-constrained scheduling is \defn{Cloud computing}~\cite{DBLP:conf/icml/LassotaLMS23, voorsluys2011introduction, arunarani2019task} services offer users on-demand \myworries{on-demand ?} execution of complex tasks, such as scientific computations. Such tasks can be decomposed into smaller jobs where a job is dependent on the data produced by other smaller jobs, user input, and other dynamics of the system. Hence, the processing times and inter-dependencies (precedence constraints) of such smaller jobs become known only after certain jobs are finished and their results/output can be evaluated. 

All these online schedulers, be in cloud computing or in multi-threaded computation, need to manage one or more than one  \myworries{ 
one or more than one -$>$  one or more? } types of resources for the tasks they schedule. Threads in a multi-threaded computation might require resources such as processing cores, cache space, DRAM space etc~\cite{schneider2006scalable, berger2000hoard}. Tasks in a cloud computing system might require super-computing nodes, GPU nodes, network bandwidth, storage space, etc~\cite{diab2013dynamic, zhu2011multimedia}. Each of these resources is not unlimited, instead, there is a fixed budget for each type of resource.

\paragraph{\textbf{Our contribution.}} 

\begin{itemize}
\item \textbf{(Offline setting.) [Section~\ref{sec:lts}]} We show the impossibility of any polynomial-time $O(1)$-factor approximation algorithm for the precedence-constrained resource scheduling problem, unless P=NP. In particular, we prove a stronger lower bound that for some constant $\alpha > 0$, there is no $o((\log t_{\max})^{\alpha})$-factor approximation algorithm for the offline version of the problem with maximum job length $t_{\max}$, \defn{even if there is only one type of resource}, unless P = NP. 

\begin{restatable*}{theorem}{ResourceImplyLTS}
\label{thm:lts-new}
    Let $t_{\max}$ be the maximum job length. For some $\alpha >0$, there is no polynomial-time $o( (  \log t_{\max} )^\alpha )  $-factor approximation for the precedence-constrained resource scheduling problem, even if there is only one type of resource, unless P = NP. 
\end{restatable*}

We then present another hardness of approximation based on the number of jobs.

\begin{restatable*}{theorem}{ResourceImplyLTSJobs}
\label{thm:lts-jobs}      Let $n= |V|$ be the number of jobs of our resource scheduling instance. For some $\alpha >0$, there is no polynomial-time $o( (  \log n )^\alpha )  $-factor approximation for the precedence-constrained resource scheduling problem, even if there is only one type of resource, unless  NP $\subset$ DTIME$( O( 2^{    \text{polylog} (n)  }  )  ) $.
  \end{restatable*}

 \item \textbf{(Offline setting.) [Section~\ref{sec:scs}]} While there is still a gap between the $O(\log t_{\max})$ upper bound~\footnote{In this paper (Section~\ref{OnlineAlgSection}), we present an $O(\log t_{\max})$-competitive online algorithm (which also immediately gives an offline algorithm with $O(\log t_{\max})$-approximation ratio) for this problem.} and $o((\log t_{\max})^{\alpha})$ lower bound for some $\alpha >0$ in the approximation ratio of the offline resource scheduling problem, we prove that improving the approximation ratio from $O(\log t_{\max})$ in the offline resource scheduling problem would require improving the current state of the art of a long-standing open problem called \emph{shortest common super-sequence problem} (SCS)\cite{SCSBinNPHard,supseqnotAPX,SCS2(3)APX}.  \myworries{is there a nice way to clarify $ \alpha >0 $ in \cite{LTSnotAPX}  }

 \begin{restatable*}{theorem}{ResourceImplySCS}
\label{thm:scs}
    Even if there is only one resource type, for any fixed $\rho \in \mathbb{N}$, if there exists a polynomial-time $o(\log t_{\max}) $-approximation algorithm for the resource scheduling problem with maximum job size $t_{\max} = 2^{\rho}$, then there exists a polynomial-time $o(\rho)$-approximation algorithm for SCS on an alphabet of size $\rho$.
\end{restatable*}

\item \textbf{(Online setting.) [Section~\ref{sec:lower-online}]} We present a lower bound on the competitive ratio of any algorithm (even randomized ones) for the online precedence-constrained resource scheduling problem. 


\begin{restatable*}{theorem}{OnlineRSHard}\label{Thm:online-lower}
Even if there is only one type of resource, no randomized online algorithm for our resource scheduling problem can be better than $o(\log n)$-competitive or $o(\log t_{\max} )$-competitive where $n$ is the number of jobs in the input DAG and $t_{\max}$ is the maximum job length.
\end{restatable*}

\begin{restatable*}{theorem}{thmonlinelowermult}\label{thm:online-lower-multi}
 Let $d\ge 3$ be the number of resource types. Then, no algorithm (even randomized ones) for online multi-resource scheduling can achieve a competitive ratio better than $\frac{d-1}{2}$. 
\end{restatable*}

\item \textbf{(Online setting.) [Section~\ref{OnlineAlgSection}]} We then also present a deterministic online algorithm with a competitive ratio that matches the lower bound---thus, we achieve a tight competitive ratio for this problem. 
\begin{restatable*}{theorem}{OnlineLogApx}\label{thm:onl-upper}
Let $V$ be the set of jobs, $d \ge 1$ be the number of resource types, and $t_{\max}$ be the maximum processing time of any job. Then there is a deterministic online algorithm with competitive ratio $O(d + \min\{\log |V|, \log t_{\max}\})$ for the online precedence-constrained resource scheduling problem.    \myworries{the problem or problems?}
\end{restatable*}

\end{itemize}

\textit{A key technical contribution of this paper is the introduction of a tool, called \defn{chains} [Section~\ref{sec:chains}] that we use heavily to prove all of our lower bounds in this paper (lower bound in competitive ratio in the online setting and hardness of approximation in the offline setting). We hope that this new tool, chains will find its usefulness further in future work on resource scheduling.
}

\paragraph{\textbf{Relating scheduling with shortest common super-sequence.}}
We establish a connection between precedence-constrained resource scheduling problem and a seemingly unrelated problem, called the Shortest Common Super-sequence (SCS) problem. \myworries{  -$>$ and the seemingly unrelated Shortest Common Supersequence (SCS) problem? }
SCS has a long history of work for many decades and has wide applications in Biology, Geonomics, etc~\cite{supseqnotAPX, SCSExperimental}. 
In the SCS problem,  we are given a list $L$ of sequences of integers from an alphabet $\Sigma = {1,2,3,\ldots,\rho}$ for some $\rho$ and wish to find a sequence  of minimal length that is a supersequence for each sequence of $L$. 
SCS is known to be NP-hard even on a binary alphabet \cite{SCSBinNPHard}.
For a fixed-size \myworries{fixed-size?} alphabet $\Sigma$, Jiang and Li~\cite{supseqnotAPX} proved an $|\Sigma|$-approximation algorithm for SCS. 

We show that getting a polynomial-time $o(\log t_{\max})$-factor approximation algorithm for the scheduling problem implies a polynomial-time $o(|\Sigma|)$-approximation algorithm for SCS, where $t_{\max}$ is the maximum job length. \myworries{ scheduling problem -$>$ the scheduling problem? }




\paragraph{\textbf{ Summary of our results.}}
We summarize our results as follows.
\begin{enumerate}
\item (Offline setting.) For some constant $\alpha > 0$, there is no $o((\log t_{\max})^{\alpha})$-factor approximation algorithm 
\hao{actually is $\frac{1}{8} (\log t_{\max})^{\alpha})  $ include or no?} for the offline version of the problem, even if there is only one type of resource (i.e., $d = 1$), unless P=NP (Theorem~\ref{thm:lts-new}).

\item (Offline setting.) For some constant $\alpha > 0$, there is no $o( (  \log |V| )^\alpha )  $-factor approximation algorithm  for the offline version of the problem, even if there is only one type of resource (i.e., $d = 1$), unless NP $\subset$ DTIME$( O( 2^{    \text{polylog} (n)  }  )  ) $
(Theorem~\ref{thm:lts-jobs}).

\item (Offline setting.) If there exists a polynomial-time $o(\log(t_{\max}))$-approximation algorithm for our scheduling problem (even with only one resource type),  then there exists a polynomial-time $o(|\Sigma|)$-approximation algorithm for the SCS problem (Theorem~\ref{thm:scs}).

\item (Online setting.) We present a deterministic $O(d  + \min\{\log |V|, \log t_{\max}\})$-competitive online algorithm for the online precedence-constrained resource scheduling problem, where $V$ is the set of jobs, $t_{\max}$ is the maximum processing time of a job, and $d$ is the number of resource types (Theorem~\ref{thm:onl-upper}). 

\item (Online setting.) No online algorithm, even with randomization, can be $o(\min\{\log |V|, \log t_{\max}\})$-competitive, even if there is only one type of resource (i.e., $d = 1$) (Theorem~\ref{Thm:online-lower}). 

\item (Online setting.) No online algorithm, even with randomization, can be $\frac{d-1}{2}$-competitive \hao{probably less clunky than saying there is no f(d) f...} when there are $d\geq 3$ types of resources, even if $d$ is a fixed constant (Theorem~\ref{thm:online-lower-multi}).

\end{enumerate}

\rathishnew{We present a technical overview of the paper in the supplementary material.}
\subsection{Technical Overview}

We now present an essential toolbox to prove the lower bounds of our problem. This toolbox, chains, is applicable even if there is only one resource type. \hao{mention only need single resource for lower bound?}

\paragraph{\textbf{ A new toolbox for resource scheduling---chains.}}
Call a job \defn{skinny} (in the single resource case\haonew{do we need to specify this?}) if its resource requirement is $\epsilon < 1/n$ (i.e., very small) where $n$ is the number of jobs. Hence, all the skinny jobs can be run in parallel (not restricted due to the resource constraint). Call a job \defn{fat} if its resource requirement is 1 (i.e., takes all the resource, assuming the resource budget is normalized to 1). 

Call a pair of jobs, consisting of a skinny job of length $2^i$ followed by a fat job of length 0, a \defn{tuple-$i$} (i.e., the skinny job has precedence over the fat job). Without loss of generality, we can allow jobs of 0 length as discussed at the end of the Introduction.

A chain $C$ of type \defn{$C(m,i)$} is created by concatenating $2^m/2^i$ number of tuple-$i$ (here, $m \ge i$): the fat job of the first tuple-$i$ has precedence over the skinny job of the second tuple-$i$, the fat job of the second tuple-$i$ has precedence over the skinny job of the third tuple-$i$, and so on. The length of chain $C$ is $2^i \cdot 2^m/2^i = 2^m$ \textbf{(see Figure~\ref{fig:chains}(a))}.

We call the node of a chain with no in-neighbours (i.e., the skinny job in the first tuple) the \emph{source} and the node with no out-neighbours (i.e., the fat job in the last tuple) the \emph{sink}.  \myworries{i.e, -> i.e.,}

The precedence among the chains is defined as follows: If chain $C_1 \prec C_2$, then all jobs in chain $C_1$ must be finished before a job from chain $C_2$ starts, i.e., the sink node of chain $C_1$ has precedence over the source node of chain $C_2$ \textbf{(see Figure~\ref{fig:chains}(b))}.

An important property of chains: If we run $p$ chains of the same type, each of length $x$, then the makespan is $O(x)$, however, if the chains are of different types, the makespan is $\Omega(p\cdot x)$.

The property makes chains of different types incompatible to each other. \myworries{The property -> This property?  incompatible to -> incompatible with? } We use this property to create gadgets to prove all of our lower bounds in this paper (lower bound in competitive ratio in the online setting and hardness of approximation in the offline setting). We hope that this new tool, chains will find its usefulness further in future work on resource scheduling.

\rathish{Reviewer 2 said that the hardness of approximations use clever gadget constructions that make different chain of jobs incompatible. 
Technical takeaway: We hope, this tool---chains will also be useful in future resource scheduling work.}

\paragraph{\textbf{ Hardness of $o( ( \log t_{max} )^\alpha )$-factor approximation for the offline problem.}}
We do a hardness reduction from the \defn{loading-time scheduling problem (LTS)}~\cite{LTSnotAPX}. In the LTS problem, there is a DAG $D = (V,E)$ where $V$ represents the set of jobs and directed edges in $E$ represent the precedence relations among the jobs. There is a set of machines $m$ with their loading times $\ell(m)$. Each job in $V$ can be run on a specific machine. A feasible LTS solution is a partition of the jobs into sets $X_1, X_2, \ldots X_p$ such that all the jobs in each $X_i$ can be run on the same machine $m_i$ and if job $j \prec j^{'}$ and $j$ is put in partition $X_i$, then $j^{'}$ must be put in partition $X_{i^{'}}$, where $i \le i^{'}$. The cost of a partition $X_i$ is $\ell(m_i)$, the loading time of machine $m_i$. The cost of the LTS solution is the sum of the cost of the partitions $X_i$ for $i\in [1, p]$. The goal is to find a feasible solution with minimum cost. A detailed definition of LTS can be found in Section~\ref{sec:lts-revisit}.

From~\cite{LTSnotAPX}, there is a hardness of approximation for the LTS problem as follows. \myworries{LTS -> the LTS}
\rathish{Hao, could you please put LTS's $\log$ hardness theorem as restateble here?}
 \begin{restatable*}{theorem}{LTSorig}\label{LTSoriginal}
 \cite{LTSnotAPX}  Let $\rho$ be the number of machines in the LTS problem. For some constant $\alpha >0$,    no polynomial-time algorithm can achieve a $\rho^ \alpha$-approximation ratio  for the LTS problem in the restricted case, where the number of machines $\rho$ is a fixed constant and each job can be performed only on a single machine, unless P= NP.
\end{restatable*}

The LTS problem cannot readily be used as its current definition for the hardness reduction---Consider the situation where there are three jobs $j_1, j_2$ and $j_3$. Let all three jobs can be assigned on the same machine $m$ and $j_1\prec j_2 \prec j_3$. Then an LTS solution, where all three jobs are put in the same partition, costs only the loading time of machine $m$ once (processing times of jobs are not taken into cost calculation). However, in the scheduling solution, if we run multiple jobs in a single machine, we need to run them sequentially, and thus their processing times add up.   

To resolve this issue, we transform the input graph of LTS into another graph  called a \defn{conflict-free} graph with lesser precedence relations, which bypasses this issue: In a conflict-free graph, for any two jobs $j$ and $j^{'}$ with the same machine assignment $m$, if $j \prec j^{'}$, then there must be another job $j^{''}$ with a different machine assignment $m^{''}$ and $j \prec j^{''} \prec j^{'}$. In this case, any feasible LTS solution must put $j, j^{'}$ and $j^{''}$ in three different partitions. This allows us to relate the sum of processing times of $j, j^{'}$  and $j^{''}$ in the resource scheduling problem to the sum of loading costs of the three partitions corresponding to $j, j^{'}$  and $j^{''}$ in the LTS solution.
We then prove that the inapproximability results of LTS hold in the reduced intermediate LTS problem with conflict-free graphs.

We still need to do a further transformation of the intermediate LTS with a conflict-free graph since the loading times of the machines may not be bounded and we want the loading times to be upper bounded by a polynomial in the problem size, i.e., polynomial in the number of jobs and machines. We reduce the intermediate problem to a conflict-free \defn{bounded} LTS problem and show the same asymptotic inapproximability guarantee still holds.

We then use our tool, \defn{chains} to reduce the conflict-free bounded LTS problem to our resource scheduling problem and prove the following theorem.

\begin{restatable*}{theorem}{ResourceImplyLTS}
\label{thm:lts-new}
    Let $t_{\max}$ be the maximum job length. For some $\alpha >0$, there is no polynomial-time $o( (  \log t_{\max} )^\alpha )  $-factor approximation for the precedence-constrained resource scheduling problem, unless P = NP. 
\end{restatable*}
\paragraph{\textbf{ Hardness of $o( (  \log |V| )^\alpha )  $-factor approximation for the offline problem.}}
The approach is similar to the hardness of $o( ( \log t_{max} )^\alpha )$-factor approximation.  The key difference is that the number of machines is no longer a constant, so we must look into the details of the arguments in \cite{LTSnotAPX} in regards to how the number of machines relates to the number of jobs.
We show the following theorem.
\begin{restatable*}{theorem}{ResourceImplyLTSJobs}
\label{thm:lts-jobs}      Let $|V|$ be the number of jobs of our resource scheduling instance. For some $\alpha >0$, there is no polynomial-time $o( (  \log |V| )^\alpha )  $-factor approximation for the precedence-constrained resource scheduling problem, even if there is only one type of resource, unless $NP \subset DTIME( O( 2^{    \text{polylog} (n)  }  )  ) $.
  \end{restatable*}

\paragraph{\textbf{ An $o(\log t_{\max})$-approximation in scheduling implies improved approximation for SCS.}}
We make use of our lower bound construction tool, \defn{chains} to prove the conditional hardness result.
Given an SCS instance with alphabet $\{1,2,3,\ldots, \rho\}$ for some constant $\rho$, we create an instance of our scheduling problem with resource and precedence constraints by corresponding each character $i$ in an input sequence with a chain $C(m,i)$. By doing an approximation preserving reduction, we prove the following theorem.

\begin{restatable*}{theorem}{ResourceImplySCS}
\label{thm:scs}
    For any fixed $\rho \in \mathbb{N}$, if there exists a polynomial-time $o(\log t_{\max}) $-approximation algorithm for the resource scheduling problem with maximum job size $t_{\max} = 2^{\rho}$, then there exists a polynomial-time $o(\rho)$-approximation algorithm for SCS on an alphabet of size $\rho$.
\end{restatable*}
The current best polytime approximation ratio for SCS is $\rho$, even if the alphabet size $\rho$ is a fixed constant~\cite{supseqnotAPX}. \hao{added reference subseqnotAPX has this as a theorem without citing anyone else but it's so simple} Finding a polynomial-time $o(\rho)$-approximation algorithm for SCS is a long standing open problem.

\paragraph{\textbf{ Lower bound construction for online algorithms.}}
Using our toolbox, Chains, we present a lower bound in the competitive ratio of any online algorithm (even randomized ones). We create a random input instance which is a gadget of $\log |V|$ different types of chains. There are precedence relations among the chains. The high-level idea is that the offline algorithm knows the future, and thus can group chains of the same type (from our chain toolbox, we show that same type chains are compatible with each other) and run them in parallel. On the other hand, the online algorithm cannot put chains of the same type together and becomes $\Omega(\log |V|)$-factor worse in makespan than the offline optimal algorithm.
 
To prove a $\Omega(\log |V|)$-competitive and $\Omega(\log \ell)$-competitive lower bounds simultaneously, we create a single randomized DAG instance with maximum job length $\ell$ where $\log |V| = \Theta(\log \ell)$.  We thus prove the following theorem.

\begin{restatable*}{theorem}{OnlineRSHard}\label{Thm:online-lower}
No randomized online algorithm for our resource scheduling problem can be better than $o(\log n)$-competitive or $o(\log \ell)$-competitive where $n$ is the number of jobs in the input DAG and $\ell$ is the maximum job length.
\end{restatable*}

We also prove the following lower bound (even for randomized algorithms) as a function of the number of resource types.

\rathish{Hao, could you please put the theorem from section 4.2 as restatable here?}
\begin{restatable*}{theorem}{thmonlinelowermult}\label{thm:online-lower-multi}
 Let $d$ be the number of resource types. Then, no algorithm (even randomized ones) for online multi-resource scheduling can achieve a competitive ratio better than $\frac{d-1}{2}$. 
\end{restatable*}

\paragraph{\textbf{ Designing an online algorithm---A greedy strategy could be arbitrarily bad.}}
It might seem like to minimize the makespan, we should not leave any under-utilized resource, that is, we should schedule jobs if enough resource is available. 
We call an algorithm $\mathcal{A}$, \textbf{greedy} if whenever adequate resources are available in any point in time, $\mathcal{A}$ schedules job(s) from the set of jobs that are ready to execute. 
However, we show that every greedy algorithm could be as bad as $\Omega(n)$-competitive, where $n$ is the number of jobs (see Lemma~\ref{greedy-lower}). The problem arises due to the existence of long and skinny jobs (these jobs use significantly fewer resources but run for a long time) along with short and fat jobs (these jobs use all the resources but run for a very short time). 


  \paragraph{\textbf{ Making bad events rarer.}} The greedy algorithm suffers because we start the long and skinny job immediately. The size of the bottleneck due to a long and skinny job of length $\ell$ is equal to its length. The larger the value $\ell$, the larger the bottleneck. Call starting a long job of length $\ell$ is a bad event of length $\ell$. 
  
  \textit{We will make the bad events with longer lengths rarer}. Without loss of generality, assume that $\ell$ is a power of $2$; otherwise, make it the next nearest power of $2$. We schedule jobs of length $\ell = 2^i$ at carefully chosen some later time, even if adequate resource is available to schedule the job earlier.
  In what follows, we prove that all the jobs of the same length $2^i$ cause, in total, a bottleneck of length equal to the depth $d$ of the DAG. 
  
  
To show that $O(\log |V|)$-competitiveness, we design a different small optimization problem based on the depth and the number of jobs. Solving this optimization problem, we present the following theorem.

\begin{restatable*}{theorem}{OnlineLogApx}\label{thm:onl-upper}
Let $V$ be the set of jobs, $d \ge 1$ be the number of resource types, and $t_{\max}$ be the maximum processing time of any job. Then there is a deterministic online algorithm with competitive ratio $O(d + \min\{\log |V|, \log t_{\max}\})$ for the online precedence-constrained resource scheduling problem.    \myworries{the problem or problems?}
\end{restatable*}

\paragraph{\textbf{ Jobs with 0-processing time can be allowed without loss of generality.}}
\label{par:zero-time}
We define the resource scheduling problem with zero-time jobs to be the resource scheduling problem where some jobs $v$ have $t_v=0$. Such a job $v$ can be completed without advancing timestep.
Given a resource scheduling instance $G$ and $ 0 < \epsilon <1$, we can construct the resource scheduling instance $G^\epsilon$ that has no zero-time jobs as follows.

$ G^\epsilon $ has the same jobs (vertices) and precedence relations (edges) as $G$. The length $t'_v$ of job $v$ in $G^\epsilon$, is defined by $t'_v= \frac{1}{\epsilon}|V(G)| t_v  $ if $t_v \neq 0$ and $t'_v=1$ if $t_v =0$.
Then given a solution to process $G$ in time $\tau$, we can find a solution to process $G^\epsilon$ in time $ \frac{1}{\epsilon}|V(G)|  \tau + |V(G)| $ and vice-versa.
Hence it is possible to approximate resource scheduling on $ G^\epsilon $ within $c$ if and only if it is possible to approximate resource scheduling on $G $ within $ c- O(\epsilon) $. 

Throughout the paper, we thus allow jobs with 0-processing time.

\subsection{Related Work}

\paragraph{\textbf{ Scheduling with precedence and resource constraints.}}
Scheduling with precedence constraints is a classic problem studied extensively through many decades. The makespan minimization problem with precedence constraints is NP-hard, and Graham's celebrated list scheduling algorithm~\cite{graham1966bounds} gives a $(2 - 1/m)$ approximation for $m$ machines. This list scheduling algorithm can be used as an online algorithm. Recently, Svensson~\cite{svensson2010conditional} proved that the problem is hard to approximate better than factor $(2-\epsilon)$ assuming results from Unique Game Conjecture (UGC). 
For the general objective function, weighted completion time, there are O(1)-factor approximation algorithms~\cite{hall1997scheduling, queyranne2006approximation}.


Scheduling with resource constraints is also a classic problem, with many results that have been known over many years. In the offline setting, Garey and Graham~\cite{garey1975bounds} present a $(3 - 3/m)$-approximation algorithm. The result was later improved by Niemeier and Wiese~\cite{niemeier2015scheduling} and, finally, an AFPTAS is known due to Jansen et al.~\cite{jansen2019approximation}.

Scheduling with resource constraints is closely related to the strip packing problem. In the strip packing problem, we are given a set of rectangles and a strip with a fixed width and unbounded height. The rectangles correspond to jobs where the width and height of a rectangle correspond to the resource requirements and duration of the job. The width of the strip corresponds to the total resource budget. The goal is to pack the rectangles into the strip, minimizing the heights that a rectangle can reach. Clearly, this minimizes the makespan of scheduling the jobs with resource constraints. There are many offline and online  strip packing algorithms~\cite{christensen2017approximation}. 

Baker and Schwarz~\cite{baker1983shelf} presented an asymptotic competitive ratio of 1.7 for the online strip packing problem. Hence, this result applies to online scheduling with resource constraints.   These results were further improved by subsequent research~\cite{csirik1997shelf, hurink2008online, ye2009note, han2007strip}.

Augustine et al.~\cite{augustine2006strip} gave the first approximation algorithm for scheduling with resource and precedence constraints in the offline setting. Given $n$ jobs, they achieve an $O(\log n)$-approximation algorithm. Demirci et al.~\cite{demirci2018approximation} work on this problem with an additional constraint on machine availability. They also achieve an $O(\log n)$-approximation algorithm. 

To the best of our knowledge, there is no non-trivial hardness of approximation ratio known for the offline setting when the number of resource types is a \defn{fixed constant}. On the other hand, if the number of resource types is not a constant, then the job-shop scheduling becomes a special case to the precedence-constrained resource scheduling problem~\footnote{Each job is a chain of tasks that must be done in order. Each task fully requires a specific resource.}. There is an $\Omega(\log^{1-\epsilon} \text{LB})$ NP-Hardness approximation lower bound where LB is a lower bound on optimal for job-shop scheduling due to Mastrolilli and Svensson~\cite{mastrolilli2008acyclic}.

Scheduling jobs with multiple resource types is closely related to the vector scheduling problem proposed by Chekuri and Khanna~\cite{chekuri2004multidimensional}. In this problem, each job has $d$-different types of resource requirements, which is modelled as a $d$-dimensional vector, and the goal is to  minimize the maximum load over all dimensions. Im et al.~\cite{im2015tight} worked on the online version of the vector scheduling problem, which is quite practical. They consider scheduling a set of independent jobs/vectors (no precedence constraints among the vectors), where jobs come online at some time, unknown at the beginning. They proved tight bounds for their problem. 
 
To the best of our knowledge, there is no non-trivial upper or lower bound on the competitive ratio (as a function of the number of jobs or maximum job sizes) known for the online precedence-constrained resource scheduling problem. 

There is another line of work where the resource requirements of jobs are malleable\cite{Ludwig1994SchedulingMallable,banerjee1990MallableApx,Leung1989Mallable, Mounie1999Malleble,  jansen2002Malleble, jansen2004Malleble, DasTsDu19}, that is the jobs can be executed with fewer resources at the expense of increasing their processing time. \hao{malleable scheduling is far more restrictive as it requires  that a c factor increase in resource usage decreases completion time by at least factor c should we emphasize that? } 
\hao{ references \cite{Ludwig1994SchedulingMallable, Mounie1999Malleble, jansen2002Malleble, jansen2004Malleble} do not use precedence constraints \cite{banerjee1990MallableApx} I can't see to find,  
same reference }
The approximation ratios achieved in malleable scheduling are $O(1)$, while for fixed resource requirement, the best known  approximation ratio is $O(\log n)$. 

Resource scheduling also has nice applications in parallel paging/caching problems~\cite{DBLP:conf/spaa/AgrawalBDKPS22, AgrawalBD20, DasAB20, AgrawalBD21, DBLP:conf/spaa/DeLayoZABBDMP22} where the cache size is treated as the resource budget.

\section{Chains---a New Toolbox for Resource Scheduling Lower Bounds}
\label{sec:chains}
In this section, we introduce a toolbox, called \defn{chains}, which will be very useful in proving the lower bound in the approximation ratio for the offline setting, as well as proving the lower bound in competitive ratio for the online setting.
~\paragraph{\textbf{ What is a chain?}}
Call a job \defn{skinny} if its resource requirement is $\epsilon < 1/n$ (i.e., very small) for all types of resource $\mathcal{R}_i$ for $i\in[1,d]$ where $n$ is the number of jobs. Hence, all the skinny jobs can be run in parallel (not restricted due to resource constraint). Call a job \defn{fat} if its resource requirement is 1 (i.e., takes all the resource, assuming resource budget is normalized to 1) for at least one resource type. 

Call a pair of jobs, consisting of a skinny job of length $2^i$ followed by a fat job of length 0, a \defn{tuple-$i$} (i.e., the skinny job has precedence over the fat job). Recall that without loss of generality, we can allow jobs of 0 size as discussed in the Introduction.

A chain $C$ of type \defn{$C(m,i)$} is created by concatenating $2^m/2^i$ number of tuple-$i$ (here, $m \ge i$): the fat job of the first tuple-$i$ has precedence over the skinny job of the second tuple-$i$, the fat job of the second tuple-$i$ has precedence over the skinny job of the third tuple-$i$, and so on. The length of chain $C$ is $2^i \cdot 2^m/2^i = 2^m$ (see Figure~\ref{fig:chains}(a)).

We call the node of a chain with no in-neighbours (i.e, the skinny job in the first tuple) the \emph{source} and the node with no out-neighbours (i.e, the fat job in the last tuple) the \emph{sink}. 

The precedence among the chains is defined as follows: If chain $C_1 \prec C_2$, then all jobs in chain $C_1$ must be finished before a job from chain $C_2$ starts, i.e., the sink node of chain $C_1$ has precedence over the source node of chain $C_2$ (see Figure~\ref{fig:chains}(b)).

\begin{figure*}[htb]
    \centering
      \includegraphics[width=\textwidth]{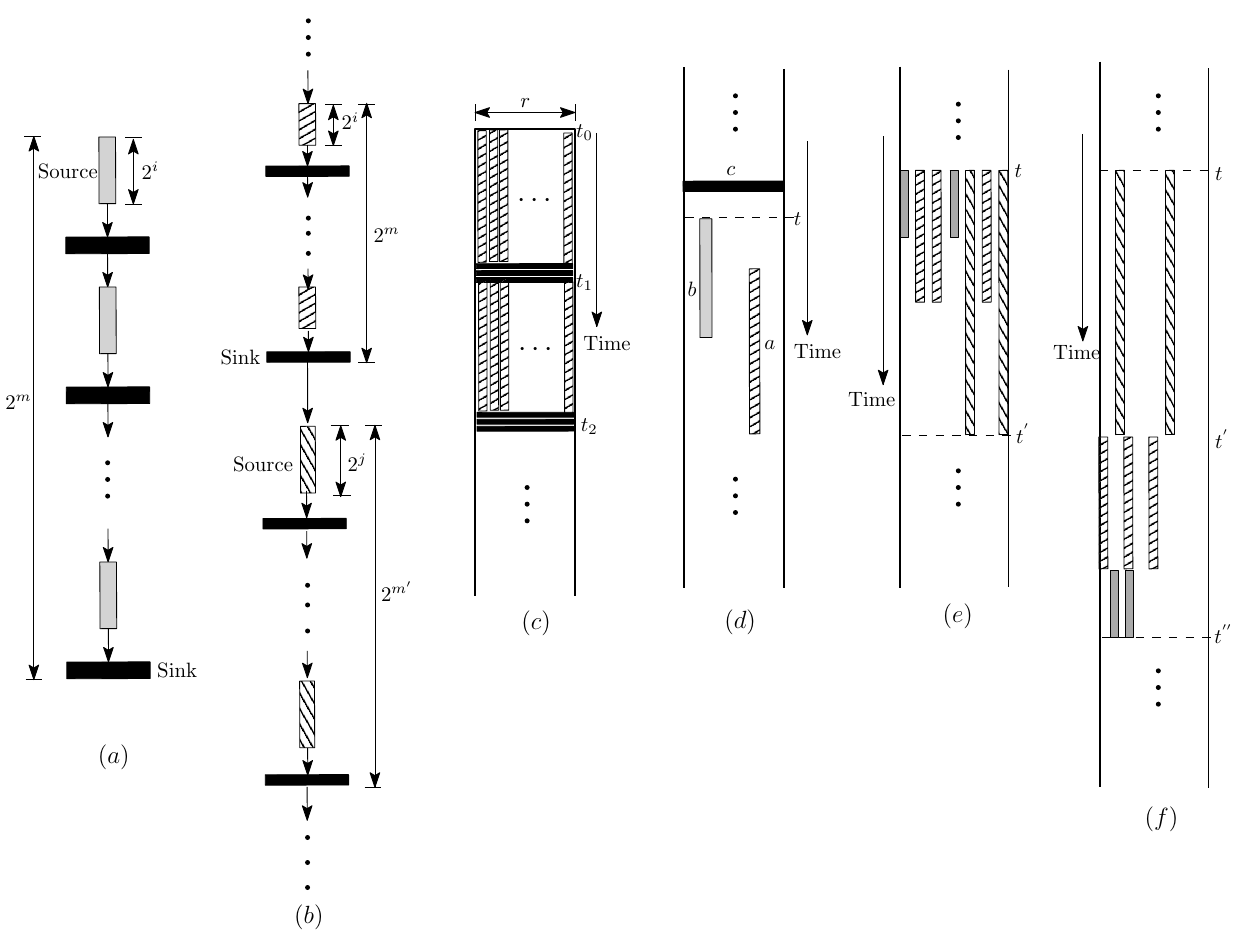}
\caption{Figure (a) shows a chain of type $C(m,i)$. \\
Figure (b) shows how two chains, $C_1$ of type $C(m,i)$ and $C_2$ of type $C(m^{'},j)$ are connected if $C_1 \prec C_2$: The sink node of $C_1$ has precedence over the source node of $C_2$.\\
Figure (c) shows the optimal way to schedule a set of chains whose skinny jobs have the same length.\\
Figure (d) shows that skinny job $a$ can be started at the same time when $b$ starts (for proof of Lemma~\ref{immediatestartchain}). \\
Figures (e) and $(f)$ show that by running only the skinny jobs of same length together, the makespan increases at most by a factor of 2 (for proof of Lemma~\ref{cleanschedule}).
}
    \label{fig:chains}
\end{figure*}


\subsection{Useful Properties of Chains}
Chains with skinny jobs of the same length can be run in parallel. On the contrary, chains with skinny jobs of different length cannot exploit parallelism: the fat jobs act as a barrier to start smaller length skinny jobs.   
\begin{property}\label{skinnyjobpar}
Suppose that there are $p$ independent chains (i.e., jobs from a chain has no precedence over jobs in another chain), of types $C(m_1,i), C(m_2, i), C(m_3, i), \ldots C(m_p, i)$, respectively (i.e., skinny jobs in all $p$ chains are of the same length). Then the optimum makespan of these $p$ chains is $\max\{2^{m_1}, 2^{m_2}, \ldots 2^{m_p}\}$.
\end{property}
\begin{proof}
    W.l.o.g let $ \max \{{m_1}, {m_2}, \ldots {m_p}\} =m_p $. Since the longest chain has length $2^{m_p}$, the makespan is at least  $2^{m_p}$.
    Now consider the following algorithm. For $j=1,.., 2^{m_p}/2^i $, run the $j$-th tuple of each unfinished chain as follows: In parallel, for each chain that yet has incomplete jobs, run the skinny job of the $j$-th tuple in the chain, followed by the fat job of the $j$-th tuple of each chain one after another. Each iteration $j$ takes a total time $2^i$ since skinny jobs of all $p$ chains can be run in parallel and the fat jobs have 0 length.
    
    At the $j$-th iteration for $j> m_t$, the chain $ C(m_t, i)$ will be completed. 
    So the entire graph is complete by the $2^{m_p}/2^i$ at which time the makespan will be $2^{m_p}$.
\end{proof}

\begin{property}\label{immediatestartchain}
    Let $\mathcal{C}$ is a set of chains. There could be precedence relations among the chains: if chain $C_1 \prec C_2$, then all jobs in chain $C_1$ must be finished before a job from chain $C_2$ starts. Given any algorithm $\mathcal{A}$ that processes chains $\mathcal{C}$ with makespan $T$, we can find an algorithm $\mathcal{A}'$ with makespan at most $T$ with the following property: algorithm $\mathcal{A}'$ does not start any job while another job has already started but not finished.
\end{property}

\begin{proof}
    Let $\mathcal{A}$ be any algorithm. 
    Suppose that some job $a$ starts while another job $b$ is running. Since $a$ and $b$ can run in parallel, both $a$ and $b$ must be skinny jobs (Figure~\ref{fig:chains}(d)). Since the predecessor of a skinny job, if exists, is a fat job, $a$'s predecessor (a fat job) must finish before $b$ started, since a fat job takes all the resource budget and thus can not be run in parallel with a skinny job.
    This means that we could have started $a$ when algorithm $\mathcal{A}$ starts job $b$. 
    Continue this process until there are no such jobs $a$ and $b$. 
\end{proof}

\begin{property}\label{parallelmakespan}
Suppose that there are $p$ independent chains \\ $C(m_1,i_1), C(m_2, i_2), C(m_3, i_3), \ldots C(m_p, i_p)$, where the length of the skinny jobs in any two different chains are different powers of 2.  Then the optimum makespan of these $p$ chains is $\Omega(2^{m_1} + 2^{m_2} + 2^{m_3} + \cdots + 2^{m_p})$.
\end{property}
We defer the proof to the Appendix~\ref{sec:appendix}.

\begin{lemma}\label{cleanschedule}
Let $\mathcal{C}$ be a set of chains. There could be precedence relations among the chains: if chain $C_1 \prec C_2$, then all jobs in chain $C_1$ must be finished before a job from chain $C_2$ starts. Given any algorithm $\mathcal{A}$ that processes chains $\mathcal{C}$ with makespan $T$, we can find an algorithm $\mathcal{A}'$ with makespan at most $2\cdot T$ with the following properties: 

(1) algorithm $\mathcal{A}'$ does not start any job while another job has already started but not finished, and 

(2) at any point in time, $\mathcal{A}'$ runs only one type of skinny job (i.e., all the skinny jobs that run at the same time have the same length).
\end{lemma}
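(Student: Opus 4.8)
The proof is a refinement of Property~\ref{immediatestartchain}. I would start from an algorithm $\mathcal{A}$ with makespan $T$ that already satisfies property (1)—this I get for free by applying Property~\ref{immediatestartchain} first. So assume $\mathcal{A}$ never starts a job while another is running; then the execution of $\mathcal{A}$ decomposes into maximal time intervals, where during each interval either (i) a single fat job runs (taking $0$ time, so this is really an instantaneous event), or (ii) a batch of skinny jobs runs in parallel. In case (ii), the skinny jobs in the batch may have several distinct lengths $2^{i_1} > 2^{i_2} > \cdots$; the interval lasts $2^{i_1}$, the length of the longest skinny job in it. The goal is to rewrite $\mathcal{A}$ so that each such parallel interval contains skinny jobs of only one length, at a cost of at most a factor $2$ in makespan.

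\textbf{Key steps.} First I would fix one parallel interval $I$ of $\mathcal{A}$, of duration $2^{i_1}$ (with $2^{i_1}$ the max skinny length appearing in $I$), and replace it by a sequence of sub-intervals $I_{i_1}, I_{i_1-1}, \dots, I_{0}$ played back-to-back, where sub-interval $I_k$ runs in parallel exactly those jobs of $I$ whose length is $2^k$ (for lengths not appearing in $I$, $I_k$ is empty and skipped). Sub-interval $I_k$ takes time $2^k$, so the total replacement time is $\sum_{k \le i_1} 2^k < 2 \cdot 2^{i_1}$, i.e.\ at most twice the original interval length (see Figure~\ref{fig:chains}(e),(f)). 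Doing this replacement on every parallel interval of $\mathcal{A}$ at most doubles the makespan, giving the bound $2T$. Second, I must check feasibility: no precedence constraint or resource constraint is violated. The resource constraint is immediate since within each $I_k$ only skinny jobs run in parallel. For precedence: all jobs that $\mathcal{A}$ ran during $I$ started at the same time under $\mathcal{A}$, so none of them is a predecessor of another (a chain alternates skinny/fat, and $\mathcal{A}$ satisfies (1) so it never overlaps a skinny job with its fat predecessor); hence the order in which we now serialize the length-classes is irrelevant to internal precedence. Across intervals, every job that used to start at the end of $I$ now starts no earlier than the end of $I$'s replacement block, which is $\ge$ the old end time of $I$—wait, that is the wrong direction; I need that predecessors still finish in time. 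The right bookkeeping is to process intervals left to right, maintaining the invariant that the $j$-th interval's replacement block ends at a time $\ge$ its original end time; then since a job's predecessors all finished at or before the start of the interval in which it runs (under $\mathcal{A}$, satisfying (1), a job's predecessor finishes strictly before the job starts), they still finish before it starts in $\mathcal{A}'$. Finally, re-apply Property~\ref{immediatestartchain} to $\mathcal{A}'$ if the serialization accidentally reintroduced an overlap (it should not, since fat jobs have length $0$ and the replacement blocks are contiguous), to guarantee property (1) holds for the final algorithm.

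\textbf{Main obstacle.} The one place requiring care is the timing/feasibility argument across intervals: I must be sure that delaying everything by stretching parallel intervals never causes a successor to be scheduled before one of its predecessors completes. This is handled by the monotonicity invariant above (replacement end times only move later, and predecessors complete no later relative to successors than before), but it must be stated cleanly, ideally by an induction over the intervals of $\mathcal{A}$ in time order, shifting each job's start time to the start of the (possibly stretched) interval it belonged to. Everything else—the factor-$2$ geometric-sum bound and the resource feasibility within a length class—is routine given Property~\ref{skinnyjobpar} and Property~\ref{immediatestartchain}.
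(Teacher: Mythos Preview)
Your proposal is correct and follows essentially the same argument as the paper: apply Property~\ref{immediatestartchain} to obtain property~(1), then within each parallel batch of skinny jobs serialize the length classes in decreasing order and use the geometric-series bound $\sum_{k\le i_1}2^k<2\cdot 2^{i_1}$ to get the factor-$2$ blowup. The paper's proof is terser about feasibility across intervals (it simply asserts the procedure can be applied to every start time), whereas you spell out the monotone-delay invariant more carefully; this extra care is fine but not strictly needed, and the final re-application of Property~\ref{immediatestartchain} you mention is unnecessary since the serialized sub-intervals are already non-overlapping.
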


\begin{proof}
From \hao{added lemma number} Property~\ref{immediatestartchain}, we can assume that algorithm $\mathcal{A^{'}}$ does not start any job while another job has already started but not finished. Let $t$ be a time-step when $\mathcal{A^{'}}$ starts a set of jobs, and let $t^{'}$ be the time-step by which all these jobs that start at time $t$ finish (Figure~\ref{fig:chains}(e)). Note that $\mathcal{A^{'}}$ does not start any job during the time interval $(t, t^{'}]$. Each of the jobs that starts at $t$ must be a skinny job (since a fat job cannot run with any other job) and the length of these jobs are powers of 2. Let $2^q$ be the maximum length of a job that starts at $t$ and finishes at $t^{'}$. 

We can then find another schedule where only the skinny jobs of length $2^q$ run in parallel, followed by only the skinny jobs (if exists) of length $2^{q-1}$ run in parallel, followed by only the skinny jobs (if exists) of length $2^{q-2}$ run in parallel, and so on (Figure~\ref{fig:chains}(f). Let this staged-execution finish at time $t^{''}$. Since all the jobs' lengths are power of $2$, we have $(t^{''} - t) \le 2\cdot (t^{'}-t)$. 

Applying the above procedure to all the start times $t$, the makespan is increased by at most a factor of 2, and the lemma is thus proved.

\end{proof}

\section{Hardness of $o( ( \log t_{max} )^\alpha )$-Factor Approximation for the Offline Problem}
\label{sec:lts}
In this section, we prove the hardness of approximation for the precedence-constrained resource scheduling problem, even when there is only one type of resource.

We do a reduction from the loading time scheduling problem (LTS) introduced by Bhatia et al.~\cite{LTSnotAPX}. 
First, we restate the LTS problem.  

\subsection{Revisiting LTS Problem}
\label{sec:lts-revisit}

Let $D = (V,E)$ be a DAG, where each node $v\in  V$ represents a job, and each edge $(u,v)$ represents that job $u$ must be completed before starting job $v$ (i.e., $u\prec v$). Let $|V| = n$, meaning there are $n$ jobs. Let $M$ be the set of $\rho$ machines $m_1, m_2, \ldots, m_\rho$. Each job $v$ can be performed only on a single machine $\delta(v)\in M$.  

Each machine $m_i$ has a loading time $\ell(m_i)$. The processing time of each job is 0. Performing a set of tasks consecutively on any given machine only requires paying the loading time once.  

A LTS instance is thus defined as $\Pi = (D, M, \delta, \ell)$.  

\paragraph{\textbf{ The objective.}} The objective is to partition the set of jobs $V$ into $k$ subsets $V_1, V_2, \ldots, V_k$, such that the following conditions hold:

\begin{enumerate}
\item For all $V_i$'s where $i \in [1,k]$, all the jobs $v$ that are put in $V_i$, must have the same machine assignment $\delta(v)$.
\item For each edge $(u,v)\in E$, if $u\in V_i$ and $v\in V_j$, then $i \le j$. 
\end{enumerate}

Let $m_i$ be the machine that all the jobs in $V_i$ can be executed. Then the cost of partition $V_1, V_2, \ldots, V_k$ is $\sum_{i =1}^k \ell(m_i)$.

The goal is to find the partition with minimum cost.  \myworries{start cutting here and cut up to Theorem 3.11}

\paragraph{\textbf{ An example.}} Let $x$, $y$, and $z$ be three jobs. Let job $x$ has a precedence over job $z$. Let $\delta(x) = \delta(z) = m_1$ and $\delta(y) = m_2$. A feasible LTS partition can put $x$ and $z$ in the same subset and the partition in this case could be, 
$V_1 = \{x, z\}$ and $V_2 = \{y\}$,  
with the total loading time given by $\ell(m_1) + \ell(m_2)$. 

On the other hand, if $x\prec y\prec z$, then we cannot put both $x$ and $z$ in the same subset $V_1$. 
 
In this case, a feasible partition would be $V_1 = \{x\}$, $V_2 = \{y\}$, and $V_3 = \{z\}$, with the total cost now equal to $\ell(m_1) + \ell(m_2) + \ell(m_1)$.  



\begin{rem}\hao{fixed citation}
In~\cite{LTSnotAPX}, the LTS problem is defined where a job can be performed only on a subset of machines. However, the authors proved in their paper that the hardness of approximation works, even if each job can be performed only on a single machine. Hence, we can assume that each job can be performed only on a single machine for the LTS instances that we consider for our hardness of approximation reduction.   
\end{rem}

The following hardness of approximation result of LTS is known from~\cite{LTSnotAPX}.


\LTSorig

In this section, we prove the following theorem.
\ResourceImplyLTS

\subsection{Modifying DAGs for LTS Problem}
The LTS problem cannot readily be used as its current definition for the hardness reduction---Consider the situation where there are three jobs $j_1, j_2$, and $j_3$. Let all three jobs can be assigned on the same machine $m$ and $j_1\prec j_2 \prec j_3$. Then an LTS solution, where all three jobs are put in the same partition, costs only the loading time of machine $m$ once (processing times of jobs are not taken into cost calculation). However, in the scheduling solution, if we run multiple jobs in a single machine, we need to run them sequentially, and thus their processing times add up.

To resolve this issue, we transform the input graph of LTS into another graph  called a \defn{conflict-free} graph with lesser precedence relations, which bypasses this issue: In a conflict-free graph, for any two jobs $j$ and $j^{'}$ with the same machine assignment $m$, if $j \prec j^{'}$, then there must be another job $j^{''}$ with a different machine assignment $m^{''}$ and $j \prec j^{''} \prec j^{'}$. In this case, any feasible LTS solution must put $j, j^{'}$, and $j^{''}$  in three different partitions. This allows us to relate the sum of processing times of $j, j^{'}$,  and $j^{''}$ in the resource scheduling problem to the sum of loading costs of the three partitions corresponding to $j, j^{'}$,  and $j^{''}$ in the LTS solution.
We then prove that the inapproximability results of LTS hold in the reduced intermediate LTS problem with conflict-free graphs. 

We still need to perform an additional transformation of the intermediate LTS problem with a conflict-free graph.
Specifically, the loading times of the machines may not be bounded. To ensure our reduction has polynomial in the input size, we want the loading times to be upper-bounded by a polynomial in the problem size, i.e., polynomial in the number of jobs and machines. 
We achieve this by reducing the intermediate problem to a conflict-free \defn{bounded} LTS problem and demonstrate that the same asymptotic inapproximability guarantee holds for this transformed problem. 

We then use our tool, \defn{chains} to reduce the conflict-free bounded LTS problem to our resource scheduling problem and prove the reduction.

There are two stages of transformations: (1) Creating a Conflict-Free DAG and (2) Creating a Conflict-Free LTS Instance with Bounded Loading Time.

\subsubsection{Stage-1 Transformation: Creating a conflict-free DAG}
An edge $(u,v)$ is called \defn{redundant} if there is a directed path of length at least 2 from $u$ to $v$.\haonew{moved sentence so it comes first}
Given a DAG $D = (V,E)$, we delete all the redundant dependency edges (removing redundant edges does not change the precedence relation among the nodes in the DAG).  Without loss of generality, we thus assume that the input DAG has no redundant edges. 
Call an edge $(u,v)$ \defn{bonded}, if $\delta(u) = \delta(v)$, i.e., both tasks $u$ and $v$ can only be assigned on the same machine. Call a DAG \defn{conflict-free}, if there are no bonded edges in the DAG. \hao{add a comment that we henceforth only talk about LTS instances with $|M(v)|=1$}  

Given a DAG $D$, we now present an algorithm to construct a conflict-free DAG $D^{'}$ such that the optimal loading time of tasks in $D$ is the same as that of $D^{'}$.

\paragraph{\textbf{ Algorithm for constructing a conflict-free DAG}}

Our algorithm proceeds in iterations, where in each iteration $i$, we \defn{resolve} a bonded edge $(u,v)$ if there is any. 
The algorithm terminates when no bonded edges remain, meaning the DAG has become conflict-free.

Now, let’s explain how we resolve a bonded edge in an iteration.
Let $D_{i-1} = (V, E_{i-1})$ be the input DAG for iteration $i$. Remove any redundant edges if present in $D_{i-1}$. Let $D_{i} = (V, E_i)$ be the modified DAG after resolving bonded edge $(u,v)$ in iteration $i$. The vertex set of $D_{i-1}$ and $D_i$ are the same. Only the edges $E_{i}$ updated in the following way (see Figure~\ref{fig:lts-conflict}):   
\begin{enumerate}
\item For each incoming edge $(w,u)$ to $u$, create a new edge $(w,v)$ in $D_{i}$.
\item Delete the edge $(u,v)$ in $D_{i}$.
\item Keep all the edges of $D_{i-1}$, except for $(u,v)$ in the new DAG $D_{i}$. 
\end{enumerate}

\begin{figure}
  \begin{center}
    \includegraphics[width=0.45\textwidth]{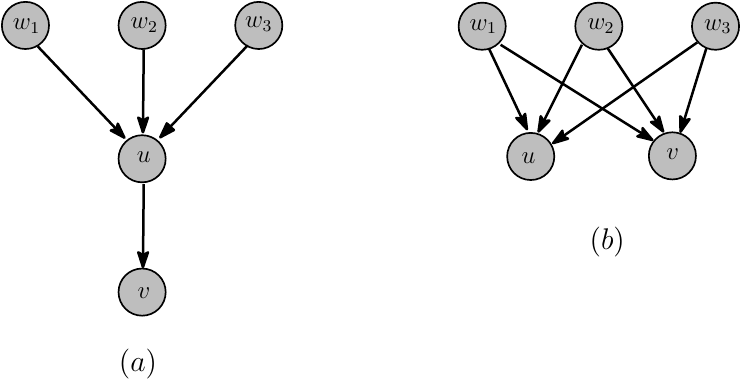}
  \end{center}
  \caption{Part (a): edge $(u,v)$ is a bonded edge. Part (b) Resolving bonded edge $(u,v)$.}
  \label{fig:lts-conflict}
\end{figure}

\begin{lemma}
\label{lem:conflict-free-time}
Let the input DAG $D = D_0$ have $n$ vertices. Let after $t$ iterations, $D_t$ becomes a conflict-free DAG (it has no bonded edges). Then $\textsc{opt-lts}(D_{0}) = \textsc{opt-lts}(D_t)$ and $t \le n^2$.
\end{lemma}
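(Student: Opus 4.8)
The plan is to establish the two assertions separately---(i) a single iteration preserves the optimal LTS cost, and (ii) the process runs for at most $n^2$ iterations---and then obtain the lemma by induction over the iterations. A preliminary remark used throughout: deleting a redundant edge changes neither the reachability relation (the witnessing length-$\geq 2$ path remains) nor $\textsc{opt-lts}$ (the deleted edge's ordering constraint is implied by that path, so feasibility is unaffected in both directions). Hence it suffices to analyze one bonded-edge resolution applied to a redundant-edge-free DAG.

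For (i), fix an iteration resolving a bonded edge $(u,v)$ of $D_{i-1}$ (after cleaning, write $E_{i-1}$ for its edge set). I would prove both inequalities between $\textsc{opt-lts}(D_{i-1})$ and $\textsc{opt-lts}(D_i)$. For ``$\textsc{opt-lts}(D_i)\le \textsc{opt-lts}(D_{i-1})$'': any feasible partition of $D_{i-1}$ is feasible for $D_i$, since each new edge $(w,v)$ of $D_i$ arises from a path $w\to u\to v$ in $D_{i-1}$, so $\text{index}(w)\le\text{index}(u)\le\text{index}(v)$; part machine-homogeneity and the cost are untouched. For ``$\textsc{opt-lts}(D_{i-1})\le \textsc{opt-lts}(D_i)$'': the only edge of $D_{i-1}$ absent from $D_i$ is $(u,v)$ itself, so an optimal partition of $D_i$ is already feasible for $D_{i-1}$ unless $\text{index}(u)>\text{index}(v)=:a$. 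In that case I would move $u$ into $v$'s part $V_a$. This is legal: $\delta(u)=\delta(v)$ (the edge is bonded), so $V_a$ remains machine-homogeneous; every predecessor $w$ of $u$ has $\text{index}(w)\le\text{index}(v)=a$ because $D_i$ contains $(w,v)$; every successor $z\neq v$ of $u$ has $\text{index}(z)\ge\text{index}(u)>a$; and the edge $(u,v)$ becomes $a\le a$. The cost does not increase (and drops if $u$'s old part becomes empty). Chaining the equality over all $t$ iterations gives $\textsc{opt-lts}(D_0)=\textsc{opt-lts}(D_t)$.

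For (ii), I would use the number of ordered reachable pairs $R(D)=\{(x,y): x\neq y,\ x \text{ reaches } y \text{ in } D\}$ as a potential, with $|R(D_0)|\le \binom{n}{2}<n^2$. Redundant-edge removal leaves $R$ unchanged. In the iteration resolving $(u,v)$: adding the edges $(w,v)$ creates no new reachabilities, since each such $w$ already reached $v$ via $u$, and $u$ reaching $w$ is impossible in a DAG; meanwhile deleting $(u,v)$ removes the pair $(u,v)$ from $R$, because $D_{i-1}$ has no redundant edge and hence the direct edge is the unique $u$-to-$v$ path, so afterwards $u$ no longer reaches $v$. Thus $|R|$ strictly decreases each iteration, so $t\le |R(D_0)|\le n^2$; in particular the process terminates, and termination means no bonded edge remains, i.e.\ $D_t$ is conflict-free.

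The main obstacle is the ``$\textsc{opt-lts}(D_{i-1})\le \textsc{opt-lts}(D_i)$'' direction of (i): since $D_i$ has strictly fewer precedence constraints relating $u$ and $v$, one must exhibit a cost-preserving way to convert its optimum back to a feasible LTS solution of $D_{i-1}$. The ``move $u$ to $v$'s level'' argument works exactly because $v$ has inherited all predecessors of $u$, which bounds $u$'s required level by that of $v$; carefully checking that this relocation respects all edges incident to $u$ and keeps the target part machine-homogeneous is the technical crux.
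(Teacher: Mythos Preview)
Your proof is correct and follows the same two-part decomposition as the paper (each iteration preserves $\textsc{opt-lts}$; reachability can only shrink, giving at most $\binom{n}{2}$ iterations). The one difference---in the backward direction of (i) you relocate $u$ into $v$'s part, whereas the paper relocates $v$ into $u$'s part---is actually the cleaner variant, since the newly added edges $(w,v)$ directly certify that every predecessor of $u$ already sits at or below $v$'s index, so all constraints incident to $u$ are immediately verified.
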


We prove Lemma~\ref{lem:conflict-free-time} by proving the following lemmas.

\begin{lemma}
Let $D_{i-1}$ be the input DAG for $i$-th iteration and $D_i$ be the output DAG after $i$-th iteration. Then $\textsc{opt-lts}(D_{i-1}) = \textsc{opt-lts}(D_i)$. 
\end{lemma}
\begin{proof}
Let bonded edge $(u,v)$ be resolved in the $i$-th iteration. Let $V_1, V_2, \ldots, V_k$ be the partition of tasks in $D_{i-1}$ by a feasible LTS solution. We can argue  that $V_1, V_2, \ldots, V_k$ is a feasible partition of the tasks in $D_i$. 
Since no new precedence constraints are created in $D_{i}$, any solution that is feasible for $D_{i-1}$, is also feasible for $D_i$.

Similarly, let $V^{'}_1, V^{'}_2, \ldots, V^{'}_{k^{'}}$ be the partition $\Pi$ of tasks in $D_{i}$ by a feasible LTS solution with loading time $X$. 
We show that there exists a feasible partition of the tasks in $D_{i-1}$ with loading time at most $X$. 
Note that $D_{i-1}$ has one extra precedence constraint---if task $u\in V^{'}_{j}$ and task $v\in V^{'}_{j^{'}}$, then $j^{'} \ge j$. 
If the LTS partition $\Pi$ for $D_i$ puts task $u$ in $V^{'}_j$ and task $v$ in $V^{'}_{j^{'}}$ where $j^{'} \geq j$, then this is a feasible solution for $D_{i-1}$.
On the other hand, if the LTS partition $\Pi$ for $D_i$ puts task $u$ in $V^{'}_j$ and task $v$ in $V^{'}_{j^{'}}$ where $j^{'} < j$, we can create another LTS partition $\Pi^{'}$ with no worse loading time---we remove task $v$ from $V^{'}_{j^{'}}$ and add task $v$ to $V^{'}_{j}$. All the precedence constraints are still met in $\Pi^{'}$ and the loading time of $\Pi^{'}$ does not increase from the previous solution $\Pi$.
Since the new solution $\Pi^{'}$ puts both $u$ and $v$ in the same subset, this is also a feasible solution for $D_{i-1}$.  

The lemma is thus proved.
\end{proof}

We now show that if there is no precedence relation (a directed path) between node $x$ and $y$ in an iteration $j$, then there is no precedence relation between $x$ and $y$ in any iterations $j^{'} > j$. Using this, we show that once a bonded edge is deleted in an iteration, it is never inserted back in future iterations.

\begin{lemma}
\label{lem:no-path-added}
If there is no directed path from a node $x$ to $y$ in an iteration $j$, then there is no directed path between $x$ and $y$ in any iterations $j^{'} > j$.
\end{lemma}
\begin{proof}
    We show that  if there is no directed path from a node $x$ to $y$ in an iteration $j$, then there is no directed path between $x$ and $y$ in iteration $j+1$ and the result will follow by induction.
    Suppose there is a directed path $\langle x=x_0,x_1,.., x_m =y\rangle$ in iteration $j+1$.   
    Let the bonded edge $(u,v)$ be resolved in the $(j+1)$-th iteration.
    Define $f(a,b) = \langle a, b \rangle$ if there is an edge $(a,b)$, otherwise, define $f(a,b)$ to be an arbitrary path from node $a$ to node $b$, if such a path exists. If the edge $(x_i, x_{i+1})$ is present in iteration $j$, then $f(x_i, x_{i+1}) = \langle x_i, x_{i+1}\rangle$. Otherwise, the edge  $(x_i, x_{i+1})$ was added in iteration $j+1$. Then $x_i$ must be an in-neighbour of $u$, and $x_{i+1}=v$ in iteration $j$. Thus, $f(x_i, x_{i+1}) $ must be the path $\langle x_i,u, x_{i+1} \rangle$ in iteration $j$.
    In either case, the path $f(x_i, x_{i+1})$ is present in iteration $j$. Thus the $x$-$y$ path  $\langle f(x_0,x_1), f(x_1,x_2),.., f(x_{m-1}, x_m) \rangle$ is present in iteration $j$.
\end{proof}

\begin{lemma}
\label{lem:bonded-removal-once}
If a bonded edge is deleted in an iteration, then the edge is never inserted in any future iterations.  
\end{lemma}
\begin{proof}
Let  $(u,v)$ be a bonded edge in DAG $D_{i-1}$ that is resolved in the $i$-th iteration. Since DAG $D_{i-1}$ does not contain any redundant edges, there is no directed path from $u$ to $v$. Also, after iteration $i$, the edge $(u,v)$ is deleted. Hence, after iteration $i$, there is no precedence relation between $u$ and $v$.  

In Lemma~\ref{lem:no-path-added}, we showed that if there is no precedence relation (a directed path) from a node $x$ to $y$ in an iteration $j$, then there is no precedence relation between $x$ and $y$ in any iterations $j^{'} > j$.

Hence, there can not be a new edge $(u,v)$, since $u$ and $v$ remain independent (no precedence) in all the iterations following $i$. The lemma is thus proved.
\end{proof}

We upper bound the number of iterations in the following lemma.

\begin{lemma}
Let the input DAG $D = D_0$ have $n$ vertices. Let after $t$ iterations, $D_t$ has no bonded edges. Then $t \le n^2$.
\end{lemma}
\begin{proof}
In Lemma~\ref{lem:bonded-removal-once}, we show that once a bonded edge is removed, the edge never gets inserted back in future. Since, there are at most $ \binom{ n}{2}$ many possible edges, the algorithm finishes within $n^2$ iterations.
\end{proof}

\paragraph{\textbf{ An important property of conflict-free DAGs.}} 
We now present an important property of conflict-free DAGs in the following lemma.
\begin{lemma}
\label{lem:conflict-prop}
Let $\Pi = (D, M, \delta, \ell)$ be a LTS instance where $D$ is a conflict-free DAG. Let $S$ be an arbitrary solution to $\Pi$ where $P = \{V_1, V_2, V_3, \ldots V_{k} \}$ is the partition of the jobs in solution $S$. Then in every $V_i$, for any two jobs $u, v \in V_i$, there is no precedence relation between $u$ and $v$. 
\end{lemma}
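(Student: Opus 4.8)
The plan is to argue by contradiction: suppose some block $V_i$ of the partition $P$ contains two jobs $u, v$ with a precedence relation between them, say $u \prec v$ (after possibly swapping names). Since $u$ and $v$ lie in the same block $V_i$, the feasibility condition (1) forces $\delta(u) = \delta(v) = m_i$, i.e. $u$ and $v$ are assigned to the same machine. Now I would invoke the defining property of a conflict-free DAG. First I would observe that $u \prec v$ means there is a directed path from $u$ to $v$ in $D$; by picking a witness path and looking at its first edge, I get an immediate successor $w$ of $u$ on a $u$-to-$v$ path (so $u \prec w \preceq v$). If $\delta(w) = \delta(u)$ then the edge $(u,w)$ is bonded, contradicting conflict-freeness directly; otherwise $\delta(w) \neq \delta(u) = \delta(v)$, and I have found a job $w$ on a directed path $u \prec w \prec v$ with a machine assignment different from that of $u$ and $v$. (Here I should be slightly careful about whether $w = v$; since $\delta(v) = \delta(u) \neq \delta(w)$ this cannot happen, so indeed $u \prec w \prec v$ strictly. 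Alternatively, one can phrase conflict-freeness exactly as: whenever $j \prec j'$ with $\delta(j) = \delta(j')$ there is $j''$ with $\delta(j'') \neq \delta(j)$ and $j \prec j'' \prec j'$, and just cite that.)

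Next I would push this $w$ through the feasibility conditions of the partition $P$. Since $u \prec w$, condition (2) applied to (a first edge of) a $u$-to-$w$ path gives that $w$ lies in some block $V_a$ with $a \geq i$; since $w \prec v$, the same argument gives that $w$ lies in some block $V_b$ with $b \leq i$ (because $v \in V_i$). As $w$ lies in a unique block, we get $a = b = i$, i.e. $w \in V_i$ as well. But then condition (1) for block $V_i$ forces $\delta(w) = m_i = \delta(u)$, contradicting $\delta(w) \neq \delta(u)$. This contradiction shows no such pair $u, v$ exists, proving the lemma.

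The main obstacle, such as it is, is purely bookkeeping: condition (2) is stated for single edges $(u,v) \in E$, whereas $u \prec v$ only gives a directed \emph{path}, so I need to iterate condition (2) along a path to conclude "$u \prec w$ implies $w$'s block index is $\geq$ $u$'s block index" (a trivial induction on path length). I also need to be careful to use a conflict-free DAG's guaranteed intermediate vertex $w$ with the \emph{strict} relations $u \prec w \prec v$ and distinct machine $\delta(w)$, rather than merely an edge, so that the block-index squeeze $i \le b \le i$ genuinely pins $w$ into $V_i$ and triggers the machine-assignment contradiction. No genuinely hard step is involved; everything follows from the definition of conflict-free plus the two feasibility conditions.
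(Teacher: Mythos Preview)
Your proof is correct and in fact more careful than the paper's own two-line argument. The paper simply asserts ``since $G$ is conflict-free, this implies that there is no precedence relation between $u$ and $v$,'' which as stated is not literally true: a conflict-free DAG (no bonded \emph{edges}) can certainly have $u \prec v$ with $\delta(u)=\delta(v)$ via a path through a node $w$ with $\delta(w)\neq\delta(u)$. Your contradiction argument---using condition~(2) along the path to squeeze any such intermediate $w$ into $V_i$, and then invoking condition~(1) to get $\delta(w)=\delta(u)$---is exactly what is needed to make the paper's claim rigorous. The bookkeeping points you flag (iterating condition~(2) along a path, and ensuring $w$ is strictly between $u$ and $v$) are the right things to check and are handled correctly.
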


\begin{proof}
Since $u,v \in V_i$, $\delta( v ) = \delta(u) $. 
Since $G$ is conflict free, this implies that there is no precedence relation between $u$ and $v$.
\end{proof}

\subsubsection{Stage-2 Transformation: Creating a conflict-free LTS instance with bounded loading time}

Let $\Pi = (D, M, \delta, \ell)$ be a conflict-free LTS instance. We perform the following transformation on $\Pi$ to get the loading time of each machine upper bounded by a polynomial of the number of jobs and machines.

\paragraph{\textbf{ Step 1: Creating a bounded loading time instance $\Pi^{'}$.}}
Given a conflict-free LTS instance $\Pi = (D, M, \delta, \ell)$, we create a new LTS instance $\Pi^{'} = (D^{'}, M^{'}, \delta^{'}, \ell^{'})$ as follows.

Let $M = \{m_1,m_2, \ldots ,m_\rho\}$ be the set of machines with non-decreasing loading time (i.e., $\ell(m_i) \le \ell(m_{i+1})$ for $i \in [1, \rho -1]$). Let $\hat{i} \in [1,\rho-1]$ be the largest index such that $\ell(m_{\hat{i}+1}) > n\cdot \ell(m_{\hat{i}})$, where $n$ is the number of vertices in $D$. If no such index $\hat{i}$ exists, then $\Pi^{'} = \Pi$. Otherwise, create a new LTS instance $\Pi^{'}$ as follows.

\begin{itemize}
\item The set of machines $M^{'}\subseteq M$ are those machine whose loading time is at least as large as $m_{\hat{i}}$, i.e., $M^{'} =  \{m \in M | \ell(m) \ge \ell(m_{\hat{i}})\}$.
\item Vertex set $V^{'}\subseteq V$ is the set of nodes $v$ such that $\delta(v)\in M^{'}$. 
\item The precedence relations $E^{'}$ among the nodes in $V^{'}$ are inherited from DAG $D$: for any two nodes $u, v\in V^{'}$, if $u\prec v$ in $D$ (i.e., there is a directed path from node $u$ to $v$ in $D$), then directed edge $(u,v)\in E^{'}$.
\item The machine assignment function $\delta^{'}: V^{'} \rightarrow M^{'}$ in instance $\Pi^{'}$ is defined as follows: $\delta^{'}(v) = \delta(v)$ for each $v\in V^{'}$.
\item The machine loading time function $\ell{'}: M^{'} \rightarrow \mathbb{N}$ in instance $\Pi^{'}$ is defined as follows: $\ell^{'}(m) = \ell(m)$ for each $m\in M^{'}$.
\end{itemize}

Given a solution $S^{'}$ for instance $\Pi{'}$, we can construct a solution $S$ for instance $\Pi$ as follows. Let $V^{'}_1, V^{'}_2, V^{'}_3, \ldots, V^{'}_{k^{'}}$ be the partition of jobs in $V^{'}$. For each job $v \in V \setminus V^{'}$, create a new subset $\{v\}$. The new partition of solution $S$ is the collection of the original subsets in the partition from $S^{'}$ and the newly created subsets.  Importantly, we can preserve the LTS ordering of subsets in this new partition. 
\myworries{Do we need to be more detailed? each $v$ can go between different $V_i, V_j$ and there is a proper way to order those that go between ?  }

\begin{lemma}
Let $S^{'}$ be a solution  to the reduced conflict-free LTS instance $\Pi^{'}$ with bounded loading time where $\textsc{cost}_{\Pi^{'}}(S^{'})$ be its cost. Let $S$ be the corresponding solution to the original LTS instance $\Pi$ where $\textsc{cost}_{\Pi}(S)$ be its cost. Then, $\textsc{cost}_{\Pi}(S) \le 2\cdot \textsc{cost}_{\Pi^{'}}(S^{'})$.
\end{lemma}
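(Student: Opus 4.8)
The plan is to write $\textsc{cost}_\Pi(S)$ exactly as $\textsc{cost}_{\Pi'}(S')$ plus the total loading time of the singleton subsets created for the removed jobs, and then to show that this extra term is strictly smaller than $\textsc{cost}_{\Pi'}(S')$ by exploiting the loading‑time gap at index $\hat i$.

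First I would record the cost decomposition. By construction the partition underlying $S$ consists of the partition $V'_1,\dots,V'_{k'}$ of $S'$ together with one singleton $\{v\}$ for each $v\in V\setminus V'$; each such $v$ has $\delta(v)\notin M'$, so it genuinely forms its own subset and cannot be merged into any $V'_i$, and the cost of $\{v\}$ in $\Pi$ is $\ell(\delta(v))$. Hence $\textsc{cost}_\Pi(S)=\textsc{cost}_{\Pi'}(S')+\sum_{v\in V\setminus V'}\ell(\delta(v))$. If the index $\hat i$ does not exist then $\Pi'=\Pi$ and $S=S'$, and if $V\setminus V'=\emptyset$ the two costs coincide; so from now on assume $\hat i$ exists and at least one job is removed.

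Next I would bound the two pieces. Every removed job $v$ runs on a machine with $\ell(\delta(v))<\ell(m_{\hat i})$ (this is the defining property of $M'$), and there are at most $n=|V|$ of them, so $\sum_{v\in V\setminus V'}\ell(\delta(v))<n\cdot\ell(m_{\hat i})$. For the lower bound on $\textsc{cost}_{\Pi'}(S')$ I would first normalize: we may assume every machine carries at least one job, since deleting an unused machine changes neither feasibility, nor the cost of any LTS solution, nor conflict‑freeness. Then $m_\rho$, the machine of largest loading time, has some job $v^{*}$, and $v^{*}\in V'$ because $\ell(m_\rho)\ge\ell(m_{\hat i})$; in $S'$ the subset containing $v^{*}$ must be run on $m_\rho$, so $\textsc{cost}_{\Pi'}(S')\ge\ell(m_\rho)\ge\ell(m_{\hat i+1})>n\cdot\ell(m_{\hat i})$, the last inequality being the defining property of $\hat i$. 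Chaining these gives $\sum_{v\in V\setminus V'}\ell(\delta(v))<\textsc{cost}_{\Pi'}(S')$, and substituting into the decomposition yields $\textsc{cost}_\Pi(S)<2\,\textsc{cost}_{\Pi'}(S')$, in particular $\le 2\,\textsc{cost}_{\Pi'}(S')$.

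The only real obstacle is the lower bound $\textsc{cost}_{\Pi'}(S')\ge\ell(m_{\hat i+1})$: without the normalization that every machine is used, $V'$ could be empty (all jobs on cheap machines, expensive machines idle), so $S'$ would have cost $0$ while $S$ does not, and the statement would fail. I would therefore state the "every machine is used" assumption explicitly at the outset; once it is in place, together with the maximality of $\hat i$ (which guarantees the gap is genuinely located at index $\hat i$), everything else is the elementary counting above.
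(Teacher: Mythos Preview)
Your argument is correct and follows the same route as the paper: decompose $\textsc{cost}_\Pi(S)=\textsc{cost}_{\Pi'}(S')+\sum_{v\in V\setminus V'}\ell(\delta(v))$, bound the extra term by $n\cdot\ell(m_{\hat i})<\ell(m_{\hat i+1})$, and compare with the lower bound $\textsc{cost}_{\Pi'}(S')\ge\ell(m_{\hat i+1})$. The paper asserts this last inequality without comment, whereas you justify it via the harmless ``every machine is used'' normalization; your treatment is the more careful of the two.
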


\begin{proof}
Let $\hat{i} \in [1,\rho-1]$ be the largest index such that $\ell(m_{\hat{i}+1}) > n\cdot \ell(m_{\hat{i}})$, where $n$ is the number of vertices in $D$. Then, $m_{\hat{i}}$  is the smallest loading time machine in $\Pi^{'}$. Thus, $\textsc{cost}_{\Pi^{'}}(S^{'}) \ge \ell(m_{\hat{i}+1})$. 
From the construction of $V^{'}$, any job $v$ that is in $V$, but not in $V^{'}$, must be assigned on a machine whose loading time $\ell(\delta(v)) \le \ell(m_{\hat{i}})$. 
Clearly, the number of such jobs that are in $V$, but not in $V^{'}$ is at most $n$. Hence, $\textsc{cost}_{\Pi}(S) \le \textsc{cost}_{\Pi^{'}}(S^{'}) + n\cdot \ell(m_{\hat{i}}) \le \textsc{cost}_{\Pi^{'}}(S^{'}) + \ell(m_{\hat{i}+1}) \le 2 \cdot \textsc{cost}_{\Pi^{'}}(S^{'})$. The lemma is thus proved.   
\end{proof}

\paragraph{\textbf{ Step 2: Creating a LTS instance $\Pi^{''}$ with powers of 2 loading times.}}
Given a LTS instance $\Pi^{'} = (D^{'}, M^{'}, \delta^{'}, \ell^{'})$, we create a new LTS instance $\Pi^{''} = (D^{'}, M^{'}, \delta^{'}, \ell^{''})$ where for each machine $m$, we round $\ell^{'}(m)$ to the nearest power of 2 to get $\ell^{''}(m)$.

From the construction of $\Pi^{''}$, the following lemma is immediate.
\begin{lemma}\label{removesmalljob}
Let $S^{''}$ be a solution to the reduced LTS instance $\Pi^{''}$ where $\textsc{cost}_{\Pi^{''}}(S^{''})$ be its cost. Let $S$ be the corresponding solution to the LTS instance $\Pi$ where $\textsc{cost}_{\Pi^{'}}(S^{'})$ be its cost. Then, $\textsc{cost}_{\Pi^{''}}(S^{''}) \le 2\cdot \textsc{cost}_{\Pi^{'}}(S^{'})$.
\end{lemma}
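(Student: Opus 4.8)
The plan is to exploit the fact that the \emph{feasibility} constraints of an LTS instance depend only on the underlying DAG and the machine-assignment map, never on the loading times: a partition $V_1,\dots,V_k$ is feasible exactly when every block is monochromatic under $\delta$ and the block indices respect every edge of the DAG. Since $\Pi''$ differs from $\Pi'$ only in the loading-time function (the DAG $D'$, the machine set $M'$, and the assignment $\delta'$ are all left unchanged), the very same partition that realizes $S''$ is automatically a feasible solution of $\Pi'$; call it $S'$ (composing with the reduction of Step~1 it also induces the solution $S$ of $\Pi$). So I would first record this observation, which lets me compare $\textsc{cost}_{\Pi''}$ and $\textsc{cost}_{\Pi'}$ block by block on one and the same partition, rather than having to transform solutions.

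Next I would bound the per-machine blow-up. For every machine $m\in M'$, the value $\ell''(m)$ is the power of $2$ nearest to $\ell'(m)$; in particular it is at most the least power of $2$ that is $\ge \ell'(m)$, and that quantity is $< 2\,\ell'(m)$ (and equals $\ell'(m)$ when $\ell'(m)$ is already a power of $2$). Hence $\ell''(m) \le 2\,\ell'(m)$ for every machine. Rounding \emph{down} can only decrease the cost, so no care is needed there; by Step~1 all loading times are positive integers, so rounding never produces a $0$ (and if one did allow $\ell'(m)=0$ the inequality is trivial on that machine).

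Finally I would sum over the blocks. Writing the partition as $V_1,\dots,V_k$ with $V_i$ assigned to machine $m_{j_i}$,
\[
\textsc{cost}_{\Pi''}(S'') \;=\; \sum_{i=1}^{k} \ell''(m_{j_i}) \;\le\; \sum_{i=1}^{k} 2\,\ell'(m_{j_i}) \;=\; 2\,\textsc{cost}_{\Pi'}(S'),
\]
which is the claimed bound. There is no real obstacle here; the only things to be mildly careful about are the direction of the rounding and the invariance of feasibility under changing loading times, both of which are immediate from the definitions. This is precisely why the paper states that the lemma follows ``immediately'' from the construction of $\Pi''$, and I would keep the write-up correspondingly short.
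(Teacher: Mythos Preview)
Your proposal is correct and matches the paper's (implicit) argument: the paper states that the lemma is immediate from the construction of $\Pi''$, and your write-up simply unpacks that—same partition, per-machine bound $\ell''(m)\le 2\,\ell'(m)$ from rounding to the nearest power of $2$, then sum over blocks. Nothing to add.
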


We now state the theorem for hardness of approximation on the reduced LTS problem instances $\Pi^{''}$. 

 \begin{theorem}\label{LTSmodified}
 For some constant $\alpha >0$, 
 no polynomial-time algorithm can achieve a $\rho^ \alpha\cdot (1/4)$-approximation ratio  for the conflict-free bounded LTS problem in the restricted case when $|M(v)|=1$ and all load times are powers of 2 unless P= NP. 
 \end{theorem}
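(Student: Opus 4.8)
The plan is to compose the three instance transformations developed in this section --- Stage~1 (making the DAG conflict-free), Stage~2 Step~1 (bounding the loading times), and Stage~2 Step~2 (rounding loading times to powers of~$2$) --- into a single polynomial-time reduction from the restricted LTS problem of Theorem~\ref{LTSoriginal} (one machine per job, a fixed constant number $\rho_0$ of machines) to the conflict-free bounded LTS problem, losing only a constant factor $4$ in the approximation ratio. A $\tfrac14\rho^\alpha$-approximation for the latter would then yield a $\rho_0^\alpha$-approximation for the former, which is impossible unless P $=$ NP by Theorem~\ref{LTSoriginal}.

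Concretely, starting from a restricted LTS instance $\Pi_0$ (WLOG with positive loading times, since machines of loading time $0$ can be contracted), I would first run the Stage~1 algorithm to obtain a conflict-free instance $\Pi_1$ on the same jobs and machines; by Lemma~\ref{lem:conflict-free-time} this terminates within $n^2$ iterations, keeps the ``one machine per job'' restriction, satisfies $\textsc{opt-lts}(\Pi_1)=\textsc{opt-lts}(\Pi_0)$, and --- via the per-iteration argument inside that lemma --- lets me turn any feasible solution of $\Pi_1$ into a feasible solution of $\Pi_0$ of no larger cost in polynomial time. I would then apply Stage~2 Step~1 to get $\Pi'$, the sub-instance supported on the machines above the loading-time gap (with loading times rescaled to be polynomially bounded, which is the point of this step), and Stage~2 Step~2, rounding each loading time \emph{up} to the nearest power of~$2$, to get $\Pi''$. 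The resulting $\Pi''$ is conflict-free, assigns each job a single machine, has every loading time a power of~$2$ of bit-length $O(\log n)$, and uses at most $\rho_0$ machines --- say $\rho''\le\rho_0$ of them; so $\Pi''$ is a legitimate conflict-free bounded LTS instance in the required restricted case, and the whole pipeline runs in polynomial time.

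It then remains to track the approximation factor. On the optimum side: $\textsc{opt-lts}(\Pi'')\le 2\,\textsc{opt-lts}(\Pi')$ because rounding up to a power of~$2$ at most doubles each loading time, hence each solution's cost (Lemma~\ref{removesmalljob}); $\textsc{opt-lts}(\Pi')\le\textsc{opt-lts}(\Pi_1)$ because restricting an optimal solution of $\Pi_1$ to the jobs of $\Pi'$ stays feasible for $\Pi'$ (whose precedence edges are inherited from $\Pi_1$) and costs no more; and $\textsc{opt-lts}(\Pi_1)=\textsc{opt-lts}(\Pi_0)$; hence $\textsc{opt-lts}(\Pi'')\le 2\,\textsc{opt-lts}(\Pi_0)$. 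On the solution side: a solution $\hat S$ of $\Pi''$ is also a solution of $\Pi'$ with $\textsc{cost}_{\Pi'}(\hat S)\le\textsc{cost}_{\Pi''}(\hat S)$ (rounding \emph{up} never lowered a loading time), which the Stage~2 Step~1 lemma converts to a solution $S_1$ of $\Pi_1$ with $\textsc{cost}_{\Pi_1}(S_1)\le 2\,\textsc{cost}_{\Pi'}(\hat S)$, which in turn becomes a solution $S_0$ of $\Pi_0$ with $\textsc{cost}_{\Pi_0}(S_0)\le\textsc{cost}_{\Pi_1}(S_1)$. Now suppose, for contradiction, that a polynomial-time algorithm $\mathcal{A}$ achieves ratio $\tfrac14\rho^\alpha$ on conflict-free bounded LTS; running $\mathcal{A}$ on $\Pi''$ gives $\hat S$ with $\textsc{cost}_{\Pi''}(\hat S)\le\tfrac14(\rho'')^{\alpha}\,\textsc{opt-lts}(\Pi'')$, and chaining the above yields
\[
\textsc{cost}_{\Pi_0}(S_0)\le 2\,\textsc{cost}_{\Pi''}(\hat S)\le \tfrac12(\rho'')^{\alpha}\,\textsc{opt-lts}(\Pi'')\le (\rho'')^{\alpha}\,\textsc{opt-lts}(\Pi_0)\le \rho_0^{\alpha}\,\textsc{opt-lts}(\Pi_0),
\]
i.e.\ a polynomial-time $\rho_0^\alpha$-approximation for restricted LTS, contradicting Theorem~\ref{LTSoriginal} unless P $=$ NP.

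I expect the main obstacle to be precisely this bookkeeping: each rounding and each restriction inequality must be oriented so that the factor $2$ from the Stage~2 Step~1 lemma, the two appearances of the power-of-$2$ rounding, and the two loss-free ``restriction'' steps ($\Pi_1 \to \Pi'$ on machines, and lifting a $\Pi_1$-solution to a $\Pi_0$-solution) multiply out to exactly $4$ and not $8$ --- rounding \emph{up}, rather than to the genuinely nearest power of~$2$, is what keeps the solution-side rounding loss-free --- and one must also verify that every intermediate instance genuinely stays inside the class (conflict-free, one machine per job, polynomially bounded powers-of-$2$ loading times) on which $\mathcal{A}$ is assumed to work. If one wants a strict contradiction with Theorem~\ref{LTSoriginal}, replace $\tfrac14$ by $\tfrac15$ and absorb the difference.
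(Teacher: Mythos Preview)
Your proposal is correct and follows essentially the same approach as the paper's proof: compose the instance transformations, track a factor of~$2$ lost on the optimum side (from the power-of-$2$ rounding) and a factor of~$2$ lost on the solution side (from re-inserting the low-load-time jobs), so that a $\tfrac14\rho^\alpha$-approximation on $\Pi''$ yields a $\rho^\alpha$-approximation on the original instance, contradicting Theorem~\ref{LTSoriginal}. The paper's write-up is terser --- it treats the Stage-1 conflict-free transformation as implicit (loss-free by Lemma~\ref{lem:conflict-free-time}) and does not separately spell out $\rho''\le\rho_0$ or the ``round up'' direction --- but your more explicit bookkeeping of these points is exactly what the paper's chain of inequalities relies on.
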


\begin{proof}  
We prove this theorem by contradiction. Given an arbitrary LTS instance $\Pi$, we reduce $\Pi$ to $\Pi^{''}$ by applying step 1 and step 2 as described above. Suppose that there is a polynomial-time $\rho^ \alpha\cdot (1/4)$-approximation algorithm $\mathcal{A}$ on the reduced LTS problem instances $\Pi^{''}$. Then algorithm $\mathcal{A}$ gives a solution $S^{''}$ to $\Pi''$ satisfying $\textsc{cost}_{\Pi''}(S)  \le \rho^ \alpha\cdot (1/4) OPT_{\Pi''} $. We thus get the following.
$$\textsc{cost}_{\Pi'}(S^{'}) \leq \textsc{cost}_{\Pi''}(S^{''}) < \rho^ \alpha\cdot (1/4) \textsc{opt}_{\Pi''}  \leq  \rho^ \alpha\cdot (1/2) \textsc{opt}_{\Pi'}.$$ 
(
We lose 1/2 factor going from $\Pi^{''}$ to $\Pi^{'}$ due to rounding).

From the discussions above, given a solution $S^{'}$ to $\Pi^{'}$, we can find (in polynomial time) a solution $S$ to $\Pi$ of cost at most $2 \cdot \textsc{cost}_{\Pi^{'}}(S).$ (From Lemma~\ref{removesmalljob}, we lose 1/2 factor going from $\Pi^{'}$ to $\Pi$ due to adding back small jobs).  Then, 
$$\textsc{cost}_{\Pi}(S) \leq 2  \textsc{cost}_{\Pi'}(S^{'}) \le \rho^ \alpha \textsc{opt}_{\Pi'} \leq \rho^ \alpha \textsc{opt}_{\Pi}.$$ This contradicts \autoref{LTSoriginal}, which says no polynomial-time algorithm $\mathcal{A}$ can achieve $\rho^{\alpha}$-approximation ratio on every LTS instance $\Pi$, unless P=NP.
The theorem is thus proved.
\end{proof}

\paragraph{\textbf{ Normalizing the loading time.}} 
Given any reduced LTS instance $\Pi^{''}$, we scale the loading times so that the minimum loading time is 1. 

\paragraph{\textbf{ An important property: Bounded loading time ensures the reduction size is polynomial.}} Let $m_1, m_2, m_3, \ldots, m_{\rho}$ be the sequence of machines with non-decreasing loading times in a normalized bounded LTS instance $\Pi^{''}$. From the construction, we have $\ell(m_{i+1}) \le n\cdot \ell(m_i)$ for all $i \in [1, \rho-1]$. Since, after normalization, $\ell(m_i) = 1$, the maximum loading time is $n^{\rho}$. In our reduction, $\rho$ is a constant (recall that the hardness result from~\cite{LTSnotAPX} works even if the number of machines $\rho$ is a fixed constant). Hence, the loading times are upper bounded by a polynomial in $n$. 

\subsection{Approximation-Preserving Reduction from LTS to Resource Scheduling}
We will now show an approximation preserving reduction from conflict-free bounded LTS to resource scheduling.

Given a conflict-free bounded LTS instance $ \Pi = (D,M,\delta,\ell) $, we construct a resource scheduling instance $\textsc{rs}_{\Pi}$ as follows. Let $m_1, m_2, m_3, \ldots m_\rho$ be the sequence of machines with non-decreasing leading time, where $\ell(m_{i+1}) \ge \ell(m_{i})$ for $i\in[1, \rho -1]$.
For $v \in D$, let $i(v)$ denote the index of the machine that $v$ can be scheduled on, that is, $\delta(v)=m_{i(v)}$.
For each node $v \in D$, construct a chain $C_v = C(f(i(v)), i(v))$ where $f(i(v)) = \rho + \log \ell(m_{i(v)})$. Recall that, a chain $C(m,i)$ is a concatenation of $2^m/2^i$ tuples where each tuple is a skinny job of length $2^i$, followed by a fat job. 
Hence, chain $C_v$ is a concatenation of $2^{\rho + \log \ell(m_{i(v)}) }/ 2^{i(v)} = 2^{\rho -i(v) } \cdot \ell(m_{i(v)})$ tuples where each tuple is a skinny job of length $2^{{i(v)}}$, followed by a fat job. 
For each $(u,v) \in D$, draw a directed edge from the sink node of $C_u$ (which is a fat job) to the source node of $C_v$ (a skinny job).
This forms our resource scheduling instance $\textsc{rs}_{\Pi}$.  

Note that if an LTS instance $ \Pi$ has $\rho$ machines, then the resulting resource scheduling instance $\textsc{rs}_{\Pi}$ has maximum job size at most $2^\rho$. However, in our reduction, $\rho$ is a fixed constant. 

\begin{rem}
\label{rem:lts-skinny}
In the resource scheduling instance $\textsc{rs}_{\Pi}$, each skinny job of length $2^i$ is part of a chain $C(f(i), i)$ where $f(i) = \rho + \log \ell(m_i)$.
\end{rem}

\paragraph{\textbf{ Size of the reduced resource scheduling instance $\textsc{RS}_{\Pi}$.}} Let $n$ be the number of jobs and $\rho$ be the number of machines in the conflict-free bounded LTS instance $\Pi$. In this problem, $\rho$ is a fixed constant. 

For each node $v$ in $\Pi$, we create a chain $C(f(i), i)$, where job $v$ can be run only on machine $m_i$. The maximum number of nodes in a chain is $2^{f(i)}$, where $f(i) = \rho + \log \ell(m_i)$. The loading time of any machine is upper bounded by $n^{\rho}$ due to the bounded loading time construction. Thus, a chain can contain at most $2^\rho \cdot n^\rho$ nodes. 

There are $n$ jobs in the LTS instance, so the total number of nodes in the reduced resource scheduling instance is $n \cdot 2^\rho \cdot n^\rho = 2^\rho \cdot n^{\rho + 1}$.  

Recall that $\rho$ is a fixed constant, and the hardness of approximation results for LTS still holds even if the number of machines $\rho$ is a constant (see Theorem~\ref{LTSoriginal}). Thus, our reduced resource scheduling instance \textbf{size is still a polynomial in $n$}, since $\rho$ is a constant. 

We first prove the following lemma to prove the main theorem of this section.
\begin{lemma}\label{lem:ResourceSchedulingeqLTS}
Let $\Pi = (D,M,\delta, \ell)$ be a LTS instance where the following conditions are satisfied: $D$ is a conflict-free DAG, the maximum loading time of a machine is $n^\rho$ where $n$ is the number of jobs in $D$ and $\rho$ is the number of machines, all loading times are powers of 2, and the number of machines $\rho$ is a constant. Let $\textsc{rs}_{\Pi}$ be the reduced resource scheduling instance. Then the following statements hold.

\textbf{{Part 1:}} Given  a solution $V_1,V_2,.., V_k$ to the LTS  instance $\Pi$ of cost $r$, we can find in polynomial time a solution to the reduced resource scheduling instance $\textsc{rs}_{\Pi}$ that achieves makespan $2^\rho \cdot r$.

\textbf{Part 2: } Given  a solution to $\textsc{rs}_{\Pi}$ achieving  makespan $ \tau$, we can find in polynomial time  a solution to the LTS  instance $\Pi$ of cost $ \tau /2^{\rho -1}$.
\end{lemma}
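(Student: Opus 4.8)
The plan is to establish the two directions separately, both relying on the key structural properties of chains from Section~\ref{sec:chains}, namely Property~\ref{skinnyjobpar} (same-type chains run in parallel in time equal to the longest chain), Property~\ref{parallelmakespan} (different-type chains must essentially run serially), and Lemma~\ref{cleanschedule} (any schedule can be converted, losing only a factor of 2, into one that at any time runs only skinny jobs of one common length and never starts a job while another is running).

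\textbf{Part 1 (LTS solution $\Rightarrow$ resource schedule).} Given a feasible LTS partition $V_1, V_2, \ldots, V_k$ of cost $r = \sum_{t=1}^k \ell(m_{(t)})$, where $m_{(t)}$ is the common machine of the jobs in $V_t$, I would process the parts in the order $V_1, V_2, \ldots, V_k$. For each part $V_t$, all of its LTS-jobs $v$ share the same machine index $i$, hence by construction all the chains $C_v$ for $v \in V_t$ are of the same type $C(f(i), i)$ with $f(i) = \rho + \log\ell(m_i)$, so each has length $2^{f(i)} = 2^\rho \cdot \ell(m_i)$. By Property~\ref{skinnyjobpar} these chains can all be run in parallel in makespan exactly $2^\rho \cdot \ell(m_i)$, since the skinny jobs (requirement $\epsilon < 1/n$) fit simultaneously under the budget and the fat jobs have length $0$. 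The precedence ordering is respected: an edge $(u,v) \in D$ goes from the sink of $C_u$ to the source of $C_v$, and because the LTS partition satisfies condition (2) ($u \in V_a$, $v \in V_b$ $\Rightarrow a \le b$), chain $C_u$ is fully processed before chain $C_v$ begins. (When $a = b$, conflict-freeness of $D$ — invoked via Lemma~\ref{lem:conflict-prop} — guarantees there is no precedence edge between $u$ and $v$ at all, so no conflict arises within a stage.) Summing the stage makespans gives total makespan $\sum_{t=1}^k 2^\rho \cdot \ell(m_{(t)}) = 2^\rho \cdot r$, as claimed, and the construction is clearly polynomial time.

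\textbf{Part 2 (resource schedule $\Rightarrow$ LTS solution).} Given a schedule for $\textsc{rs}_\Pi$ of makespan $\tau$, first apply Lemma~\ref{cleanschedule} to obtain a schedule of makespan at most $2\tau$ that is ``clean'': at any instant it runs only skinny jobs of a single common length, and never starts a job while another runs. In such a schedule, time naturally decomposes into maximal intervals during each of which a single batch of equal-length skinny jobs (of some length $2^i$, followed by their length-$0$ fat successors) runs; call such an interval a \emph{stage}. I would define the LTS partition by grouping: for each node $v$ of $\Pi$, look at the stages in which the skinny jobs of chain $C_v$ are executed; since all skinny jobs of $C_v$ have the same length $2^{i(v)}$ and $C_v$ consists of $2^{\rho - i(v)} \cdot \ell(m_{i(v)})$ consecutive tuples, the chain must occupy at least that many stages of length $2^{i(v)}$ (the fat jobs force the tuples to be processed one after another, cf. the barrier argument in Property~\ref{parallelmakespan}). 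The crucial accounting step: a stage of length $2^i$ running skinny jobs of length $2^i$ can, by Remark~\ref{rem:lts-skinny}, only hold skinny jobs belonging to chains of the single type $C(f(i), i)$, i.e. to nodes $v$ with $i(v) = i$ and $\ell(m_{i(v)}) = \ell(m_i)$. So I would build partition blocks stage by stage in chronological order, placing in block $V_t$ (with machine $m_i$, where $2^i$ is that stage's length) all the nodes whose chain's $j$-th tuple runs in that stage, appropriately refined so each node lands in a single block (put $v$ into the block corresponding to the stage executing its first tuple, say — the precedence structure of the chain then forces all other nodes reachable from $v$ to appear in later stages, giving condition (2)). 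The cost of this partition is at most the total length of all stages, which is at most $2\tau$; but each chain $C_v$ forces $2^{\rho - i(v)} \cdot \ell(m_{i(v)})$ stages of length $2^{i(v)}$, contributing $2^\rho \ell(m_{i(v)})$ to the time, while in the LTS cost it contributes only $\ell(m_{i(v)})$ once per block — so the LTS cost is at most $2\tau / 2^\rho = \tau / 2^{\rho-1}$. This is polynomial time since the clean-up of Lemma~\ref{cleanschedule} and the stage decomposition are both efficient.

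\textbf{Main obstacle.} The delicate point is Part 2: making precise that in a clean schedule each ``stage'' really can be charged cleanly, i.e. that chains of different types cannot share a stage and cannot be interleaved to beat the serial bound — this is exactly where Property~\ref{parallelmakespan}'s barrier argument and the injectivity of $i \mapsto (f(i), i)$ via Remark~\ref{rem:lts-skinny} must be combined carefully, and where one must argue that assigning each LTS node to the stage of its \emph{first} tuple yields a valid topologically-consistent partition (condition (2)) without inflating the cost. I would expect the bookkeeping that translates ``number of stages devoted to chain $C_v$'' into ``number of distinct partition blocks with machine $m_{i(v)}$'' — and hence the factor $2^{\rho-1}$ — to require the most care.
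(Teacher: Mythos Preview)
Your Part 1 is correct and essentially identical to the paper's argument: process the parts $V_1,\ldots,V_k$ in order, invoke conflict-freeness (Lemma~\ref{lem:conflict-prop}) to see that chains within a part are independent, and use Property~\ref{skinnyjobpar} to run them in parallel in time $2^\rho\ell(m_i)$.

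Your Part 2, however, has a genuine gap. The first-tuple assignment does give a valid LTS partition (conditions (1) and (2) hold, as you argue), but the cost bound fails. Your sentence ``the cost of this partition is at most the total length of all stages'' is simply wrong: the cost is $\sum_{s}\ell(m_{i(s)})$ over nonempty stages, not $\sum_s 2^{i(s)}$, and these are unrelated in general. The next sentence attempts a per-chain charging, but conflates the time a chain \emph{forces} with the time \emph{charged to its block}: stages can be shared among many chains, so you cannot amortize $2^{f(i)-i}$ stages against each first-tuple block. Here is a concrete failure. Take $\rho=3$, one machine $m_1$ with $\ell(m_1)=1$, five independent LTS nodes, so each chain is $C(3,1)$ with four tuples of skinny length $2$. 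Consider the (already clean) schedule that staggers the chains by one stage: chain $j$'s first tuple runs in stage $j$ ($j=1,\ldots,5$), and the schedule has $8$ stages, makespan $\tau=16$. Your assignment puts node $j$ in block $j$, so the LTS cost is $5$. But the lemma demands cost at most $\tau/2^{\rho-1}=16/4=4$, and $5>4$.

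What the paper does differently, and what actually makes the bound work, is an extra \emph{re-scheduling} step. After cleaning to $S'$, the paper does not read off the partition from $S'$ directly. Instead it counts, for each $i$, the number $j(i)$ of type-$i$ stages in $S'$, bundles them into $\lfloor j(i)\cdot 2^i/2^{f(i)}\rfloor$ groups of size $2^{f(i)-i}$ each, orders these groups by the time of the last stage in each (the $t_{i,j}$'s), and builds a \emph{new} schedule $S''$ in which each such group becomes a single epoch running complete chains of type $C(f(i),i)$. Each epoch has length exactly $2^{f(i)}=2^\rho\ell(m_i)$ and becomes one LTS block of cost $\ell(m_i)$; hence LTS cost $=|S''|/2^\rho\le |S'|/2^\rho\le 2\tau/2^\rho$. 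In the counterexample above this yields $\lfloor 8/4\rfloor=2$ blocks, cost $2\le 4$. The nontrivial part (which you correctly flagged as the main obstacle but did not resolve) is proving that every chain actually finishes in $S''$; the paper does this by an inductive argument on ancestors, showing each chain $C$ completes by epoch $k_C$ where $t_{i_C,k_C}\in[t_s(C),t_f(C)-1]$.
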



\paragraph{\textbf{ Proof of Part 1 of Lemma~\ref{lem:ResourceSchedulingeqLTS} (LTS solution $\rightarrow$  scheduling solution).}} 
\begin{proof}
Let $V_1,V_2,.., V_k$ be the partition of jobs in the LTS solution. From the LTS problem definition, all the jobs in each $V_i$ are scheduled to a single machine $m_i$. The cost of the LTS solution is thus $r = \ell(m_1) + \ell(m_2) +\cdots +\ell(m_{k})$. Since, the input DAG $D$ is conflict-free, in every $V_i$, for any two jobs $u, v\in V_i$, there is no precedence relation between $u$ and $v$. Consider all the chains in the resource scheduling instance $\textsc{rs}_{\Pi}$ that correspond to the jobs in $V_i$. Since all these jobs in $V_i$ must schedule in the same machine $m_i$, each chain is of type $C(\rho + \log \ell(m_i), i)$. From \hao{changed} Property~\ref{skinnyjobpar}, a set of independent chains (no precedence relation between two jobs from different chains) with the same length skinny jobs can be scheduled in parallel and can be finished in time that is the length of the longest chain. Hence, all the chains corresponding to jobs in $V_i$ can be finished in a duration of length $2^\rho \cdot \ell(m_i)$.


We first run the jobs only from the chains corresponding to $V_1$, after the chains corresponding to $V_1$ are finished, run the jobs only from the chains corresponding to $V_2$, after the chains corresponding to $V_2$ are finished, run the jobs only from the chains corresponding to $V_3$, and so on.

The makespan of this schedule is $2^\rho \cdot \ell(m_1) + 2^\rho \cdot \ell(m_2) + \cdots + 2^\rho \cdot \ell(m_{k})= 2^{\rho} \cdot r$.
\end{proof}

\paragraph{\textbf{ Proof of part 2 of  Lemma~\ref{lem:ResourceSchedulingeqLTS} (scheduling solution $\rightarrow$  LTS solution).}}
\begin{proof}
Let $\mathcal{A}$ be an algorithm for the resource scheduling problem that gives a schedule $S$ with makespan $\tau$ on the reduced resource scheduling instance $\textsc{rs}_{\Pi}$. Note that instance $\textsc{rs}_{\Pi}$ can be thought as a DAG where each node $u$ is a chain $C_u$, and if there is an edge from node $u$ to $v$, then chain $C_u \prec C_v$. From \hao{added lemma number} Lemma~\ref{cleanschedule}, we can find an algorithm $\mathcal{A}'$ that gives a schedule $S^{'}$ with makespan at most $2 \cdot \tau$ with the following two properties:

(1) algorithm $\mathcal{A}'$ does not start any job while another job has already started but not finished, and 

(2) at any point in time, $\mathcal{A}'$ runs only one type of skinny jobs (i.e., all the skinny jobs that run at the same time have the same length). 
\vspace{-3mm}
~\paragraph{\textbf{ Creating a new schedule $S^{''}$ where jobs of a chain run without any gap.} }
Let $\mathcal{S}^{'}$ be the schedule by algorithm $\mathcal{A}^{'}$. From Remark~\ref{rem:lts-skinny}, in the DAG of resource scheduling instance, each skinny job of length $2^i$ is part of a chain $C(f(i), i)$, where the length of the chain is $2^{f(i)}$.
Let $j(i)$ denote the number of intervals that run a job of length $2^i$. \hao{what term do we use intervals, epoch, period saying time is potentially confusing}
For $1 \leq j \leq  j(i) 2^i / 2^{f(i)}$, let $t_{i,j}$ be the timestep in schedule $\mathcal{S}^{'}$ when skinny jobs of length $2^i$ are scheduled for the $j\cdot (2^{f(i)}/2^i)$-th time (i.e., the sum of lengths of skinny-jobs each of length $2^i$ until time $t_{i,j} +2^i$ is $j\cdot 2^{f(i)}$). \myworries{ whats in brackets should be sum of lengths of skinny-jobs each of length $2^i$ until time $t_{i,j} +2^i $ is $j\cdot 2^{f(i)}$  ? $t_{i,j}$ is start of interval not end?   } Note that $t_{i,1} < t_{i,2} < t_{i,3}< \cdots$ for each skinny-jobs of length $2^i$. 
Besides, given two skinny jobs of lengths $2^{i}$ and $2^{i^{'}}$, \hao{You don't need to say given two jobs it's given any two integers $ 1\leq i,i' \leq \rho$} and two integers $j$ and $j^{'}$, there is an ordering of $t_{i,j}$ and $t_{i^{'}, j^{'}}$ (i.e., either $t_{i,j} < t_{i^{'}, j^{'}}$ or $t_{i^{'}, j^{'}} < t_{i,j}$) in schedule $\mathcal{S}^{'}$. 
Since there is a total ordering $t_{i_1,j_1} < t_{i_2,j_2} <\ldots $ of all the $t_{i,j}$'s, create a sequence $X$ of integers $i$ in the order of $t_{i,j}$'s (e.g., if $t_{5,1} < t_{3,1} < t_{5,2} < t_{5,3} < t_{4,1} < t_{3,2} < \cdots$, then the sequence $X$ of integers is $(5,3,5,5,4,3,\ldots)$, where $X[1] = 5, X[2] = 3, X[3] =5, X[4] = 5$ etc.). 


Given $X$, we create a new schedule $\mathcal{S}^{''}$ from $X$ as follows: Schedule only ready-to-be-scheduled chains of type $C(f(X[1]), X[1])$ in parallel (call this time interval as 1st epoch), followed by only ready-to-be-scheduled chains of type $C(f(X[2]), X[2])$ in parallel (call this time interval as 2nd epoch), followed by only ready-to-be-scheduled chains of type $C(f(X[3]), X[3])$ in parallel (call this time interval as 3rd epoch), and so on.  
From the construction, schedule $S^{''}$ obeys the precedence relation.

\vspace{-3mm}
\paragraph{\textbf{ Each chain $C$ is run in schedule $S^{''}$.}}
For each chain $C$ in the DAG of the reduced resource scheduling instance, we show that $C$ is scheduled in $S^{''}$. 
Let $C$ be any chain.
Let $C(f(i_C)),i_C)$ be the type of $C$.
Let in schedule $S^{'}$, chain $C$ starts at time $t_s(C)$ and finishes at time $t_f(C)$ (since, $S^{'}$ is a feasible schedule, it must schedule $C$). 
Since chain $C(f(i_C)),i_C)$ starts and finishes in $[t_s(C), t_f(C)], $, there are at least $ 2^{f(i)} /2^i $ intervals in $[t_s(C), t_f(C)], $ where a job of length $2^i$ is scheduled. 
Hence there must be a $t_{i_C,j_C} \in [t_s(C), t_f(C) -1]$.
Find the index $k_C$ in the integer-sequence $X$ that corresponds to $t_{i_C,j_C}$. Hence, $X[k_C] = i_C$.

We claim that for any chain $C$ all predecessors of $C$ are completed by the $k_C-1$-th epoch.
We prove by this induction on the number of ancestors of $C$.
If $C$ has no predecessor, then the statement is clear.
Let $C'$ be any predecessor of $C$.
Then by induction all predecessors of $C'$ are complete after the $ k_{C'}-1 $-th epoch. So $C'$ will be completed by the $ k_{C'} $-th epoch. 
Since $C'$ must finish before $C$ starts in schedule $S'$,
$ t_{i_{C'},j_{C'} } \leq t_f(C') -1 < t_s(C) \leq  t_{i_C,j_C} $. 
Thus $ k_{C'}  \leq k_C -1$. 
Since this holds for all predecessors, this means that all predecessors of $C$ are completed by the $k_C-1$-th epoch, thus completing the induction.

This means that each $C$ will be completed by the $k_C$-th epoch. Since $k_C$ exists for all $C$, this means the schedule finishes.

\paragraph{\textbf{ The length of $S^{''}$ is not increased.}}
The number of $ k $ such that $X[k]=i$ is $ \lfloor  j(i) 2^i / 2^{f(i)} \rfloor $. 
Each step scheduling $C(f(X[k]), X[k])$ with $X[k]=i$ takes time $ 2^{f(i)} $.
Thus, the length of the new schedule is $ \sum_{i=1}^\rho \sum_{k: \ \ X[k]=i } 2^{f(i)} \leq  \sum_{i=1}^\rho  j(i) 2^i / 2^{f(i)}  2^{f(i)} = \sum_{i=1}^\rho  j(i) 2^i  $.

For each $i=1,2,.., \rho$, the original schedule has $j(i)$ intervals of length $2^i$ when it processes jobs of size $2^i$.
The length of the schedule $\mathcal{S}^{'}$  is $\sum_{i=1}^\rho  j(i) 2^i $.
Thus, the length of $S^{''}$ is at most that of $\mathcal{S}^{'}$.
\end{proof}

We now prove the main theorem of this section.

\paragraph{\textbf{ Proof of Theorem~\ref{thm:lts-new}.}}
\begin{proof}
Assume that there is a $ \frac{1}{8}  (  \log t_{\max} )^\alpha   $-approximation algorithm \textsc{alg} for resource scheduling problem. 
We show that  \autoref{lem:ResourceSchedulingeqLTS}  gives a reduction from  finding a $ \frac{1}{4} \rho^\alpha $-approximation for the LTS problem to finding a $ \frac{1}{8}  (  \log t_{\max} )^\alpha  $-approximation resource scheduling as follows.

Given a LTS instance $ \Pi = (D, M, \delta, \ell) $ with optimal cost $r$, the first part of \autoref{lem:ResourceSchedulingeqLTS} implies that there exists a solution to resource scheduling instance $\textsc{rs}_{\Pi}$ achieving makespan $r \cdot 2^\rho$. Hence, the optimal makespan on $\textsc{rs}_{\Pi}$ is at most $r\cdot 2^\rho $. 
Recall that the maximum job size $t_{\max}$ of $\textsc{rs}_{\Pi}$ is at most $2^\rho$.
If there exists a polynomial-time $  \frac{1}{8}  (  \log t_{\max} )^\alpha $-approximation algorithm \textsc{alg} for the resource scheduling problem, then \textsc{alg} must find a solution to $\textsc{rs}_{\Pi}$ of makespan at most $  \frac{1}{8}  \rho^\alpha \cdot r \cdot 2^{\rho}$ since the optimal makespan on $\textsc{rs}_{\Pi}$ is at most $r \cdot 2^\rho$. 

The second part of \autoref{lem:ResourceSchedulingeqLTS} indicates that, given a solution to $\textsc{rs}_{\Pi}$ with makespan $  \frac{1}{8}  \rho^\alpha \cdot r \cdot 2^{\rho}$, we can obtain a solution to LTS instance $\Pi$ in polynomial-time with cost $( \frac{1}{8}  \rho^\alpha \cdot r \cdot 2^{\rho})/(2^{\rho-1}) = \frac{1}{4}  \rho^\alpha \cdot r$. 
Since, we started this argument with the assumption that the optimal cost on LTS instance $\Pi$ is $r$, this implies that we can find a $\frac{1}{4}  \rho^\alpha$-approximate solution to $ \Pi $ in polynomial time using a $ \frac{1}{8}  (  \log t_{\max} )^\alpha $-approximation algorithm \textsc{alg} of resource scheduling problem (by first, reducing the SCS instance $\Pi$ to a resource scheduling instance $\textsc{rs}_{\Pi}$, then, finding a $\frac{1}{8}  (  \log t_{\max} )^\alpha $-approximate solution to $\textsc{rs}_{\Pi}$, and then converting the solution of $\textsc{rs}_{\Pi}$ to a solution to $\Pi$---all three steps are done in polynomial time).  

From Theorem~\ref{LTSmodified}, we know that there is no polynomial-time algorithm with approximation ratio better than $\rho^\alpha$ unless P=NP. 
However, we just argued that if there is a polynomial-time $\frac{1}{8}  (  \log t_{\max} )^\alpha$-approximation algorithm for the resource scheduling problem, then there is a polynomial-time $\frac{1}{4}  (  \log t_{\max} )^\alpha < \frac{1}{4} \rho^\alpha$-approximation algorithm for bounded LTS problem, which contradicts Theorem~\ref{LTSmodified}. 
The theorem is thus proved.
\end{proof}

This implies the following corollary.

\begin{corollary}
   There is no $O(1)$-factor approximation for the precedence-constrained resource scheduling problem unless P = NP. 
\end{corollary}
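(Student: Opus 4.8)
The plan is to derive the corollary directly from Theorem~\ref{thm:lts-new} by a one-line contradiction argument. Suppose, for contradiction, that there were a polynomial-time $c$-factor approximation algorithm for the precedence-constrained resource scheduling problem for some constant $c \ge 1$, and fix the constant $\alpha > 0$ supplied by Theorem~\ref{thm:lts-new}. Since $(\log t_{\max})^\alpha \to \infty$ as $t_{\max} \to \infty$, we have $\lim_{t_{\max}\to\infty} c / (\log t_{\max})^\alpha = 0$, so the constant function $c$ is $o((\log t_{\max})^\alpha)$. Hence a $c$-factor approximation algorithm is in particular an $o((\log t_{\max})^\alpha)$-factor approximation algorithm, contradicting Theorem~\ref{thm:lts-new} unless P = NP. This gives the claimed statement.

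The only point deserving a word of care is that the hard instances produced by the reduction of Section~\ref{sec:lts} have $t_{\max} = 2^\rho$ with $\rho$ a fixed constant, so on any single family the maximum job length is bounded; what makes the $o(\cdot)$ statement meaningful is that $\rho$ (hence $t_{\max}$) may be taken arbitrarily large across families. Concretely, given the hypothetical $c$-approximation one would choose $\rho$ large enough that $\frac{1}{8}\rho^\alpha > c$ and apply the reduction of Lemma~\ref{lem:ResourceSchedulingeqLTS} with that $\rho$: the resulting algorithm would beat the $\frac{1}{8}\rho^\alpha$ bound on resource scheduling instances of maximum job length $2^\rho$, which by the argument in the proof of Theorem~\ref{thm:lts-new} yields a $\frac{1}{4}\rho^\alpha$-approximation for conflict-free bounded LTS, contradicting Theorem~\ref{LTSmodified} (equivalently Theorem~\ref{LTSoriginal}) unless P = NP. There is no genuine obstacle here — all the content is already in Theorem~\ref{thm:lts-new}, and the corollary is merely the specialization of its inapproximability factor to constants.
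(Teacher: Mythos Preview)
Your proposal is correct and takes the same approach as the paper, which simply states ``This implies the following corollary'' with no explicit proof. Your second paragraph, which carefully addresses the subtlety that each LTS family has a fixed $\rho$ (hence bounded $t_{\max}$) and explains how to pick $\rho$ large enough to beat any given constant $c$, is exactly the right way to make the implicit argument rigorous.
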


\subsection{Lower Bound for Resource Scheduling Based on the Number of Jobs}
In this section, we prove the following hardness of approximation result for the precedence-constrained resource scheduling problem, based on the number of jobs in the instance.
\ResourceImplyLTSJobs
To do so we look at the details of how \cite{LTSnotAPX} shows their hardness of approximation result for LTS.
\begin{theorem}\label{SCSnotapxx} \cite{binSCSapx}  
    There exists a constant $\bar{c} >1 $ such that if there exists a polynomial-time $\bar{c}$-approximation algorithm for binary SCS, then  $P=NP$.
\end{theorem}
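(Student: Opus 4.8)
Since this statement is quoted from~\cite{binSCSapx}, I will only sketch the line of proof one would follow. The overall plan is to give a gap-preserving reduction --- an L-reduction, which preserves APX-hardness --- to binary SCS from a problem already known to be hard to approximate within a constant via the PCP theorem. The plain NP-hardness of binary SCS~\cite{SCSBinNPHard} will not suffice by itself, since SCS has no obvious product or self-improvement operation with which to amplify an NP-hardness gap into a constant one, so the constant gap has to be \emph{imported} from the source.

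First I would fix the source: \textsc{Min Vertex Cover} on graphs of bounded degree (equivalently, \textsc{Max-3SAT} with a bounded number of occurrences of each variable), for which the PCP theorem together with expander amplification gives a fixed $\epsilon_0>0$ and a family of instances on which it is NP-hard to distinguish optimum $\le B$ from optimum $\ge(1+\epsilon_0)B$. From such an instance $I$ of size $n$ I would build binary strings made of a common \emph{scaffold} --- long, identical runs $0^L$ and $1^L$ alternating, with $L$ a fixed polynomial in $n$ --- together with short \emph{signal} bits inserted between the runs. The role of the scaffold is to force any supersequence to spend a fixed baseline of $\Theta(Ln)$ symbols in a rigid alternating-block layout: because the long runs dominate everything else in length, a near-optimal supersequence must align the scaffolds of the different strings block-by-block, and the only freedom it has left is how to interleave the short signal bits within each block --- and that freedom is designed to be exactly a choice of vertex cover (resp.\ satisfying assignment) of $I$, with the length paid for the signals proportional to the cover size (resp.\ the number of unsatisfied clauses). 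An alternative route is to start from the APX-hardness of SCS over a \emph{fixed} alphabet $\Sigma_0$ and apply an alphabet reduction, encoding each $\sigma_i\in\Sigma_0$ by a long binary codeword $A\,w_i\,A$ with $A=0^N1^N$ a long common frame and $w_i$ a short, mutually incomparable identifier; with $N$ a large polynomial the frames dominate, so the optimal binary-SCS length becomes a fixed baseline plus a $(1+o(1))$-distortion of the optimal $\Sigma_0$-SCS length and the source gap survives.

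In both routes the completeness direction is the easy one: a good solution of the source instance (a small vertex cover, a good assignment, or a $\Sigma_0$-supersequence) yields a binary supersequence of exactly the predicted length by following the obvious layout. The hard part --- and where essentially all of the work will be --- is the soundness, i.e.\ a \emph{rigidity lemma} asserting that \emph{every} binary supersequence whose length is within the claimed factor of the optimum must respect the intended block structure, so that one can decode it in polynomial time into a feasible solution of $I$ whose value is at most a constant times (its length minus the baseline). The subtlety is that the binary alphabet, unlike a large one, lets a single $0$ or $1$ be shared by arbitrarily many strings at once, so a supersequence might try to save length by overlapping pieces of unrelated gadgets; the counting argument must rule all of this out, and --- crucially --- must lose only lower-order terms ($o(1)$ in the approximation factor), so that the reduction stays a genuine L-reduction rather than merely a polynomial-time reduction. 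Combining this rigidity lemma with the PCP gap of the source then yields the desired constant $\bar c>1$.
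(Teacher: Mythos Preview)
The paper does not prove this theorem at all; it is simply quoted as a known result from \cite{binSCSapx} and used as a black box. So there is no ``paper's own proof'' to compare against. You correctly recognized this at the outset and chose to sketch a plausible line of argument rather than claim a full proof, which is the right call.

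As for the sketch itself: the high-level strategy --- import a constant gap from a PCP-based APX-hard problem via a gap-preserving (L-)reduction, using long synchronizing blocks to force a rigid alignment so that the only residual freedom encodes the source instance --- is indeed the standard shape of such arguments and matches the spirit of the cited literature. Your emphasis on the rigidity/soundness lemma as the crux is accurate. One caveat: the specific reduction in \cite{binSCSapx} may differ in its choice of source problem and gadget details from what you outline, so your sketch should be read as ``a way one could prove this'' rather than ``the way it was proved.'' But since the paper under review treats the result as a citation, that distinction is immaterial here.
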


Bhatia et. al. \cite{LTSnotAPX} define a \emph{restricted SCS} instance to be an SCS instance where there are no consecutive runs of the same letter in any sequence.
Let $U$ be an instance of the SCS problem over the binary alphabet. 
One can construct an instance of restricted SCS from $U$ with double the alphabet size by replacing each letter $a$ by the sequence $a , a' $, where each $a'$ is a letter not used in $U$ and $a' \neq b'$ for different letters $a$ and $b$. 
Their construction shows that for some constant $c>1$, restricted SCS   over an alphabet of size 4 cannot be approximated within factor $c$.

\begin{theorem}\label{SCSnotapxx2} \cite{LTSnotAPX}
    There exists a constant $c>1$ such that if there exists a polynomial-time $c$-approximation algorithm for restricted SCS over an alphabet of size 4, then  $ P=NP $.
\end{theorem}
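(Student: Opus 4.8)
The plan is to realize the letter-doubling construction sketched above as an approximation-preserving reduction from binary SCS (Theorem~\ref{SCSnotapxx}) to restricted SCS over an alphabet of size $4$. Fix the binary alphabet $\{a,b\}$, adjoin two fresh symbols $a',b'$, and let $\phi$ be the string homomorphism with $\phi(a)=aa'$ and $\phi(b)=bb'$; given a binary SCS instance $U=\{s_1,\dots,s_k\}$, set $U'=\{\phi(s_1),\dots,\phi(s_k)\}$ over $\{a,a',b,b'\}$. First I would verify that $U'$ is a legal restricted SCS instance: writing $\phi(s_i)=x_1 x_1' x_2 x_2' \cdots x_\ell x_\ell'$, the consecutive pairs are the $(x_j,x_j')$ and the $(x_j',x_{j+1})$, each pairing an unprimed symbol with a primed one, so no letter is immediately repeated, irrespective of whether $s_i$ had runs — this being exactly the purpose of doubling.

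Next I would prove $L(U')=2\,L(U)$, writing $L(\cdot)$ for the shortest-common-supersequence length. The inequality $L(U')\le 2L(U)$ is immediate: if $w$ is a common supersequence of $U$, then an embedding of $s_i$ into $w$ lifts blockwise to an embedding of $\phi(s_i)$ into $\phi(w)$, so $\phi(w)$ is a common supersequence of $U'$ of length $2|w|$. For the reverse inequality — the step I expect to carry the weight — I would take any common supersequence $w'$ of $U'$, let $u$ be the subsequence of its unprimed symbols and $u'$ the subsequence of its primed symbols (so $|u|+|u'|=|w'|$), and read $u'$ as a word over $\{a,b\}$ by erasing primes. The crucial point is that an embedding of $\phi(s_i)$ into $w'$ is forced to match unprimed symbols to unprimed positions and primed symbols to primed positions, since equal letters are matched; hence the unprimed part of $\phi(s_i)$ — which spells $s_i$ — embeds into $u$, and symmetrically $s_i$ embeds into the de-primed $u'$. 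So $u$ and the de-primed $u'$ are both common supersequences of $U$, and the shorter of the two has length at most $|w'|/2$ and is computable from $w'$ in linear time; taking $w'$ optimal yields $L(U)\le L(U')/2$.

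Finally I would combine these. Suppose there were a polynomial-time $c$-approximation algorithm $A$ for restricted SCS over an alphabet of size $4$, with $c:=\bar c$ the constant of Theorem~\ref{SCSnotapxx}. Given a binary instance $U$, construct $U'$ in polynomial time, run $A$ to get a common supersequence $w'$ of $U'$ with $|w'|\le c\,L(U')=2c\,L(U)$, and extract as above a common supersequence $w$ of $U$ with $|w|\le |w'|/2\le c\,L(U)$. That is a polynomial-time $c$-approximation for binary SCS, which by Theorem~\ref{SCSnotapxx} gives $P=NP$; and $c=\bar c>1$ witnesses the constant claimed in the statement. The only genuinely nontrivial ingredient is the parity-respecting property of subsequence embeddings used in the lower bound on $L(U')$; everything else is routine manipulation of homomorphisms and subsequences.
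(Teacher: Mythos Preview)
Your proof is correct and follows exactly the letter-doubling construction that the paper sketches before stating the theorem (which it cites from \cite{LTSnotAPX} rather than proving in full). You have correctly filled in the key step the paper leaves implicit, namely that any embedding of $\phi(s_i)$ into a supersequence $w'$ must send unprimed letters to unprimed positions and primed letters to primed positions, so that both the unprimed and the de-primed primed projections of $w'$ are common supersequences of $U$; this yields the tight relation $L(U')=2L(U)$ and hence the approximation-preserving reduction.
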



To show a lower bound for approximating resource scheduling, we need to analyze the details of how  \cite{LTSnotAPX} shows that LTS is hard to approximate. The following definitions are from \cite{LTSnotAPX}.

\begin{definition}\cite{LTSnotAPX}
     An LDAG is an acyclic digraph for which each vertex is
labeled by a single letter from a given alphabet.
\end{definition} 
\begin{definition}\cite{LTSnotAPX}
    A minimal supersequence $z$ of an LDAG is defined as
follows.
If the LDAG is empty, then $z= \emptyset$.
Let $a$ be the first letter of $z$, i.e. $z = a  \cdot z_1$, then some indegree 0
node in the LDAG is labeled with $a$, and $z_1$ is a minimal supersequence of
the LDAG obtained by deleting all indegree 0 nodes that have label $a$.
\end{definition} 
\begin{definition}\cite{LTSnotAPX}
    A supersequence of an LDAG is any sequence that
contains a minimal supersequence of the LDAG as a subsequence.
\end{definition} 
For an  LDAG $Y$, let $|Y|$ refer to the number of vertices in $Y$. 
For a set of sequences $X$, let $|X|$ refer to the total number of letters in all sequences in $X$.

We will refer to the LDAG instance $Y$ as the problem of finding the minimum length supersequence for $Y$.

\paragraph{\textbf{Defining LTS problem on LDAGs.}} We define the LTS problem on an LDAG as follows. 
There is a machine $m(a)$ for each letter $a$ of the alphabet, each machine has loading time 1.
Each node $v$ of the LDAG can be served only on the machine $m( q_v )$ where $v$ is labeled by the letter $q_v$.  
For a conflict-free LDAG, finding a minimum length supersequence is equivalent to solving the LTS problem on the LDAG, that is, for each partition $V_1,V_2,\ldots, V_{r}$ to the LTS instance on $Y$, one can construct a corresponding supersequence $z_1\cdot z_2 \cdot \cdots \cdot z_r$ for $Y$ and vice-versa.
For this LTS instance the cost of the solution is the length of the corresponding supersequence.

In \autoref{lem:ResourceSchedulingeqLTS}, we assumed a constant number of machines as a premise.
This was to ensure that the instance $\textsc{rs}_{\Pi}$ was polynomial sized in $\Pi$.
The proof of \autoref{lem:ResourceSchedulingeqLTS} shows the corresponding result for LDAG instances in the case that the number of machines $\rho$ is not a constant.
\begin{lemma}\label{lem:ResourceSchedulingeqLTS2}
Let $Y$ be  a conflict-free LDAG.
Consider the LTS problem on the LDAG $Y$.
Let $\rho$ denote the alphabet size of $Y$.
Let $\textsc{rs}_{Y}$ be the corresponding resource scheduling instance. Then the following statements hold.

\textbf{{Part 1:}} Given  a solution  to the LTS  instance on $Y$ of cost $r$, we can find in time polynomial in $|Y| $ and $2^\rho$, a solution to the reduced resource scheduling instance $\textsc{rs}_{Y}$ that achieves makespan $2^\rho \cdot r$.

\textbf{Part 2: } Given  a solution to $\textsc{rs}_{Y}$ achieving  makespan $ \tau$, we can find in time polynomial in $ | Y | $ and $2^\rho$,  a solution to the LTS  instance  on $Y$ of cost (length) $ \tau /2^{\rho -1}$.
\end{lemma}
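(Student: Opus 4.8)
The plan is to mirror the proof of Lemma~\ref{lem:ResourceSchedulingeqLTS} almost verbatim, carefully replacing every place where the constant-$\rho$ assumption was used for \emph{complexity-of-the-reduction} purposes by a statement about running time polynomial in $|Y|$ and $2^\rho$. The key observation is that the constant-$\rho$ hypothesis in Lemma~\ref{lem:ResourceSchedulingeqLTS} played exactly one role: it guaranteed that the reduced instance $\textsc{rs}_\Pi$ had polynomial size, because each chain $C_v = C(f(i(v)), i(v))$ contains at most $2^{f(i(v))} = 2^\rho \cdot \ell(m_{i(v)})$ nodes, and the bounded-loading-time normalization gave $\ell(m_i) \le n^\rho$. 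None of the correctness arguments (Property~\ref{skinnyjobpar}, Lemma~\ref{cleanschedule}, the construction of $S''$ from the integer sequence $X$, the induction showing every chain is scheduled by its epoch, and the length bound $\sum_i j(i)2^i$) used $\rho$ being constant. So the strategy is: keep all correctness arguments untouched, and re-audit only the size/time bookkeeping.

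First I would restate the reduction: given the conflict-free LDAG $Y$ with alphabet of size $\rho$ and all machine loading times equal to $1$ (the LDAG formulation assigns loading time $1$ to each machine $m(a)$), we build $\textsc{rs}_Y$ exactly as in Section~3.3, i.e.\ for each node $v$ labeled $q_v$ (the $i(v)$-th letter in some fixed ordering of the alphabet) we attach a chain $C(f(i(v)), i(v))$ with $f(i(v)) = \rho + \log \ell(m_{i(v)}) = \rho$, since every loading time is $1$; then for each edge $(u,v)$ of $Y$ we add an edge from the sink of $C_u$ to the source of $C_v$. Each such chain has $2^\rho/2^{i(v)} \le 2^\rho$ tuples, so $\textsc{rs}_Y$ has at most $|Y| \cdot 2^{\rho+1}$ nodes, which is polynomial in $|Y|$ and $2^\rho$ — this is precisely the weaker size guarantee we are now claiming. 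Then for Part~1, I would invoke Property~\ref{skinnyjobpar}: given an LTS partition $V_1,\dots,V_k$ of cost $r$ (recall the cost equals the supersequence length, and here equals $k$ since each machine costs $1$), all chains corresponding to jobs in a fixed $V_j$ have identical skinny-job length and, by conflict-freeness and Lemma~\ref{lem:conflict-prop}, are mutually independent, so they finish in time $2^\rho$; running the groups $V_1, V_2, \dots, V_k$ sequentially gives makespan $2^\rho k = 2^\rho r$, and the whole construction plus schedule is produced in time polynomial in $|Y|$ and $2^\rho$.

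For Part~2, I would copy the argument verbatim: given a schedule of makespan $\tau$ on $\textsc{rs}_Y$, apply Lemma~\ref{cleanschedule} to get $S'$ of makespan $\le 2\tau$ in which at each instant only one skinny-job length is running and no job starts while another is in progress; extract from $S'$ the total order of the times $t_{i,j}$ (the moments when a full chain-length $2^{f(i)}=2^\rho$ worth of length-$2^i$ skinny jobs has been completed), form the integer sequence $X$, and define $S''$ by processing, epoch by epoch, all currently ready chains of the type dictated by $X[k]$. The induction on the number of ancestors shows every chain $C$ is completed by its epoch $k_C$, so $S''$ is a valid schedule; the length bound $\sum_{k} 2^{f(X[k])} \le \sum_{i=1}^\rho j(i) 2^i \le$ length of $S' \le 2\tau$ is unchanged. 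Since $S''$ processes chain types in a sequence of epochs, reading off which letter-machine is active in each epoch yields an LTS partition / supersequence of $Y$ whose cost is the number of epochs; normalizing as in the original proof this gives cost $\tau / 2^{\rho-1}$. The whole postprocessing runs in time polynomial in $|\textsc{rs}_Y|$, hence polynomial in $|Y|$ and $2^\rho$.

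The main obstacle — really the only subtle point — is making sure the running-time claim genuinely holds \emph{without} hidden dependence on the loading times or on $\rho$ being constant: the chains now have size up to $2^\rho$ (not $2^\rho n^\rho$, because in the LDAG formulation all loading times are $1$), so $|\textsc{rs}_Y| = O(|Y| \cdot 2^\rho)$, and every algorithmic step (deleting redundant edges, Lemma~\ref{cleanschedule}'s rescheduling, sorting the $t_{i,j}$, building $S''$) is polynomial in that quantity. I would also double-check the bookkeeping in the "normalizing the loading time" step: since here every $\ell(m) = 1$ already, the $f(i) = \rho + \log \ell(m_i)$ formula collapses to $f(i) = \rho$, which simplifies everything and is exactly why the size is $O(|Y| 2^\rho)$ rather than $O(|Y| 2^\rho \cdot (\text{loading time}))$. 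With that simplification noted, Parts~1 and~2 follow by literally the same chain of lemmas as Lemma~\ref{lem:ResourceSchedulingeqLTS}, so the proof is "by inspecting the proof of Lemma~\ref{lem:ResourceSchedulingeqLTS} and replacing 'polynomial in $n$' with 'polynomial in $|Y|$ and $2^\rho$' throughout."
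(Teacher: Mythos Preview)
Your proposal is correct and takes essentially the same approach as the paper: the paper does not give a separate proof of this lemma at all, but simply notes that the constant-$\rho$ hypothesis in Lemma~\ref{lem:ResourceSchedulingeqLTS} was used only to guarantee that $\textsc{rs}_\Pi$ has polynomial size, and that the same argument therefore yields the LDAG version with running time polynomial in $|Y|$ and $2^\rho$. Your added observation that all loading times equal $1$ in the LDAG formulation (so $f(i)=\rho$ and $|\textsc{rs}_Y|=O(|Y|\cdot 2^\rho)$) is a helpful simplification that the paper leaves implicit.
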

Applying the same reasoning as in the proof of \autoref{thm:lts-new}, let $Y$ be an arbitrary LDAG, let $\rho$ be the alphabet size of $Y$, and let $\textsc{rs}_{Y}$ be the corresponding resource scheduling problem.
Suppose there is a  $ \beta_Y $-approximation algorithm for $\textsc{rs}_{Y}$ in time polynomial in $|\textsc{rs}_{Y}|$. 
Let $r$ be the length of the minimal supersequence for $Y$.
The first part of \autoref{lem:ResourceSchedulingeqLTS2} implies that there exists a solution to the resource scheduling instance $\textsc{rs}_{\Pi}$ achieving makespan $r \cdot 2^\rho$. 
Hence, the optimal makespan on $\textsc{rs}_{Y}$ is at most $r\cdot 2^\rho $. 
If there exists a polynomial-time $  \beta_Y  $-approximation algorithm \textsc{alg} for the resource scheduling problem, then \textsc{alg} must find a solution to $\textsc{rs}_{Y}$ of makespan at most $ \beta_Y \cdot r \cdot 2^{\rho}$ since the optimal makespan on $\textsc{rs}_{Y}$ is at most $r \cdot 2^\rho$. 

The second part of \autoref{lem:ResourceSchedulingeqLTS} indicates that given a solution to $\textsc{rs}_{Y}$ with makespan $ \beta_Y  r \cdot 2^{\rho}$, we can obtain a solution to LTS instance $Y$ in time polynomial in $ |Y| $ and $2^\rho$ with length $ \beta_Y  r \cdot 2^{\rho} / (2^{\rho-1}) = 2 \beta_Y  r$. 
Since the  optimal solution to $Y$ has length $r$, this means there is a $ 2 \beta_Y  $-approximation algorithm in time polynomial in $ |Y| $ and $2^\rho$.
Thus we have shown the following.
\begin{lemma}\label{resapxtoldag}
Let $Y$ be an LDAG and let $\rho$ be the alphabet size of $Y$.
    If there is a polynomial-time $\beta_Y$-approximation for $\textsc{rs}_{Y}$, then  there is a $ 2 \beta_Y  $-approximation algorithm for the LDAG instance on $Y$ in time polynomial in $ |Y| $ and $2^\rho$.
\end{lemma}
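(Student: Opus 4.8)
The plan is to chain the two halves of Lemma~\ref{lem:ResourceSchedulingeqLTS2} together with the hypothesized $\beta_Y$-approximation for $\textsc{rs}_Y$, exactly as in the proof of Theorem~\ref{thm:lts-new}, but now keeping the dependence on $\rho$ explicit instead of treating $\rho$ as a constant. First I would record the size bound: in the LDAG formulation every machine has loading time $1$, so the chain attached to a vertex labeled by the $i$-th letter is $C(\rho,i)$, which is a concatenation of $2^{\rho}/2^{i}\le 2^{\rho}$ tuples; hence $|\textsc{rs}_Y| = O(|Y|\cdot 2^{\rho})$, and in particular an algorithm running in time polynomial in $|\textsc{rs}_Y|$ runs in time polynomial in $|Y|$ and $2^{\rho}$. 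If $Y$ is not already conflict-free, I would first apply the Stage-1 transformation of Section~\ref{sec:lts}, reading a ``bonded edge'' as an edge between two vertices carrying the same letter; by (the labeled analogue of) Lemma~\ref{lem:conflict-free-time} this yields a conflict-free LDAG with the same minimal-supersequence length in time polynomial in $|Y|$, and the minimal-supersequence length of a conflict-free LDAG equals the optimum LTS cost on it.

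Next, let $r$ be the minimal-supersequence length of $Y$, equivalently the optimum LTS cost. Part~1 of Lemma~\ref{lem:ResourceSchedulingeqLTS2} produces a schedule of $\textsc{rs}_Y$ with makespan $2^{\rho}r$, so the optimal makespan of $\textsc{rs}_Y$ is at most $2^{\rho}r$. Running the assumed $\beta_Y$-approximation on $\textsc{rs}_Y$ then returns a schedule of makespan at most $\beta_Y\cdot 2^{\rho}r$. Feeding this schedule into Part~2 of Lemma~\ref{lem:ResourceSchedulingeqLTS2} yields, in time polynomial in $|Y|$ and $2^{\rho}$, an LTS solution (supersequence) of cost at most $\beta_Y\cdot 2^{\rho}r/2^{\rho-1} = 2\beta_Y r$. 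Since $r$ is the optimum, this is a $2\beta_Y$-approximation, and the whole pipeline---build $\textsc{rs}_Y$, call the approximation algorithm, convert the schedule back---runs in time polynomial in $|Y|$ and $2^{\rho}$, which is what the lemma claims.

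The only thing requiring care is the running-time bookkeeping: the assumed approximation algorithm is polynomial in its own input size $|\textsc{rs}_Y|$, so I must confirm that $|\textsc{rs}_Y|$ and both conversions of Lemma~\ref{lem:ResourceSchedulingeqLTS2} are polynomial in $|Y|$ and $2^{\rho}$. This is exactly where the ``bounded loading time'' argument used in Section~\ref{sec:lts} is replaced by the trivial observation that all LDAG loading times equal $1$, so every chain has length exactly $2^{\rho}$. A secondary, and similarly routine, point is justifying the reduction of a general LDAG to a conflict-free one in the labeled setting, which reuses Lemma~\ref{lem:conflict-free-time} essentially verbatim. I do not expect either step to be a genuine obstacle; the mathematical content of the lemma is entirely contained in Lemma~\ref{lem:ResourceSchedulingeqLTS2}, which we are permitted to assume.
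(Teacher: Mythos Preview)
Your proposal is correct and follows essentially the same route as the paper: the paper's argument (given in the paragraph immediately preceding the lemma) chains Part~1 of Lemma~\ref{lem:ResourceSchedulingeqLTS2}, the hypothesized $\beta_Y$-approximation, and Part~2 of Lemma~\ref{lem:ResourceSchedulingeqLTS2} in exactly the way you describe, obtaining a supersequence of length $\beta_Y\cdot 2^{\rho}r / 2^{\rho-1}=2\beta_Y r$. Your explicit bound $|\textsc{rs}_Y|=O(|Y|\cdot 2^{\rho})$ and your remark about first passing to a conflict-free LDAG are details the paper leaves implicit (in the paper's application $Y=X^k$ is already conflict-free), but they do not change the argument.
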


The following definitions, initially defined for sequences,  are extended to LDAGs.

\begin{definition}\cite{LTSnotAPX}
        Let $  X_1, X_2,\ldots,      X_k$ be a collection of LDAGs. Let
$X = X_1 \cdot X_2 \cdot \cdots \cdot  X_k$ denote the LDAG that is obtained by connecting each
$X_i$ to $X_{i+1} $ by a set of directed edges that go from each vertex of
outdegree 0 in $X_i$  to each vertex of indegree 0 in $X_{i+1}$. 
\end{definition}

\begin{definition}\cite{LTSnotAPX}
     Let $\Sigma$ and $\Sigma'$ be two alphabets. Let $a \in \Sigma $ and $b \in \Sigma'$
be two letters. The product $a \times b$ is the composite letter $a \times b \in  \Sigma \times \Sigma'  $.
\end{definition}
\begin{definition} \cite{LTSnotAPX}
The product of an LDAG $X$ and a letter $b$ is the LDAG
 $X \times b$ obtained by taking the product of the label of each vertex of $X$
with $b$. The structure of the LDAG stays the same, only the labels change.
\end{definition}
 
\begin{definition} \cite{LTSnotAPX}
    The product of an LDAG X and a sequence $y  =  b_1 \ldots b_k$
is the LDAG  $X \times y = ( X \times b_1 ) \cdot ( X \times b_2 ) \cdot \dots \cdot  ( X \times b_k )  $.
\end{definition} 
\begin{definition} \cite{LTSnotAPX}
    The product of an LDAG X with a set $Y = \{ y_1, \ldots ,      y_n \} $
of sequences is denoted by  $X \times Y = \cup_{ i=1 }^n  X \times y_i$.
\end{definition}
\begin{definition}\label{prodmach} \cite{LTSnotAPX}
    Let the LDAG X be a union of disjoint sequences, where each sequence is also viewed as a directed path. Then we define the LDAG
$X^k=X ^{k-1} \times X$.
\end{definition}
One can show that for $X$, a restricted SCS instance, $X^k$ is conflict-free.
The number of vertices in the LDAG $X ^ k$ is $n^k$, where $n$ is the number of vertices in  $X$ when viewed as an LDAG.
The alphabet size of the labels of the LDAG $X^k$ is $m^k$, where $m$ is the alphabet size of the labels of the LDAG $X$.

For a set of sequences $X$, the following lemma of \cite{LTSnotAPX} shows a correspondence between supersequences of $X^k$ and supersequences of $X$.
\begin{lemma}\label{factDAG} \cite{LTSnotAPX}
    Let the LDAG $X$ be a disjoint union of sequences and z a
supersequence of $X ^ k$.
We can find $k$ minimal supersequences $z_1,z_2 , \ldots ,     z_k$ of
$X$ such that the product of the length of these sequences is at most the length of
$z$ i.e.  $ |z_1 |  \cdot  | z_2  | \cdot \dots  \cdot | z_k | \leq |z| $. 
Hence, we can find a supersequence of $X $ of size at most  $z ^{ 1 / k } $.
This can be done in time polynomial in  $ | X ^  k |$. 
\end{lemma}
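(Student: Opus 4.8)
The plan is to prove Lemma~\ref{factDAG} by induction on $k$, peeling off one copy of $X$ at each step, so that the hard direction mirrors the easy direction (in which a supersequence $w$ of $X$ yields a supersequence of $X^k$ of length $|w|^k$). For $k=1$ the statement is immediate: a supersequence $z$ of $X$ contains, by definition, a minimal supersequence $z_1$ of $X$ as a subsequence, so $|z_1|\le|z|$. For $k\ge 2$ I would use the series structure of $X^k=X^{k-1}\times X$ implied by Definition~\ref{prodmach} and the preceding products: writing the paths of $X$ as $\sigma_1,\dots,\sigma_n$ with $\sigma_i=a_{i,1}\cdots a_{i,\ell_i}$, we have $X^k=\bigcup_{i=1}^n Q_i$, where $Q_i=(X^{k-1}\times a_{i,1})\cdot(X^{k-1}\times a_{i,2})\cdot\cdots\cdot(X^{k-1}\times a_{i,\ell_i})$ is a series concatenation of $\ell_i$ ``tagged copies'' of $X^{k-1}$, and no vertex of the $(p{+}1)$-st copy can be covered before the $p$-th copy has been covered completely (each vertex of a finite DAG lies on a source-to-sink path, so covering all sinks of copy $p$ forces covering all of copy $p$, and every source of copy $p{+}1$ has in-edges from every sink of copy $p$).

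The heart of the step is the following extraction claim: from any supersequence $z$ of $X^k$ --- a word over the product alphabet $\Sigma^{k-1}\times\Sigma$ --- one can compute in polynomial time a supersequence $z_1$ of $X$ together with a partition $z=B_1B_2\cdots B_s$ into $s=|z_1|$ \emph{consecutive} blocks such that the projection of each $B_r$ onto its first $k-1$ coordinates is a supersequence of $X^{k-1}$. I would build these by driving the minimal-supersequence procedure for $X$ with a single left-to-right scan of $z$: maintain a pointer $p_i$ into each path $\sigma_i$; as the scan proceeds, the first-$(k-1)$-coordinate projection of the scanned prefix completes the tagged copies of $X^{k-1}$ inside each $Q_i$ in series, and whenever it completes the copy that some waiting chain needs next, we close the current block, append the corresponding front letter $a$ of $X$ to $z_1$, and advance the pointer(s) of the chain(s) it serves. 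Since $z$ covers every $Q_i$, every pointer eventually reaches the end of its path, so reading off the letters of $z_1$ in the order they were appended exhibits each $\sigma_i$ as a subsequence of $z_1$; hence $z_1$ is a supersequence of $X$, the number of blocks equals the number of steps of this run and so equals $|z_1|$, and every block covers a copy of $X^{k-1}$ by construction.

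Granting the extraction claim, the rest is bookkeeping. Because $B_1,\dots,B_s$ partition $z$, the shortest block $B_{r^*}$ satisfies $|B_{r^*}|\le|z|/s=|z|/|z_1|$, and its first-$(k-1)$-coordinate projection is a supersequence of $X^{k-1}$; applying the inductive hypothesis to it produces minimal supersequences $z_2,\dots,z_k$ of $X$ with $|z_2|\cdots|z_k|\le|B_{r^*}|\le|z|/|z_1|$, hence $|z_1||z_2|\cdots|z_k|\le|z|$, and replacing each $z_j$ by a minimal supersequence contained in it keeps this product bound. For the last sentence, the shortest of $z_1,\dots,z_k$ has length at most $(|z_1|\cdots|z_k|)^{1/k}\le|z|^{1/k}$, which is the promised supersequence of $X$. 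The running time is polynomial since each level of the recursion is one left-to-right pass over $z$ with bookkeeping polynomial in $|X^k|$, and the recursion depth is $k=\log_{|X|}|X^k|=O(\log|X^k|)$.

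The step I expect to be the main obstacle is the extraction claim, specifically getting the number of blocks to come out to $|z_1|$ for a $z_1$ that is a supersequence of \emph{all} the paths of $X$ --- rather than to the length $\max_i\ell_i$ of a single path (which is what one obtains by splitting $z$ along one $Q_i$, but which need not be a supersequence of $X$) --- while still keeping every block a genuine supersequence of $X^{k-1}$. The two things that can go wrong are: the scan being forced to advance two chains with different front letters inside one block (so $z_1$ would need two letters there), and the scan ``passing'' the region of $z$ that a deferred chain still needs, stranding it. I would address this by making the scan strictly work-conserving (close a block as soon as the current front letter's required copy is completed, so no chain is ever left behind) and by using that $X$ is a \emph{restricted} SCS instance in the sense of \cite{LTSnotAPX} --- no two consecutive equal letters in any $\sigma_i$ --- together with the conflict-freeness of $X^{k-1}$, to rule out the degenerate situations where several chains complete a copy at exactly the same position with distinct front letters. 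Making this reconciliation airtight is the technical core of the argument.
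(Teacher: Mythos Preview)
The paper does not give its own proof of this lemma; it is quoted verbatim from \cite{LTSnotAPX} and used as a black box. So there is no in-paper argument to compare against, and your inductive strategy (peel off one factor of $X$ at a time) is indeed the approach of \cite{LTSnotAPX}.

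That said, your extraction claim has a genuine gap beyond what you flag. In your scan you close a block ``whenever [the prefix] completes the copy that some waiting chain needs next,'' but the copy $X^{k-1}\times a_{i,p_i}$ that just completed was being accumulated over \emph{all} prefix positions since pointer $p_i$ last advanced, not merely since the previous block boundary. So the block you have just closed need not, on its own, have its first-$(k-1)$-coordinate projection cover $X^{k-1}$; only the union of several consecutive blocks does. Conversely, if you reset every chain's progress at each block boundary so that each block individually covers some $X^{k-1}\times a$, you lose the guarantee that the remaining suffix still covers $Q_j'$ for chains $j$ whose front letter differs from the one you just emitted --- those chains may have spent letters with their front last-coordinate inside the discarded block. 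Neither reading of your procedure delivers both ``$s=|z_1|$'' and ``each $B_r$ covers $X^{k-1}$'' simultaneously.

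Your proposed repairs do not close this gap. The lemma as stated makes no restricted-SCS assumption on $X$ (that hypothesis enters only later, in the specific instance the paper builds), so appealing to ``no two consecutive equal letters'' proves at best a special case. And conflict-freeness of $X^{k-1}$ governs label patterns along edges, which is orthogonal to the block-boundary accounting problem above. If you want to push this route through, you will need a different invariant --- one that shows the suffix after each block still covers the residual LDAG $\bigcup_i (X^{k-1}\times a_{i,p_i})\cdot(X^{k-1}\times a_{i,p_i+1})\cdots$ --- rather than relying on extra structural hypotheses on $X$.
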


Using \autoref{factDAG}, \cite{LTSnotAPX} obtains the following result.

\begin{lemma}\label{extendapx} \cite{LTSnotAPX}
 For any $k$, given a supersequence $z$ of $X^k$ we can compute a supersequence $z'$ of $X$ of size $|z'| = | z | ^{1 / k }$ in time polynomial in  $X ^ k$.  
  Hence, given a $g(|X|^k) $-time approximation algorithm that achieves an approximation
factor of $f(N)$ where $N$ is the input size whose instance is $X^k$, we can   construct an $f^{  1 / k }(N) $ approximation algorithm for the problem whose instance  is $X$ that runs in time   $  g(|X|^k) + \poly( |X|^ k  )$. 
\end{lemma}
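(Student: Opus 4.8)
The plan is to read off both claims from Lemma~\ref{factDAG}, doing only bookkeeping and one pigeonhole step on top of it. For the first claim, given a supersequence $z$ of $X^k$ I would run the procedure of Lemma~\ref{factDAG} to obtain, in time polynomial in $|X^k|$, minimal supersequences $z_1,\dots,z_k$ of $X$ with $|z_1|\cdot|z_2|\cdots|z_k|\le|z|$; if every factor exceeded $|z|^{1/k}$ the product would exceed $|z|$, so at least one $z_i$ has $|z_i|\le|z|^{1/k}$. Output the shortest among $z_1,\dots,z_k$ (a minimal, hence legitimate, supersequence of $X$). Comparing $k$ already-computed lengths is $\poly(|X^k|)$, so the whole step runs in time polynomial in $|X^k|$.

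For the reduction, given the assumed $g(|X|^k)$-time, $f(N)$-approximation algorithm $\mathcal{A}$ for instances of the form $X^k$ (so $N=|X^k|=|X|^k$), I would: (i) construct $X^k$ from $X$, a routine recursive construction costing $\poly(|X|^k)$; (ii) run $\mathcal{A}$ on $X^k$ to get a supersequence $z$ with $|z|\le f(N)\cdot\mathrm{OPT}(X^k)$, where $\mathrm{OPT}(\cdot)$ denotes the minimum supersequence length; and (iii) apply the first claim to $z$, obtaining a supersequence $z'$ of $X$ with $|z'|\le|z|^{1/k}$. Chaining the bounds,
\[
|z'|\;\le\;|z|^{1/k}\;\le\;\bigl(f(N)\,\mathrm{OPT}(X^k)\bigr)^{1/k}\;=\;f(N)^{1/k}\,\mathrm{OPT}(X^k)^{1/k},
\]
and once we establish $\mathrm{OPT}(X^k)=\mathrm{OPT}(X)^k$ this reads $|z'|\le f^{1/k}(N)\,\mathrm{OPT}(X)$, i.e.\ $z'$ is an $f^{1/k}(N)$-approximate supersequence for $X$. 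The running time is $g(|X|^k)$ for the call to $\mathcal{A}$ plus $\poly(|X|^k)$ for constructing $X^k$ and invoking Lemma~\ref{factDAG}, as claimed.

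The one step that is not pure bookkeeping — and which I expect to be the main obstacle — is the product identity $\mathrm{OPT}(X^k)=\mathrm{OPT}(X)^k$. The inequality $\mathrm{OPT}(X)^k\le\mathrm{OPT}(X^k)$ is immediate: apply Lemma~\ref{factDAG} to an optimal supersequence of $X^k$. For the reverse inequality I would induct on $k$, using (a) additivity of $\mathrm{OPT}$ over the series composition of LDAGs (denoted $\cdot$): when $A\neq\emptyset$ the indegree-$0$ vertices of $A\cdot B$ are exactly those of $A$, so by the defining recursion a minimal supersequence of $A\cdot B$ is a minimal supersequence of $A$ followed by one of $B$; and (b) $\mathrm{OPT}(X\times b)=\mathrm{OPT}(X)$, since $X\times b$ is $X$ with relabeled vertices. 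Writing $X^k=X^{k-1}\times X=\bigcup_i(X^{k-1}\times y_i)$ over the constituent paths $y_i$ of $X$ (Definition~\ref{prodmach}), and fixing optimal supersequences $w$ of $X^{k-1}$ and $v$ of $X$, the sequence $W$ obtained by concatenating, for each letter $v_j$ of $v$ in order, a copy of $w$ relabeled by $v_j$ has length $\mathrm{OPT}(X^{k-1})\cdot\mathrm{OPT}(X)\le\mathrm{OPT}(X)^k$ (by the induction hypothesis); and since $v$ is a supersequence of every path $y_i$, embedding $y_i$ into $v$ exposes inside $W$ a subsequence of the form $(w\times y_{i,1})\cdot(w\times y_{i,2})\cdots$, which by (a)--(b) is a supersequence of $X^{k-1}\times y_i$. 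Hence $W$ is a supersequence of $X^k$ and $\mathrm{OPT}(X^k)\le|W|\le\mathrm{OPT}(X)^k$. (Alternatively, one may simply cite the matching statement from \cite{LTSnotAPX}.) Everything else — the pigeonhole inequality, the size bound $|X^k|=|X|^k$, and the running-time accounting — is routine.
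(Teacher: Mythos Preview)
The paper does not prove this lemma at all; it merely cites it from \cite{LTSnotAPX}, writing only ``Using \autoref{factDAG}, \cite{LTSnotAPX} obtains the following result.'' Your reconstruction is correct and is the natural derivation: the first claim is already contained verbatim in the ``Hence'' clause of Lemma~\ref{factDAG} (your pigeonhole step just makes that clause explicit), and the second claim reduces to the inequality $\mathrm{OPT}(X^k)\le\mathrm{OPT}(X)^k$, which you establish by a clean inductive construction of a supersequence of length $\mathrm{OPT}(X)\cdot\mathrm{OPT}(X^{k-1})$ and correctly note could alternatively be quoted directly from \cite{LTSnotAPX}. One minor remark: your argument that $W$ is a supersequence of the \emph{union} $\bigcup_i(X^{k-1}\times y_i)$, not just of each component, relies on the (true but unstated) fact that a sequence which is a supersequence of each of several disjoint LDAGs is a supersequence of their union; this follows from a short greedy/monotonicity argument and is implicit in the LDAG framework of \cite{LTSnotAPX}.
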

\paragraph{\textbf{The construction}}
Using \autoref{SCSnotapxx2}, let $X$ be an instance of restricted SCS over an alphabet of size 4 that is hard to approximate within factor $c$ in time $O( \poly(|X|) )$, where $c$ is as in \autoref{SCSnotapxx2}. Let $n=|X|$.

 Following \cite{LTSnotAPX}, let $Y=X^k$ 
be an LDAG instance, where $k= 2 \log_c \log  n$. 
 The following result is shown in \cite{LTSnotAPX}.

\begin{lemma}\label{ltstoSCS} \cite{LTSnotAPX}
    If there exists an $  f( |Y| ) $-approximation for the instance $Y$ in time $h(|Y|)$, then there exists a $  f^{1/k}( |Y| ) $-approximation for $X$ in time $O(h(|Y|) +  \poly ( |X|^k )  )$.  
\end{lemma}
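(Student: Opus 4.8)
The plan is to prove Lemma~\ref{ltstoSCS} by chaining three facts: a ``blow-up'' inequality $\textsc{opt}(X^k)\le \textsc{opt}(X)^k$ (where $\textsc{opt}(\cdot)$ denotes the minimum supersequence length of an LDAG), the hypothesised $f(|Y|)$-approximation algorithm applied to $Y=X^k$, and the ``factoring'' guarantee of Lemma~\ref{factDAG}. I note first that Lemma~\ref{extendapx} essentially already packages this reduction for any problem whose instances are built by the $(\cdot)^k$ operation, so one short route is simply to instantiate Lemma~\ref{extendapx} with the LDAG-supersequence problem and the factor function $f$; I will nonetheless spell out the self-contained argument, since the one non-routine ingredient is the blow-up inequality, which is the converse direction of Lemma~\ref{factDAG}.

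First I would establish $\textsc{opt}(X^k)\le \textsc{opt}(X)^k$ by induction on $k$, for $X$ a disjoint union of sequences. The base case $k=1$ is trivial. For the step, let $z$ be a supersequence of $X^{k-1}$ with $|z|=\textsc{opt}(X^{k-1})\le\textsc{opt}(X)^{k-1}$, and let $y^\ast=c_1c_2\cdots c_r$ be a minimum-length common supersequence of the sequences of $X$, so $r=\textsc{opt}(X)$. Viewing $z$ as a single sequence, form $z\times y^\ast=(z\times c_1)\cdot(z\times c_2)\cdots(z\times c_r)$. I claim this is a supersequence of $X^k=X^{k-1}\times X=\bigcup_j X^{k-1}\times y_j$: for each sequence $y_j=b_1^{(j)}\cdots b_{\ell_j}^{(j)}$ of $X$, choose indices $i_1<\cdots<i_{\ell_j}$ with $c_{i_t}=b_t^{(j)}$ (possible since $y_j$ is a subsequence of $y^\ast$); since $z$ supersequences $X^{k-1}$, each block $z\times c_{i_t}$ supersequences $X^{k-1}\times b_t^{(j)}$ by relabelling, so the block-concatenation $(z\times c_{i_1})\cdots(z\times c_{i_{\ell_j}})$ supersequences the ``$\cdot$''-concatenation $X^{k-1}\times y_j$, and that block-concatenation is a subsequence of $z\times y^\ast$. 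Hence $z\times y^\ast$, of length $|z|\cdot r\le\textsc{opt}(X)^k$, supersequences every $X^{k-1}\times y_j$ and therefore $X^k$. The one subtlety is verifying that the ``$\cdot$'' operation interacts correctly with supersequences --- that a concatenation of supersequences of the blocks supersequences the concatenation of the blocks --- which holds because the complete bipartite connection between consecutive blocks forces every letter of block $t$ to precede every letter of block $t+1$.

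With the blow-up inequality in hand, the reduction is immediate. Build $Y=X^k$ (in time $\poly(|X|^k)$, since it has at most $|X|^k$ vertices), run the assumed $h(|Y|)$-time $f(|Y|)$-approximation algorithm on $Y$ to get a supersequence $z$ of $Y$ with $|z|\le f(|Y|)\cdot\textsc{opt}(Y)\le f(|Y|)\cdot\textsc{opt}(X)^k$, and then apply Lemma~\ref{factDAG} to $z$ to extract, in time $\poly(|X^k|)=\poly(|X|^k)$, a supersequence $z'$ of $X$ with $|z'|\le|z|^{1/k}\le\big(f(|Y|)\big)^{1/k}\cdot\textsc{opt}(X)$. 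Thus $z'$ is an $f^{1/k}(|Y|)$-approximate supersequence of $X$, and the total running time is $O(h(|Y|)+\poly(|X|^k))$, which is exactly the claim.

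I expect the main obstacle to be purely the careful handling of the product/concatenation definitions in the blow-up step: one must check that the single sequence $z\times y^\ast$ simultaneously covers all the components $X^{k-1}\times y_j$ of the union $X^k$, which is precisely where the fact that each $y_j$ is a subsequence of $y^\ast$ is used. Everything after that is arithmetic with the inequalities $|z|\le f(|Y|)\,\textsc{opt}(Y)$, $\textsc{opt}(Y)\le\textsc{opt}(X)^k$, and $|z'|\le|z|^{1/k}$, together with the running-time bookkeeping.
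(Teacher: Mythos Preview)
Your proposal is correct. The paper itself does not supply a proof of this lemma---it is cited directly from \cite{LTSnotAPX} and is essentially a restatement of Lemma~\ref{extendapx} (also cited from \cite{LTSnotAPX}) specialized to $Y=X^k$. Your self-contained argument via the blow-up inequality $\textsc{opt}(X^k)\le\textsc{opt}(X)^k$ followed by Lemma~\ref{factDAG} is exactly the standard route, and you correctly observe that the blow-up direction is the one missing ingredient not already packaged in Lemma~\ref{factDAG}; your inductive proof of it is sound.
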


Using \autoref{ltstoSCS}, \cite{LTSnotAPX} shows the following result.

\begin{theorem}\label{LDAGapx}\cite{LTSnotAPX}
    There does not exist a $   \log ( |Y| ) $-approximation for the LDAG instance  $ Y$ in time $g(n,k)$  unless  $NP \subset  DTIME( O(  g(n,k) + \poly( n^k )  ) )$.
\end{theorem}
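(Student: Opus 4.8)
The plan is to follow the argument of \cite{LTSnotAPX} and prove the statement by contradiction, amplifying the constant-factor inapproximability of restricted SCS over a size-$4$ alphabet (Theorem~\ref{SCSnotapxx2}) through the product construction $X^k$ and the self-reduction of Lemma~\ref{ltstoSCS}. Suppose, toward a contradiction, that there is a $\log(|Y|)$-approximation algorithm for the LDAG instance $Y = X^k$ running in time $g(n,k)$, and yet $NP \not\subset DTIME(O(g(n,k) + \poly(n^k)))$. Recall that $X$ is a disjoint union of paths with $n = |X|$ vertices, so by Definition~\ref{prodmach} and the remark following it the LDAG $X^k$ has exactly $n^k$ vertices; hence $|Y| = n^k$ and the approximation ratio achieved on $Y$ equals $f(|Y|) := \log(n^k) = k\log n$.

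Next I would feed this algorithm into Lemma~\ref{ltstoSCS} with $h(|Y|) = g(n,k)$ and $f(|Y|) = \log|Y|$. The lemma produces an $f^{1/k}(|Y|) = (k\log n)^{1/k}$-approximation algorithm for the restricted SCS instance $X$ over the alphabet of size $4$, running in time $O(h(|Y|) + \poly(|X|^k)) = O(g(n,k) + \poly(n^k))$.

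It then remains to check the (routine) arithmetic that makes this ratio strictly below $c$. Since $k = 2\log_c\log n$ we have $c^{k} = (\log n)^2$, whereas $k\log n = 2(\log n)(\log_c\log n)$. For all sufficiently large $n$ the factor $\log_c\log n$ is $o(\log n)$, so $k\log n < (\log n)^2 = c^{k}$, and therefore $(k\log n)^{1/k} < (c^{k})^{1/k} = c$. Thus for large $n$ the algorithm above is a $c'$-approximation with $c' < c$ for restricted SCS over a size-$4$ alphabet, running in time $O(g(n,k) + \poly(n^k))$. The reduction underlying Theorem~\ref{SCSnotapxx2} is from an $NP$-hard problem, so a $c$-approximation for this version of SCS running in time $T$ places $NP$ in $DTIME(\poly + T)$; instantiating $T = O(g(n,k) + \poly(n^k))$ gives $NP \subset DTIME(O(g(n,k) + \poly(n^k)))$, contradicting our assumption and proving the theorem.

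The only genuinely delicate point is the parameter bookkeeping: one must confirm that the input size of $Y$ is polynomially equivalent to $n^k$ (so that $\poly(|X|^k)$, $\poly(|Y|)$, and $\poly(n^k)$ are interchangeable), that the notation $f^{1/k}$ in Lemma~\ref{ltstoSCS} means $(f(N))^{1/k}$ and not $f(N^{1/k})$, and that the chosen $k = 2\log_c\log n$ is simultaneously small enough to force $(k\log n)^{1/k} < c$ and harmless to the running-time budget $DTIME(O(g(n,k)+\poly(n^k)))$; once these are pinned down, the rest is a direct invocation of the stated lemmas.
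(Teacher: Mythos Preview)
Your proposal is correct and follows exactly the intended argument: the paper does not spell out a proof of this theorem (it is cited from \cite{LTSnotAPX}), but the self-improvement calculation you give --- apply Lemma~\ref{ltstoSCS} with $f(|Y|)=\log(n^k)=k\log n$, then verify $k\log n = 2(\log_c\log n)\log n < (\log n)^2 = c^k$ so that $(k\log n)^{1/k}<c$ --- is precisely the computation from \cite{LTSnotAPX} (and indeed matches the commented-out sketch in the paper's source). Your caveats about $|Y|=n^k$ and the meaning of $f^{1/k}$ are the right sanity checks and both hold by the definitions given.
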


\paragraph{\textbf{Proof of \autoref{thm:lts-jobs}}.}
\begin{proof}
By \autoref{extendapx}, if there were a $f^{ \frac{1}{k} }(|Y|)$-approximation algorithm for the LDAG on $Y$ in time $ O( h(|Y|) $, then there is a $ f(|Y|) $-approximation algorithm for the LDAG problem on $ X $ in time $ O( h(|Y|)  + \poly ( |X|^k )  ) $. 

The resource scheduling instance $\textsc{rs}_{Y}$ has  $| \textsc{rs}_{Y} |  =  2^{4^k} |Y|  =   2^{4^k}  n^k $ jobs.

If there existed an $ o( \log^\alpha | \textsc{rs}_{Y} |  )   $-approximation for $\textsc{rs}_{Y}$ in time $g(n,k)$, then by \autoref{resapxtoldag}, there exists a  $ o(  \log^\alpha | \textsc{rs}_{Y} |  )   $-approximation for $Y$  in time $O( g(n,k) + \poly ( |X|^k )  )$.

We have $4^k  = {   (  c^{ \log_c 4 }  ) }    ^ { 2 \log_c \log n  } =  { (  c^{  \log_c \log n } ) }^  { 2 \log_c 4 } =  (  \log n )^  { 2 \log_c 4 } $.

Observe that 
$$ \begin{array}{ll}
 &     \log^\alpha | \textsc{rs}_{Y} |  =  (           \log  2^{4^k}  n^k     )^ \alpha 
 \\ = &  (           \log  2^{4^k}  + \log  n^k     )^ \alpha  =   (         4^k  \log  2  + k \log  n     )^ \alpha  
 \\ = &   (        (  \log n )^  { 2 \log_c 4 }   \log  2  + k \log  n     )^ \alpha 
   =   O(  ( (  \log n )^  { 2 \log_c 4 }   )  ^\alpha ) .
\end{array}$$

For 
$\alpha =  \frac{1}{2 \log_c 4}  $,  
we have  $  ( (  \log n )^  { 2 \log_c 4 }   )  ^\alpha )  =    \log n $.  
Thus if there is a polytime $o( 
 \log^\alpha ( |V| )  )$-approximation ($|V|$ is the number of jobs) for resource scheduling, then  there exists a $ o( \log(n)  ) $-approximation for $
 Y$  in time $ \poly( 2^{  (  \log n )^  { 2 \log_c 4 } }  n^k ) \leq   O( 2^{  (  \log n )^  { 2 \log_c 4  +1 } }   )    $.
  By \autoref{LDAGapx}, this implies that  $NP \subset DTIME( O( 2^{    \text{polylog} (n)  }  )  ) $.

This shows the theorem.
\end{proof}

\section{Relationship between Scheduling and Shortest Common Super-Sequence (SCS) Problem}
\label{sec:scs}

In this section, we present a conditional lower bound for precedence-constrained resource scheduling problem from the \defn{shortest common supersequence problem (SCS)}~\cite{supseqnotAPX}. 
In SCS, we are given a list $L$ of sequences of integers from alphabet $\Sigma = \{1,2,3,\ldots, \rho\}$ and wish to find a sequence  of minimal length that is a supersequence for each sequence of $L$. 

From~\cite{supseqnotAPX}, the current best approximation ratio for long-standing SCS problem is $|\Sigma|$, even if the size of the \textbf{alphabet $\Sigma$ is a constant}.
We prove the following theorem in this section.
\ResourceImplySCS


\paragraph{\textbf{ Creating a resource scheduling instance from a SCS instance.}}
Given a SCS instance $ \Pi = (\rho,L) $, we construct a resource scheduling instance $\textsc{rs}_{\Pi}$ as  follows.  For each sequence $ \ell\in L$, construct a DAG $D_\ell$ as follows.
 Let $\ell =  \ell_{1},\ell_{2}, \ell_{3},\ldots$. For each character $\ell_i$ in sequence $\ell$, construct a chain $C(\rho, \ell_i)$. Put a directed edge from the sink node of chain $C(\rho, \ell_i)$ to the source node of chain $C(\rho, \ell_{i+1})$. This creates DAG $D_\ell$. The DAG in for resource schedule instance $\textsc{rs}_{\Pi}$ is the union of DAGs $D_{\ell}$ for each $\ell\in L$. All these DAGs $D_\ell$ are independent to each other (i.e., there is no precedence relation between two jobs from different $D_{\ell}$'s).

\begin{figure*}[t]
  \centering
\includegraphics[width=0.75\textwidth]{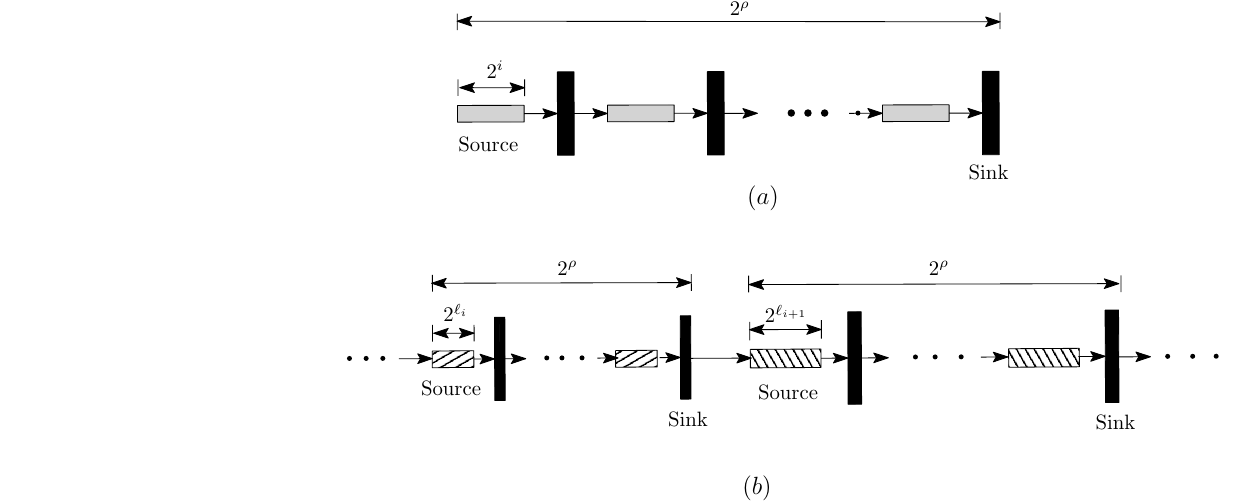}  
   \caption{Part (a) depicts a chain of type $C(\rho, i)$ in $D_{\ell}$ that corresponds to character $i$ in sequence $\ell$. Part (b) shows characters $\ell_{i}$ and $\ell_{i+1}$ from input sequence $\ell = \ldots, \ell_{i}, \ell_{i+1},\ldots$ are replaced with two chains in $D_{\ell}$.}
  \label{fig:SCS-hard}
\end{figure*}

To prove the main theorem of this section, we first prove the following lemma.


\begin{lemma}\label{ResourceSchedulingeqSCS}
Let $\rho \ge 2$ be a fixed constant. Let $\Pi=(\rho, L)$ be a SCS instance where $\rho$ is the size of the alphabet and $L$ is the set of input sequences. Let $\textsc{rs}_{\Pi}$ be the reduced resource scheduling instance. Then the following two statements hold.

   \textbf{{Part 1:}} Given a common supersequence of length $r$ for SCS instance $\Pi$, we can find in polynomial time a solution to $\textsc{rs}_\Pi$ achieving makespan  $2^\rho \cdot r$.
   
   \textbf{{Part 2:}} Given a solution to $\textsc{rs}_\Pi$ achieving makespan $ \tau$, we can find in polynomial time a supersequence for $\Pi$ of length $ \tau /2^{\rho-1}$ 
\end{lemma}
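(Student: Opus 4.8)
\emph{Part 1 (supersequence $\to$ schedule).} The plan is to schedule $\textsc{rs}_\Pi$ in $r$ consecutive \emph{rounds}, one per letter of a fixed common supersequence $z=z_1z_2\cdots z_r$ of $L$: in round $j$ I run, in parallel, every chain of type $C(\rho,z_j)$ all of whose predecessor chains have already finished. At the start of round $j$ any two such ready chains are independent — two ready chains from the same $D_\ell$ would be comparable in the path order, which is impossible since a chain becomes ready only after all earlier chains of its $D_\ell$ are done, and chains from different $D_\ell$ are independent by construction — so Property~\ref{skinnyjobpar} lets round $j$ finish in time $2^\rho$. To see that every chain is eventually run, fix $\ell=\ell_1\cdots\ell_k\in L$ and a strictly increasing embedding $\phi\colon[k]\to[r]$ with $z_{\phi(i)}=\ell_i$, and prove by induction on $i$ that the $i$-th chain $C(\rho,\ell_i)$ of $D_\ell$ finishes by the end of round $\phi(i)$: its predecessor finishes by round $\phi(i-1)\le\phi(i)-1$, so the chain is ready at round $\phi(i)$ and gets run then unless it was run earlier. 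Hence everything completes within $r$ rounds, so the makespan is at most $2^\rho r$.

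\emph{Part 2 (schedule $\to$ supersequence).} Given a schedule of makespan $\tau$, I would first apply Lemma~\ref{cleanschedule} and delete idle time to obtain a schedule $S'$ of makespan at most $2\tau$ in which no job starts while another is running and, at every instant, all running skinny jobs have equal length. Then $S'$ is a sequence of \emph{epochs} — an epoch being a maximal set of equal-length skinny jobs with a common start time, separated only by $0$-length fat jobs — whose lengths are powers of two and sum to the makespan. For each $q\in[\rho]$ I list the length-$2^q$ epochs in time order, cut that list into consecutive \emph{blocks} of $2^{\rho-q}$ epochs (the last possibly short), and label each block with the letter $q$. The output word $z$ lists all blocks in increasing order of the start time of their first epoch; returning $z$ proves the statement once I show $z$ is a common supersequence of $L$ and $|z|\le\tau/2^{\rho-1}$.

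\emph{The main obstacle} is showing $z$ is a supersequence of every $\ell=\ell_1\cdots\ell_k\in L$; this is exactly where the incompatibility of distinct chain types must be used. In $S'$ the $i$-th chain $C(\rho,\ell_i)$ of $D_\ell$ occupies a window $[s_i,f_i]$ with $f_i\le s_{i+1}$ (its sink precedes the source of the next chain), and its $2^{\rho-\ell_i}$ skinny jobs run sequentially, each inside its own length-$2^{\ell_i}$ epoch; since $[s_i,f_i]$ is an interval, the length-$2^{\ell_i}$ epochs lying inside it form a block of at least $2^{\rho-\ell_i}$ consecutive indices, which contains the first epoch of exactly one length-$2^{\ell_i}$ block — call it $\beta_i$. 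The $\beta_i$ are pairwise distinct (a block's first epoch has positive length, hence cannot lie inside two of the windows $[s_i,f_i]$, which pairwise overlap in at most a point), and the start times of their first epochs are strictly increasing, so $\beta_1,\dots,\beta_k$ occur in $z$ in this order and spell out $\ell_1\cdots\ell_k$. For the length bound, whenever a value $q$ occurs there is a chain $C(\rho,q)$, forcing at least $2^{\rho-q}$ length-$2^q$ epochs; writing $n_q$ for their number, the number of value-$q$ blocks is $\lceil n_q/2^{\rho-q}\rceil$, and since $\sum_q n_q 2^q$ is the makespan of $S'$ (at most $2\tau$) this sums to $O(\tau/2^{\rho})$, which a block decomposition aligned with chain windows (so that no block is short) sharpens to $\tau/2^{\rho-1}$. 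Pinning down this final constant, and making the "run of consecutive equal-length epochs inside a window" step fully rigorous against interleaving by other chains of the same type, are the parts needing the most care; everything else rests on Property~\ref{skinnyjobpar}, Lemma~\ref{cleanschedule}, and the fact that a fat job blocks the entire resource.
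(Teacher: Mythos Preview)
Your plan matches the paper's approach in all essential respects: Part~1 is identical, and for Part~2 both you and the paper apply Lemma~\ref{cleanschedule}, group the length-$2^q$ epochs of $S'$ into consecutive blocks of $2^{\rho-q}$, and output the time-ordered list of block labels as the supersequence. The one difference worth noting is that the paper tags each block by the start time of its \emph{last} epoch (its $t_{i,j}$ is the time the $(j\cdot 2^{\rho-i})$-th length-$2^i$ epoch begins) rather than your first epoch; this automatically drops any trailing partial block, so the block count is $\sum_q \lfloor n_q/2^{\rho-q}\rfloor \le 2\tau/2^\rho$ on the nose --- precisely the sharpening of the constant you flagged as needing care. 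The paper then verifies the supersequence property indirectly, by building a new schedule $S''$ that runs chains according to the block sequence $X$ and showing by induction that every chain $C$ completes by epoch $k_C$ (with $k_{C'}<k_C$ for predecessors $C'$); your direct argument via the windows $[s_i,f_i]$ is equivalent and arguably cleaner, and your worry about interleaving is unfounded, since the length-$2^{\ell_i}$ epochs lying in any time interval are automatically consecutive in the global time-order regardless of which chains they serve.
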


\textbf{Remark:} We do this reduction where $\rho$ is a constant. Hence, $2^\rho$ is still a constant. Thus, if the SCS problem instance has size $X$, the \defn{reduced scheduling problem size is still polynomial in $X$} (in fact, $O(X)$, since $2^{\rho} = O(1)$).

\begin{proof}
    \textit{\textbf{ First part of the lemma (SCS solution $\rightarrow$ scheduling solution).}} Let $z = z_1, x_2,z_3,\ldots, z_r$ be the solution supersequence of length $r$. We claim that the following schedule is a feasible schedule for $\textsc{rs}_\Pi$ that achieves makespan $2^\rho \cdot r$. 

    Given $z$, schedule only ready-to-be-scheduled chains of type $C(\rho, z_1)$ in parallel, followed by only ready-to-be-scheduled chains of type $C(\rho, z_2)$ in parallel, followed by only ready-to-be-scheduled chains of type $C(\rho, z_3)$ in parallel, and so on. 

    Since each input sequence $\ell = (\ell_1, \ell_2, \ldots)$ of SCS instance $\Pi$ is a subsequence of $z$, for each character $\ell_i$, there is a position $i^{'}$ in z, such that $z_{i^{'}} = \ell_i$. Hence, the chain $C(\rho, \ell_i)$ corresponding to $\ell_i$ can be scheduled at the $i^{'}$ epoch of length $2^\rho$.

    \paragraph{\textbf{ Second part of the lemma (scheduling solution $\rightarrow$ SCS solution).}}
\myworries{ for resource scheduling -> for the resource scheduling? }
    Let $\mathcal{A}$ be an algorithm for resource scheduling problem that gives a schedule $S$ with makespan $\tau$ on the reduced resource scheduling instance $\textsc{rs}_{\Pi}$. Note that instance $\textsc{rs}_{\Pi}$ can be thought of as a DAG where each node $u$ is a chain $C_u$, and if there is an edge from node $u$ to $v$, then chain $C_u \prec C_v$. From \hao{changed}  Lemma~\ref{cleanschedule}, we can find an algorithm $\mathcal{A}$ that gives a schedule $S^{'}$ with makespan at most $2 \cdot \tau$ with the following two properties:

(1) algorithm $\mathcal{A}'$ does not start any job while another job has already started but not finished, and 

(2) at any point in time, $\mathcal{A}'$ runs only one type of skinny job (i.e., all the skinny jobs that run at the same time have the same length).

\paragraph{\textbf{ Creating a new schedule $S^{''}$ where jobs of a chain run without any gap.}}
Let $\mathcal{S}^{'}$ be the schedule by algorithm $\mathcal{A}^{'}$. Let $t_{i,j}$ be the timestep in schedule $\mathcal{S}^{'}$ when skinny-job of length $2^i$ is scheduled $j\cdot (2^\rho/2^i)$ times (i.e., the sum of lengths of skinny-jobs each of length $2^i$ until time $t_{i,j} + 2^i$ is $j\cdot 2^\rho$). \myworries{ -> are scheduled for the $j\cdot (2^\rho/2^i)$-th time to clarify its at the start? } Note that $t_{i,1} < t_{i,2} < t_{i,3}< \cdots$ for each skinny-jobs of length $2^i$. Besides, given two skinny jobs of lengths $2^{i}$ and $2^{i^{'}}$, and two integers $j$ and $j^{'}$, there is an ordering of $t_{i,j}$ and $t_{i^{'}, j^{'}}$ (i.e., either $t_{i,j} < t_{i^{'}, j^{'}}$ or $t_{i^{'}, j^{'}} < t_{i,j}$) in schedule $\mathcal{S}^{'}$. Since there is a total ordering of all the $t_{i,j}$'s, create a sequence $X$ of integers $i$ in the order of $t_{i,j}$'s (e.g., if $t_{5,1} < t_{3,1} < t_{5,2} < t_{5,3} < t_{4,1} < t_{3,2} < \cdots$, then the sequence $X$ of integers is $(5,3,5,5,4,3,\ldots)$, where $X[1] = 5, X[2] = 3, X[3] =5, X[4] = 5$ etc.). Create a new schedule $\mathcal{S}^{''}$ from $X$ as follows: Schedule skinny jobs of length $2^{[X[1]]}$ for $2^\rho/2^{[X[1]]}$ times (i.e., their total length is $2^\rho$; call this time interval as 1st epoch), followed by skinny jobs of length $2^{[X[2]]}$ for $2^\rho/2^{[X[2]]}$ times (call this time interval as 2nd epoch),  followed by skinny jobs of length $2^{[X[3]]}$ for $2^\rho/2^{[X[3]]}$ times (call this time interval as 3rd epoch), and so on.  
\haonew{added}
From the construction of $S^{''}$ it is clear that $S^{''}$ obeys the precedence constraints.

\paragraph{\textbf{ The length of schedule $S^{''}$ is no more than $S^{'}$.}}
The schedule $S^{''}$ is constructed by repositioning the skinny jobs from $S^{'}$. Hence, the length of $S^{''}$ is no more than $S^{'}$. 

\paragraph{\textbf{ Each chain $C$ is run on schedule $S^{''}$.}}
Recall that from construction, each $D_{\ell}$ is a concatenation of chains $C$. For each chain $C$ in $D_{\ell}$, we show that $C$ is scheduled in $S^{''}$. 
Let $C$ be a chain of type $C(\rho,i)$. Let in schedule $S^{'}$, chain $C$ starts at time $t_s(C)$ and finishes at time $t_f(C)$ (since, $S^{'}$ is a feasible schedule, it must schedule $C$). Then there must be a $t_{i,j} \in [t_s(C), t_f(C) -1 ]$, since chain $C$ of type $C(\rho,i)$ starts and finishes in $[t_s(C), t_f(C)]$. Find the index $k_C$ in the integer-sequence $X$ that corresponds to $t_{i_C,j_C}$. Hence, $X[k_C] = i_C$. 

We claim that for any chain $C$ its predecessor is completed by the $k_C-1$-th epoch.
We prove by this induction on the number of ancestors of $C$.
If $C$ has no predecessor, then the statement is clear.
Let $C'$ be the predecessor of $C$.
Then by induction the predecessors of $C'$ are complete after the $ k_{C'}-1 $-th epoch. So $C'$ will be completed by the $ k_{C'} $-th epoch. 
Since $C'$ must finish before $C$ starts in schedule $S'$,
$ t_{i_{C'},j_{C'} } \leq t_f(C') -1 < t_s(C) \leq  t_{i_C,j_C} $. 
Thus $ k_{C'}  \leq k_C -1$. 
This means that each $C$ will be completed by the $k_C$-th epoch. Since $k_C$ exists for all $C$, this means the schedule finishes.
So $C'$ completes by the $k_C-1$-th epoch, thus completing the induction.

\end{proof}

We now prove the main theorem (conditional lower bound for resource scheduling problem) of this section.

\paragraph{\textbf{ Proof of Theorem~\ref{thm:scs}.}}
\begin{proof}
Lemma~\ref{ResourceSchedulingeqSCS}  gives a reduction from  finding a $c$-approximation for the SCS problem to finding a $c/2$-approximation resource scheduling as follows.

Given a SCS instance $ \Pi = (\rho,L) $ with optimal supersequence length $r$, the first part of \autoref{ResourceSchedulingeqSCS} implies that there exists a solution to resource scheduling on $\textsc{rs}_{\Pi}$ using time $2^\rho \cdot r$. Hence, the optimal makespan on $\textsc{rs}_{\Pi}$ is at most $2^\rho \cdot r$. 
If there exists a polynomial-time $c/2$-approximation algorithm \textsc{alg} for the resource scheduling problem, then \textsc{alg} must find a solution to $\textsc{rs}_{\Pi}$ of makespan at most $ c \cdot r\cdot 2^{\rho-1}$, since the optimal makespan on $G_L$ is at most $2^\rho \cdot r$. 

The second part of \autoref{ResourceSchedulingeqSCS} implies that given a solution to $\textsc{rs}_{\Pi}$ with makespan $2^\rho r$, we can obtain a solution to $ \Pi = (\rho,L) $ in polynomial-time with supersequence length $c \cdot r$. Hence, we can find a $c$-approximate solution to $ \Pi = (n,L) $ in polynomial time (by first, reducing the SCS instance $\Pi = (n,L)$ to a resource scheduling instance $\textsc{rs}_{\Pi}$, then, finding a $\rho/2$-approximate solution to $\textsc{rs}_{\Pi}$, and then converting the solution to $\textsc{rs}_{\Pi}$ to a solution to $\Pi$---all three steps are done in polynomial time). The theorem is thus proved.
\end{proof}

\section{Tight Lower Bound of the Online Problem}
\label{sec:lower-online}
In this section, we prove the following lower bound on the competitive ratio of any randomized online algorithm. The first lower bound is dependent on the number of jobs, thus it holds even if there is only one type of resource. The second lower bound is dependent on the number of resource types.

\begin{figure*}
  \begin{center}
    \includegraphics[width=0.75\textwidth]{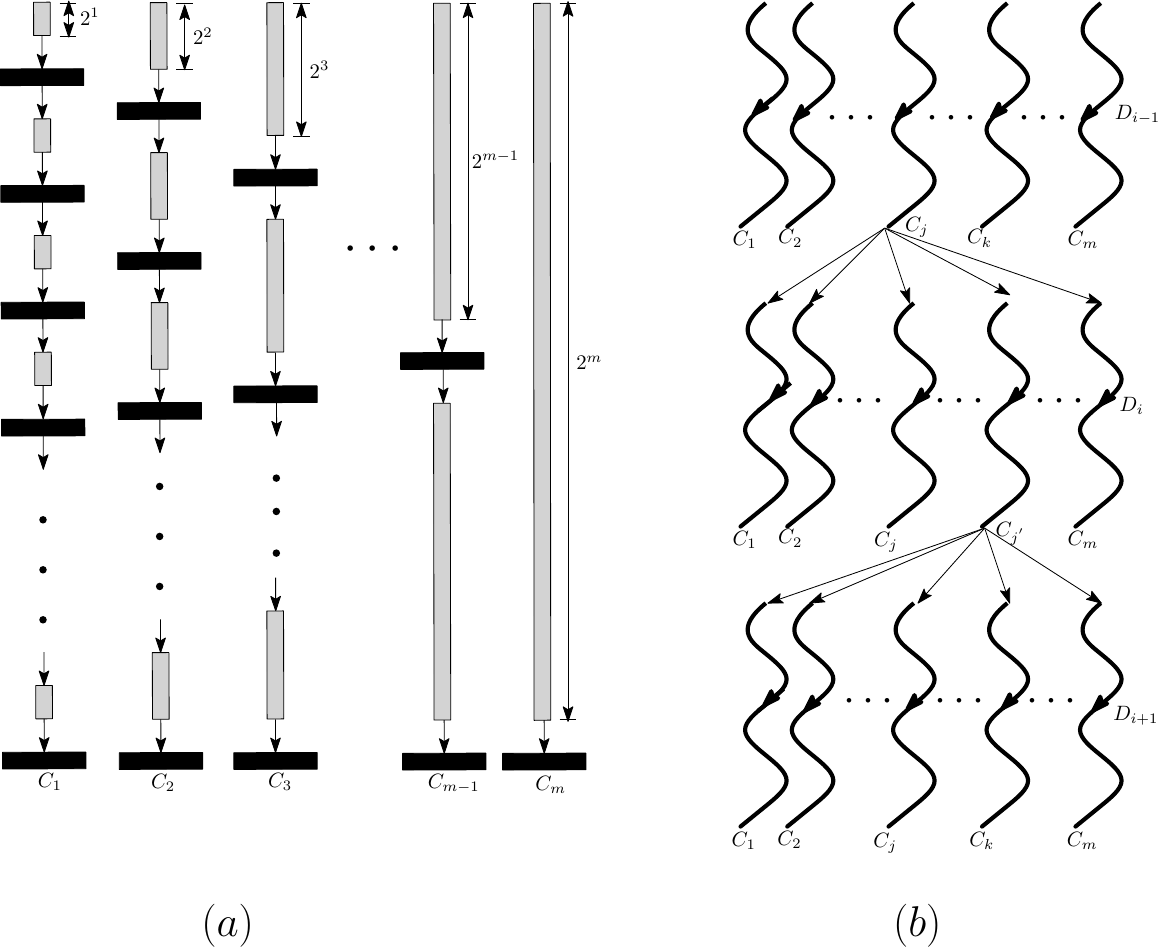}
  \end{center}
  \caption{Part (a) shows a $D_i$. Part (b) shows how $D_i$'s are connected to each other. In $D_{i-1}$, chain $C_j$ contains end($i-1$). In $D_{i}$, chain $C_{j^{'}}$ contains end($i$).}
  \label{fig:online-lower}
\end{figure*}

\subsection{Lower Bound as a Function of the Number of Jobs}
\OnlineRSHard

We prove Theorem~\ref{Thm:online-lower} using Yao's minimax principle~\cite{motwani1995randomized}. 
\begin{lemma}[Yao's minimax principle~\cite{motwani1995randomized}]
The optimal deterministic performance for an arbitrary input distribution \( D \) serves as a lower bound for the performance of any randomized algorithm \( R \) when faced with its worst-case input.
\end{lemma}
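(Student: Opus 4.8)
The plan is to give the standard ``averaging'' proof of Yao's principle, which is the easy half of von Neumann's minimax theorem; the full minimax theorem (the maximin $\ge$ minimax direction) is not needed for the lower-bound direction we want. First I would view a randomized online algorithm $R$ as a probability distribution over deterministic online algorithms: fixing $R$'s random seed $s$ yields a deterministic algorithm $A_s$, so $\E_{s\sim R}[c(A_s,x)]$ is exactly the expected performance of $R$ on input $x$, where $c(\cdot,\cdot)$ denotes cost, or — for a competitive-ratio lower bound — the normalized cost $\mathrm{cost}(A,x)/\textsc{opt}(x)$. Restricting attention to inputs of bounded size, which suffices since all our lower bounds are asymptotic, we may assume the input set and the set of distinct deterministic behaviors on those inputs are both finite, so that every expectation below is a finite sum and linearity of expectation applies unconditionally.

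Now fix an arbitrary input distribution $D$ and let $\mathrm{OPT}_D = \min_A \E_{x\sim D}[c(A,x)]$ be the best expected performance of a single deterministic algorithm against $D$. For any randomized $R$, the worst-case-input performance dominates the $D$-average performance; the $D$-average, by linearity of expectation (Fubini for finite sums, so the two summations may be swapped), equals $\E_{s\sim R}\big[\E_{x\sim D}[c(A_s,x)]\big]$; and an average over seeds $s$ is at least the minimum over deterministic algorithms. Writing this as one chain, $\sup_x \E_{s\sim R}[c(A_s,x)] \ge \E_{x\sim D}\E_{s\sim R}[c(A_s,x)] = \E_{s\sim R}\E_{x\sim D}[c(A_s,x)] \ge \min_A \E_{x\sim D}[c(A,x)] = \mathrm{OPT}_D$. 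This is exactly the stated claim: the optimal deterministic performance against an arbitrary $D$ is a lower bound on the worst-case performance of every randomized algorithm $R$.

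The only delicate point, and the one to handle carefully, is the finiteness reduction that licenses the interchange of the two expectations: one must check that truncating to inputs of bounded size, and to finitely many deterministic behaviors on them, loses nothing for the purposes of the subsequent $\Omega(\log n)$ and $\Omega(\log t_{\max})$ arguments. It does not, since those lower bounds will be witnessed by a single finite family of instances built from \emph{chains} (Section~\ref{sec:chains}), so it is enough to run the above argument with $D$ supported on that family. No appeal to the hard direction of the minimax theorem, and no compactness or integrability machinery, is required beyond this.
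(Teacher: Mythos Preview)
Your proof is correct and is the standard averaging argument for the easy direction of Yao's principle. The paper, however, does not prove this lemma at all: it is stated with a citation to Motwani--Raghavan and then used as a black box, so there is no ``paper's own proof'' to compare against.
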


For this, we first construct a randomized DAG as follows.

\paragraph{\textbf{ Constructing a randomized instance of the resource scheduling problem.}}
Let $m$ be an arbitrary positive integer. It will turn out that $m = \theta(\log n)$ where $n$ is the number of jobs. 
\hao{specify n is the number of jobs we wish our instance to have? $n$ will have the form} We construct a randomized DAG $D=(V,E)$ as follows. DAG $D$ has a series of gadgets $D_i$ for $i\in [1, 2^m]$. Each $D_i$ is a collection of $m$ chains where the $j\in [1,m]$-th chain is $C(m,j)$. Chains in each $D_i$ are independent, i.e., there is no precedence relation between two jobs from different chains of the same $D_i$. However, chains from two different $D_i$'s are not independent. The precedence relation between the chains from different $D_i$'s are defined below.\haonew{m is can be an arbitrary positive integer the number of nodes n will depend on m. It turns out that $ m=\theta (\log n) $  }

Take any two consecutive gadgets $D_i$ and $D_{i+1}$. Only one chain $C_i$ from $D_i$ has precedence over all the chains in $D_{i+1}$. Call the chain $C_i$ as \defn{blocking-chain} for gadget $D_{i+1}$, if $C_i$ has precedence over all the chains in $D_{i+1}$. In the randomized DAG $D$, for each chain $C_i$ in $D_i$,
$$
\Pr[C_i \text{ is the blocking chain for gadget } D_{i+1}] = \frac{1}{m}.
$$

If chain $C_j \in D_i$ is randomly chosen as the blocking-chain, then create a directed edge from the sink node $C_j$ to the source node of every chain in gadget $D_{i+1}$. Call the sink node $C_j$ as $\texttt{end}(i)$.

We first prove the following lemmas before proving the main theorem of this section.

\begin{lemma}\label{completechain}
Let $D$ be the randomized DAG in the resource scheduling problem instance. Let $\mathcal{A}$ be any deterministic online algorithm for the resource scheduling problem. Then,
\vspace{-0.5em}
\begin{multline*}
\E[\text{Number of chains in gadget } D_i \text{ that finish before or at the same} \\ \text{ time with }\texttt{end}(i)] \geq \frac{m}{4}.
\end{multline*}

\end{lemma}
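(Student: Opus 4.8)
The plan is to fix the deterministic online algorithm $\mathcal{A}$ and the index $i$, and to condition on $\omega_{<i}$, the random choices of the blocking chains of $D_1,\dots,D_{i-1}$. Conditioned on $\omega_{<i}$, the choice of which of the $m$ chains of $D_i$ is its blocking chain is still uniform and independent, so it is enough to lower-bound the conditional expectation of the count by $m/4$ and then average over $\omega_{<i}$.

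The heart of the argument is a coupling between the real instance and a \emph{truncated} instance consisting only of $D_1\cup\cdots\cup D_i$ (with the designations $\omega_{<i}$ in place, but with $D_{i+1},\dots$ simply absent). Conditioned on $\omega_{<i}$ this truncated run of $\mathcal{A}$ is completely deterministic; since $\mathcal{A}$ is a correct online scheduler it finishes every job of $D_1\cup\cdots\cup D_i$, so each chain $c$ of $D_i$ has a finite finishing time $f(c)$ in it. For $c\in D_i$ let $r(c)=|\{c'\in D_i: f(c')\le f(c)\}|$ be the rank of $c$ in the finishing order (assume for simplicity there are no ties, which only helps). Because $\mathcal{A}$ is online, its schedule in the real instance is identical — event for event — to its schedule in the truncated instance up to the first moment at which the revealed subgraphs differ, which is precisely the moment $D_{i+1}$ is revealed, i.e. the moment $\texttt{end}(i)$ (the sink of the blocking chain) finishes. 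An easy induction on the sequence of scheduling events then shows: if $c$ is the blocking chain of $D_i$, then in the real run $\texttt{end}(i)$ finishes at exactly time $f(c)$, and the set of $D_i$-chains that have finished by that time is exactly $\{c': f(c')\le f(c)\}$, of size $r(c)$.

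Since the blocking chain of $D_i$ is uniform over its $m$ chains and independent of $\omega_{<i}$ (hence of the truncated run and of the values $r(\cdot)$), the conditional expectation of the count is $\frac1m\sum_{c\in D_i} r(c) = \frac1m\cdot\frac{m(m+1)}2 = \frac{m+1}2 \ge \frac m4$, and averaging over $\omega_{<i}$ yields the lemma. The boundary cases cost nothing: $\omega_{<1}$ is empty, and for $i=2^m$ one may define $\texttt{end}(2^m)$ as the sink of a uniformly chosen chain, for which the same reasoning applies.

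The step that requires the most care is the coupling: one must argue precisely that a possibly adaptive online algorithm's behaviour up to the finishing time of $D_i$'s blocking chain cannot depend on which chain was designated blocking, because that designation is visible only through the edges into $D_{i+1}$, and those edges are not revealed to the algorithm until $\texttt{end}(i)$ has already finished. The remaining ingredients — the rank computation and the averaging over $\omega_{<i}$ — are routine.
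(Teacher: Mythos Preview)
Your proof is correct and rests on the same idea as the paper's: conditioned on the earlier random choices, the algorithm cannot distinguish which chain of $D_i$ is the blocking one until that chain actually finishes, so the blocking chain's position in the finishing order is uniform on $\{1,\dots,m\}$. The paper bounds the expectation via a tail estimate ($\Pr[\text{rank}\ge m/2]\ge 1/2$, giving $m/4$), whereas you compute $\E[\text{rank}]=(m+1)/2$ directly---the same argument with a sharper constant and a more explicit coupling.
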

\begin{proof}
At any time when fewer than $\frac{m}{2}$ chains have finished,  the probability  that $\texttt{end}(i)$ is not among the endpoints of the completed chains is at least 1/2.
Thus the probability that $\texttt{end}(i)$ is finished before $\frac{m}{2}$ chains have completed is at most 1/2.
Thus when $\texttt{end}(i)$ is completed, the  probability that at least  half the chains have been completed at this time is at least 1/2.
It follows that the expected number of completed chains is at least $\frac{m}{4}$.
\end{proof}

\begin{lemma}
\label{lem:onl-a}
Let $\mathcal{A}$ be any deterministic online algorithm and let $T_{\mathcal{A}}$ denote the makespan of $\mathcal{A}$ on randomized DAG $D$. Then, $\E[T_{\mathcal{A}}] \ge \Omega(m\cdot 4^m).$
\end{lemma}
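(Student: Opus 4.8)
The plan is to track, for each gadget, how long $\mathcal{A}$ is forced to dwell ``inside'' it, and then sum over the $2^m$ gadgets. Fix a deterministic online algorithm $\mathcal{A}$ and run it on the random DAG $D$. For $1 \le i \le 2^m-1$ let $\tau_i$ be the time at which job $\texttt{end}(i)$ completes, and set $\tau_0 = 0$; since $\texttt{end}(2^m-1)$ is a job of $D$ we have $T_{\mathcal{A}} \ge \tau_{2^m-1} = \sum_{i=1}^{2^m-1}(\tau_i - \tau_{i-1})$. First I would establish the pointwise inequality $\tau_i - \tau_{i-1} = \Omega(q_i \cdot 2^m)$, where $q_i$ denotes the number of chains of $D_i$ that finish at or before $\tau_i$; then I would take expectations, invoke Lemma~\ref{completechain} to get $\E[q_i] \ge m/4$, and telescope.

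For the pointwise bound, fix a realization of $D$ and let $S_i$ be the set of those chains of $D_i$ that finish by $\tau_i$, so $q_i = |S_i|$. The key structural observation is that in the online model the jobs of $D_i$ are revealed only once all their predecessors have finished, and the only predecessors lying outside $D_i$ are those forced by the blocking chain of $D_{i-1}$, whose sink is exactly $\texttt{end}(i-1)$; hence $\mathcal{A}$ cannot start any job of $D_i$ before time $\tau_{i-1}$. Therefore every chain in $S_i$ both starts after $\tau_{i-1}$ and finishes by $\tau_i$, so deleting from $\mathcal{A}$'s schedule every job not belonging to $S_i$ (which only relaxes the resource constraints and keeps all precedences among the surviving jobs) produces a feasible schedule for the isolated sub-instance consisting of the chains of $S_i$, and this sub-schedule occupies a time window of length at most $\tau_i - \tau_{i-1}$. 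Now the chains of $S_i$ all live in the single gadget $D_i$, so they are pairwise independent; they have pairwise distinct skinny-job lengths (the $j$-th chain of $D_i$ is $C(m,j)$) and each has total length $2^m$. Property~\ref{parallelmakespan} then says the optimum makespan for this sub-instance is $\Omega\big(\sum_{C \in S_i} 2^m\big) = \Omega(q_i \cdot 2^m)$, and since our sub-schedule fits into a window of length $\tau_i - \tau_{i-1}$, we conclude $\tau_i - \tau_{i-1} = \Omega(q_i \cdot 2^m)$.

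Taking expectations, Lemma~\ref{completechain} gives $\E[q_i] \ge m/4$, so linearity of expectation applied to the pointwise bound yields $\E[\tau_i - \tau_{i-1}] = \Omega(m \cdot 2^m)$ for each $i$, and summing over $i = 1, \dots, 2^m - 1$ gives
\[
\E[T_{\mathcal{A}}] \;\ge\; \sum_{i=1}^{2^m-1}\E[\tau_i - \tau_{i-1}] \;=\; (2^m-1)\cdot \Omega(m\cdot 2^m)\;=\;\Omega(m\cdot 4^m).
\]

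I expect the main obstacle to be a careful justification of the pointwise inequality $\tau_i - \tau_{i-1} = \Omega(q_i \cdot 2^m)$ — specifically, that ``cutting out'' the $S_i$-jobs from $\mathcal{A}$'s global schedule gives a legitimate schedule for the isolated sub-instance, and that this sub-schedule really is confined to the window $(\tau_{i-1},\tau_i]$; this rests on the online release rule (no $D_i$-job can run before $\texttt{end}(i-1)$ finishes) together with the fact that removing jobs cannot violate feasibility, so that Property~\ref{parallelmakespan} applies verbatim. The remaining steps — linearity of expectation, the counting estimate of Lemma~\ref{completechain}, and dropping the boundary gadget $D_{2^m}$ (which costs only a factor $1-o(1)$) — are routine.
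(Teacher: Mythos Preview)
Your proof is correct and follows the same decomposition as the paper: telescope the makespan into intervals $\tau_i - \tau_{i-1}$ between consecutive $\texttt{end}(\cdot)$ completions, lower-bound each interval via the number $q_i$ of $D_i$-chains finished within it, take expectations with Lemma~\ref{completechain}, and sum. The paper's write-up is terser---it jumps directly from Lemma~\ref{completechain} to $\E[T_{D_j}] \ge (m/4)\cdot 2^m$ without naming the intermediate step---whereas you make the crucial pointwise bound $\tau_i - \tau_{i-1} = \Omega(q_i\cdot 2^m)$ explicit and correctly ground it in Property~\ref{parallelmakespan} via the ``cut out the $S_i$-schedule'' argument. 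That justification (restriction preserves feasibility, the online release rule confines $D_i$-jobs to the window, and the chains in $S_i$ have pairwise distinct skinny lengths) is exactly what the paper is implicitly using, so your concern about this being the main obstacle is well-placed but already resolved by your own argument.
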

\begin{proof}
Let $T_{\mathcal{A}}$ be the random variable that denotes the makespan of algorithm $\mathcal{A}$ on randomized DAG $D$. For $i\in [2,2^m]$, let $T_{D_{i}}$ be the random variable that denotes the time taken by algorithm $\mathcal{A}$ from running $\texttt{end}(i-1)$ until running $\texttt{end}(i)$. Then,
$
T_{\mathcal{A}} \ge T_{D_{2}} + T_{D_{3}} + \cdots + T_{D_{4^m}}.
$
Using linearity of expectation, the expected makespan of algorithm $A$ follows.
$$
\E[T_{\mathcal{A}}] \ge \E[T_{D_{2}}] + \E[T_{D_{3}}] + \cdots + \E[T_{D_{4^m}}].
$$

From Lemma~\ref{completechain}, we know $\E[T_{D_{j}}] \ge (m/4) \cdot 2^m$ for each gadget $j\in [1,2^m]$. Hence,
$
\E[T_{\mathcal{A}}] \ge \Omega(m\cdot 4^m).
$
The lemma is thus proved.
\end{proof}

\begin{lemma}\label{lem:offline-a}
Let $T_{\textsc{opt}}$ be the makespan of the optimal offline algorithm OPT. Then,
$T_{\textsc{opt}} =  O(4^m + m\cdot 2^m).$
\end{lemma}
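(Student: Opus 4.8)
The plan is to exhibit an explicit offline schedule of makespan at most $4^m + m\cdot 2^m$. Since the optimal offline algorithm knows the realization of the random DAG $D$, it knows which chain $C_i$ of each gadget $D_i$ (for $i\in[1,2^m-1]$) was selected as the blocking chain. First I would record the precedence structure of $D$: the blocking chains form a path $C_1\prec C_2\prec\cdots\prec C_{2^m-1}$ (the sink of $C_i$ precedes the source of every chain of $D_{i+1}$, in particular that of $C_{i+1}$); for $i\ge 2$, \emph{every} chain of $D_i$ has $C_{i-1}$ as its only predecessor constraint; no chain other than $C_1,\dots,C_{2^m-1}$ has any outgoing edge; and within a fixed gadget all $m$ chains are mutually independent.

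The schedule has two phases. In \textbf{Phase 1} we run the blocking chains $C_1,C_2,\dots,C_{2^m-1}$ one after another. When it is time to start $C_i$, its unique predecessor $C_{i-1}$ has already finished, so the step is feasible; and since a single chain of type $C(m,j)$ has length $2^m$, Property~\ref{skinnyjobpar} applied with $p=1$ shows $C_i$ finishes $2^m$ time after it starts. Hence Phase 1 completes in time $(2^m-1)\cdot 2^m\le 4^m$, and at its end every blocking chain is done, so every remaining chain has all of its predecessors completed.

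In \textbf{Phase 2} the chains still unprocessed---namely every chain other than $C_1,\dots,C_{2^m-1}$---are now pairwise independent, since the only precedence edges into such a chain in $D_i$ come from $C_{i-1}$ and such chains have no outgoing edges. I would partition these chains by their type index $j\in[1,m]$ and, for $j=1,2,\dots,m$ in turn, run all remaining chains of type $C(m,j)$ simultaneously; each such batch consists of independent chains of common length $2^m$, so by Property~\ref{skinnyjobpar} it finishes in time $2^m$, and Phase 2 takes total time at most $m\cdot 2^m$. Summing the two phases yields $T_{\textsc{opt}}\le 4^m+m\cdot 2^m=O(4^m+m\cdot 2^m)$, as claimed.

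There is no genuinely difficult step: the statement is an upper bound witnessed by a construction, and the only care needed is the precedence bookkeeping---verifying that every chain of $D_i$ (not merely the blocking one) is preceded exactly by $C_{i-1}$, that the non-blocking chains are sinks of the chain-DAG, and that ``independent chains of the same type run in parallel with makespan equal to their common length'' is precisely Property~\ref{skinnyjobpar}. One could in fact pipeline the two phases to obtain the sharper bound $O(4^m)$ (note $m\cdot 2^m=o(4^m)$), but the looser two-phase bound already matches the claim.
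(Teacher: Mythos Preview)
Your proposal is correct and follows essentially the same two-phase approach as the paper's proof: first run the blocking chains sequentially (using that each has length $2^m$ and there are at most $2^m$ of them), then batch the remaining, now-independent chains by type $j\in[1,m]$ and invoke Property~\ref{skinnyjobpar} to finish each batch in time $2^m$. Your bookkeeping on the precedence structure is slightly more careful than the paper's, and your closing remark about pipelining is a nice extra observation, but the argument is the same.
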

\begin{proof}
An optimal offline algorithm OPT knows all the blocking-chains in each gadget $D_i$. OPT runs the blocking chains in each $D_i$ one after another. This takes total time $O(2^m \cdot 2^m) = O(4^m)$, since there are $2^n$ gadgets. After that, there are no precedence constraints between any two remaining chains in $D$ (no precedence between any two jobs from two different chains, although two jobs in the same chain are not independent).

After each $C(m,i)$ for $i=1,2,.., 2^m$ is available, OPT processes all remaining chains in $D$ as follows: run only the chains $C(m,1)$ in all the $D_i$'s in parallel, followed by only the chains $C(m,2)$ in all the $D_i$'s in parallel, followed by only the chains $C(m,3)$ in all the $D_i$'s in parallel, and so on. From \hao{changed} Property~\ref{skinnyjobpar}, each independent set of $C(m,j)$ for $j\in  [1,m]$ takes $2^m$ units of time. Hence, OPT completes $D$ by time $O(4^m + m\cdot 2^m)$.
\end{proof}

\paragraph{\textbf{ Proof of Theorem~\ref{Thm:online-lower}}}
\begin{proof}
From Lemma~\ref{lem:offline-a}, we know that the optimal offline makespan $T_{\textsc{opt}} =  O(4^m + m\cdot 2^m).$ From Lemma~\ref{lem:onl-a}, we know that for any deterministic online algorithm $\mathcal{A}$, $\E[T_{\mathcal{A}}] \ge \Omega(m\cdot 4^m)$. Hence, 
$$
\frac{\E[T_{\mathcal{A}}]}{T_{\textsc{opt}}} \ge \Omega(m).
$$
Using Yao's minimax principle, the lower bound on the competitive ratio of any randomized algorithm is thus $\Omega(m)$.

Note each $D_i$ contains $O(2^m)$ nodes.  
So the number of nodes $n$ in $D$ is $O(4^m)$ and the maximum length $t_{\max}$ of a job is $2^m$. 
Hence, $m = \Theta(\log n)$ and $m =\log t_{\max}$. The theorem is thus proved. 

\end{proof}

\subsection{Lower Bound as a Function of Number of Resource Types}


\thmonlinelowermult

We prove Theorem~\ref{thm:online-lower-multi} using Yao's minimax principle~\cite{motwani1995randomized}. 

For this, we first construct a randomized DAG as follows.

\paragraph{\textbf{ Input randomized DAG construction.} } We construct our DAG as follows. We have $d$ layers $L_1,L_2,.., L_d$ of nodes, where $d$ is the number of resource types. Let $m$ be an arbitrary positive integer. We will end up choosing $m=O(d^2)$ to get the right competitive ratio. 
There are $m$ nodes in each layer. 
    Each node in layer $L_i$ is a job that takes 1 unit of time and requires one unit of resource $\mathcal{R}_i$ (the job has no resource requirement for other types of resource except resource $\mathcal{R}_i$). 
Jobs in each layer $L_i$ are independent, i.e., there is no precedence relation between two jobs in the same $L_i$. However, jobs from two different $L_i$'s are not independent. The precedence relation between the jobs from different $L_i$'s are defined below.

Take any two consecutive layers $L_i$ and $L_{i+1}$. Only one job from $L_i$ has precedence over all the jobs in $L_{i+1}$. Call the job $j\in L_i$ as \defn{blocking-job} for gadget $L_{i+1}$, if $j$ has precedence over all the jobs in $L_{i+1}$. In the randomized DAG $D$, for each  job $j\in L_i$,
$$
\Pr[j \text{ is the blocking job for layer } L_{i+1}] = \frac{1}{m}.
$$

If job $j \in L_i$ is randomly chosen as the blocking job, then create a directed edge from job $j$ to every job in layer $L_{i+1}$. Denote blocking job $j\in L_i$ as $b_i$.

The online algorithm does not know about any nodes in layer $L_{i+1}$ until the algorithm finishes the blocking node $b_i$. The complexity of the online setting thus comes from the fact that the online algorithm does not know which node in layer $L_{i}$ is the blocking node for the next layer $L_{i+1}$. 

The optimal offline algorithm $\textsc{opt}$, however, knows the entire DAG offline and also the blocking nodes in each layer. 
Let $T_{\textsc{opt}}$ denote the makespan of $\textsc{opt}$.

\begin{lemma}
Let $\mathcal{A}$ be any deterministic online algorithm for resource scheduling. Let $T_{\mathcal{A}}$ be the makespan of $\mathcal{A}$ on randomized input DAG $D$. Then, 
$$
\frac{\E[T_{\mathcal{A}}]}{T_{\textsc{opt}}} \ge \frac{(d-1)m}{d+m}.  
$$
\end{lemma}

\begin{proof}
Define $ t_i(\mathcal{A}) $ to be the time algorithm $\mathcal{A}$ completes the blocking job $b_{i-1}$ for $2\le i \le d$. 
Since $ \mathcal{A} $ does not know which job in layer $L_{i-1}$ is the blocking job $b_{i-1}$ for layer $L_i$, in expectation $ t_i(\mathcal{A}) - t_{i-1}(\mathcal{A}) $ is at least  $ m/2$.
Thus in expectation $  t_{d-1}(\mathcal{A}) $ is at least $ (d-1)m $ and so is a lower bound of makespan of  $\mathcal{A}$.
    
The optimal offline algorithm \opt, however, knows all the blocking jobs and could first complete $b_1,b_2,.., b_{d-1}$ in $d-1$ time-steps. After that, \opt takes one job from each of the $d$ layers and run them in parallel. Notice that a job in layer $L_i$ only requires resource $\mathcal{R}_i$, whereas a job from a different layer $L_j$ requires only resource $\mathcal{R}_j$, where $\mathcal{R}_i \neq \mathcal{R}_j$. Hence, two jobs from different layers can be run in parallel. But two jobs from the same layer can not be run in parallel, since each job in layer $L_i$ requires the entire resource capacity of $\mathcal{R}_i$. 

Thus the total time \opt takes is at most $(d + m)$. This proves a $\frac{(d-1)m}{d+m} $ 
lower bound on the expected competitive ratio for any deterministic online algorithm on a randomized input DAG instance.

\end{proof}
Choosing $m>3d^2$  in the previous lemma shows that no algorithm can be $\frac{d-1}{2}$-competitive. \hao{added needs rewording?}

Using Yao's minimax principle, the lower bound on the competitive ratio of any randomized algorithm is thus $\Omega(d)$.
\section{Online Algorithm with Tight Competitive Ratio}\label{OnlineAlgSection}

In this section, we present a competitive online algorithm $\onl$.

Recall \hao{Probably don't need to say here since it's the last section probably recall?} we are given a DAG $D = (V,E)$ and $d \ge 1$ types of resource where resource $\mathcal{R}_i$ has capacity $r_i$ (after normalizing separately for each resource type, assume that each resource type has capacity $r$).
Each node $v\in V$  is a job that requires $r_{i,v}$ units of resource $\mathcal{R}_i$ for $t_v$  units of time to complete.   
For each edge $(u,v) \in E$,  we must complete job $u$ before starting job $v$. 
At any time, for each resource $\mathcal{R}_i$, we may use a total of $r$ units of resource $\mathcal{R}_i$.  \haonew{did we not rescale so there is one unit of each resource?}
The objective is to complete all jobs as soon as possible. \hao{this also needs to be in the first section?}

By losing at most a factor 2, we may round up all processing times $t_v$ to the nearest power of 2. 
Henceforth, we assume all processing times are powers of 2. Also, assume that there is a single source and a single sink node in DAG $D$, otherwise, we add a dummy source and a dummy sink node to $D$.

Our main theorem of this section is as follows.
\OnlineLogApx

Before presenting the online algorithm, we define the following useful notations.

In the online setting, if all the predecessors of a job have finished their execution, the job is revealed to the online algorithm. Call a job a \defn{ready job}, if the job is not executed yet, but all of its predecessors have finished execution. For each ready job $v$, algorithm $\onl$ assigns its depth $\delta(v)$ and level $\psi(v)$ as follows. If $v$ is the source node, then $\delta(v) = \psi(v) = 1$. For all other nodes,
$$
\delta(v) = \max\{\delta(p) + t_p | \text{ where } p \text{ is a parent of node } v\}. $$

Let the length of job $v$ be $2^i$. Level $\psi(v)$ is assigned as the smallest integer multiple of $2^i$ which is at least as large as $(\psi(p) + t_p)$ for all parent $p$ of job $v$. That is, 
\begin{equation*}
 \begin{split}
\psi(v) = \min\{k\cdot 2^i | k \in \N \text{ and }  k\cdot 2^i \ge \psi(p) + t_p  \text{ for all of } v \text{'s} \\ \text{ predecessors } 
 p\}.
 \end{split}
 \end{equation*}

\paragraph{\textbf{ Online algorithm ONL.}} Our online algorithm $\onl$ maintains a set, called \defn{Ready-Jobs} that is the set of ready jobs. Although, any job in the set Ready-Jobs can be executed if enough resource is available for the ready job, algorithm \onl instead execute the ready jobs in a \defn{time-indexed} fashion as follows: 

\onl executes ready jobs in stages---in each stage, \onl executes only the ready jobs that are assigned the same level. If all the ready jobs that are assigned level $\ell$ have finished execution, \onl finds the next smallest level $\ell^{'} > \ell$ such that there are some ready job(s) that are assigned level $\ell^{'}$. \onl executes only the ready jobs that are assigned level $\ell^{'}$ (i.e., there could be other ready jobs sitting idle because their levels are higher than $\ell^{'}$). After executing level $\ell^{'}$ jobs, new jobs might be revealed and they become ready jobs. We insert these new ready jobs into the set Ready-Jobs. Find the next level $\ell^{''} > \ell^{'}$, such that there is at least one ready job with level $\ell^{''}$. Execute all the jobs with level $\ell^{''}$ and then find the next non-empty level, execute those jobs, and so on. 

The algorithm finishes when there is no more ready jobs.


Define the set of jobs that are assigned level $i$, as \defn{Working-Set$(i)$}.

\SetKwComment{Comment}{/* }{ */}

\begin{algorithm}[hbt!]
\caption{Online Algorithm \onl}\label{alg:two}
Ready-Jobs = $\{s\}$       \Comment*[r] {A job is ready if it is not yet executed but all its predecessors are already executed. $s$ is the source node of the DAG.}

Current-Time = 0

\While{Ready-Jobs is not empty}{
    $i$ = getNextNonEmptyLevel(Ready-Jobs)    \Comment*[r] {Find the next level $i$ such that there is a ready job that is assigned level $i$.}
    
    Working-Set$(i)$ = getJobs(Ready-Jobs, $i$)   \Comment*[r] {Working-Set$(i)$ is the set of ready jobs which are assigned level $i$.}
    
    Processing-Time = execute(Working-Set$(i)$)    \Comment*[r] {Run the jobs in Working-Set$(i)$. The function is described below.}
    
    New-Ready-Jobs = getNewReadyJobs(Ready-Jobs, Working-Set$(i)$) \Comment*[r] {Find the set of new ready jobs that reveal after executing Working-Set$(i)$.}
    
    assignLevel(New-Ready-Jobs)     \Comment*[r] {Assign level $\psi$ for each new ready job.}
    
    Ready-Jobs = ReadyJobs $\backslash$ Working-Set$(i)$ $\cup$ New-Ready-Jobs
    
    Current-Time = Current-Time + Processing-Time
    
}

Makespan = Current-Time
\end{algorithm}

\begin{lemma}\label{lvlleq2depth}
Let $\hat{\delta}$ and $\hat{\psi}$ be the depth and level assignment of the sink node of the input DAG. Then, $\hat{\psi} < 2\cdot \hat{\delta}.$
\end{lemma}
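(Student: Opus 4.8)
The plan is to prove, by induction along a topological order of $D$, the slightly stronger invariant
\[
\psi(v)\ \le\ 2\,\delta(v)+t_v-1\qquad\text{for every job }v,
\]
and then specialize it to the sink, which we may take to have length $t_{\hat v}=0$ (a zero-length dummy sink is attached if the DAG does not already have a unique sink), turning the invariant into $\hat\psi=\psi(\hat v)\le 2\hat\delta-1<2\hat\delta$. The reason for strengthening the statement --- and, such as it is, the ``main obstacle'' --- is that a direct induction on ``$\psi(v)<2\delta(v)$'' does not close: a long job $v$ whose level-determining parent $p$ is short has $\psi(v)$ roughly $\psi(p)+t_p+t_v$, whereas $\delta(v)$ only exceeds $\delta(p)+t_p$, so the bound degrades by about $t_v$ at each such vertex; carrying the extra $+\,t_v-1$ slack absorbs exactly this loss. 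I will also use the trivial fact that $\psi(v)\ge 1$ for every $v$ (it holds for the source by definition and is inherited by every child since $\psi(v)\ge\psi(p)+t_p$), which is what makes the rounding estimate below legitimate.

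For the base case, the source $s$ has $\psi(s)=\delta(s)=1$, so the invariant reads $1\le 1+t_s$, which holds. For the inductive step, let $v$ be a non-source job with $t_v=2^i$ and parent set $P$, and put $M=\max_{p\in P}(\psi(p)+t_p)$; note $M\ge 1$. Since $\psi(v)$ is the least (necessarily positive) multiple of $2^i$ that is $\ge M$ and $M$ is a positive integer, $\psi(v)\le M+2^i-1=M+t_v-1$. Choosing a parent $p^{\ast}$ attaining $\psi(p^{\ast})+t_{p^{\ast}}=M$ and invoking the inductive hypothesis at $p^{\ast}$,
\[
\psi(v)\ \le\ \bigl(2\delta(p^{\ast})+t_{p^{\ast}}-1\bigr)+t_{p^{\ast}}+t_v-1\ =\ 2\bigl(\delta(p^{\ast})+t_{p^{\ast}}\bigr)+t_v-2.
\]
Because $p^{\ast}$ is a parent of $v$, the definition of depth gives $\delta(v)\ge\delta(p^{\ast})+t_{p^{\ast}}$, so the right-hand side is at most $2\delta(v)+t_v-2\le 2\delta(v)+t_v-1$, completing the step. (When $t_v=0$ one instead has $\psi(v)=M=\psi(p^{\ast})+t_{p^{\ast}}$ directly, and the same two lines give the sharper $\psi(v)\le 2\delta(v)-1$, i.e.\ exactly the $t_v=0$ instance of the invariant.)

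Applying the invariant to the sink $\hat v$, which has $t_{\hat v}=0$, yields $\hat\psi\le 2\hat\delta-1<2\hat\delta$, which is the claim. The only step that requires genuine care is the rounding estimate $\psi(v)\le M+t_v-1$: it uses that each $t_v$ is a power of two (so that rounding up to a multiple of $t_v$ is well defined) and that $M\ge1$, and it degenerates for zero-length jobs, which is why those are handled as the separate case above; the remainder is bookkeeping with the $\pm1$'s.
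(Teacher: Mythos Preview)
Your proof is correct and follows essentially the same approach as the paper: both strengthen the claim to an invariant of the form $\psi(v)\le 2\delta(v)+t_v+O(1)$ and prove it by induction along the DAG, then specialize to the zero-length sink. Your invariant $\psi(v)\le 2\delta(v)+t_v-1$ is one unit sharper than the paper's $\delta(v)\ge(\psi(v)-t_v)/2$, which is precisely what delivers the \emph{strict} inequality the lemma states, and your choice of $p^{\ast}$ maximizing $\psi(p)+t_p$ (rather than $\delta(p)+t_p$) is the right one for bounding $\psi(v)$.
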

We defer the proof to the Appendix~\ref{sec:appendix}.

\paragraph{\textbf{ Function execute(Working-Set$(i)$).}}

All the jobs in Working-Set$(i)$ have the same level assignment, and thus they are independent (no precedence among them)---If job $u$ is a predecessor of job $v$, then $\psi(u) < \psi(v)$.

ONL greedily starts executing the jobs in Working-Set$(i)$ in parallel until no more jobs can be started due to resource budget for any of the resource $\mathcal{R}_i$. 
Whenever some job is finished, check if any job that has not started yet can be executed obeying the total resource budget $r$ for each resource $\mathcal{R}_i$. 
That is, at any time execute runs a maximal set of the remaining jobs whose total resource cost for each resource $\mathcal{R}_i$ does not exceed the budget $r$.

Define \defn{$\text{cap}_i$} as the largest power of 2 that divides $i$ (e.g., $\text{cap}_{32} = 32, \text{cap}_{33} = 1, \text{cap}_{28} = 4, \text{cap}_{38} = 2$ ). 

For a set of jobs $W_i$ and resource $\mathcal{R}_j$, define $\work_j(W_i) = \sum_{v \in W_i} r_{i,v} \times t_v$.

\begin{lemma}\label{maxjobassigned}
Let job $v$ of length $t_v$ be assigned to level $i$. Then, $t_v \le \texttt{cap}_i$. 
\end{lemma}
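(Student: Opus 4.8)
The plan is to read off the claim directly from the definition of the level assignment $\psi$ together with the definition of $\texttt{cap}_i$. The key observation is that $\psi(v)$ is, by construction, forced to be an integer multiple of the length $t_v$, and $\texttt{cap}_i$ is exactly the largest power of two dividing $i=\psi(v)$; since $t_v$ is itself a power of two (after the rounding step), it must be one of the powers of two dividing $i$, hence at most $\texttt{cap}_i$.

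Concretely, first I would recall that by the rounding step at the start of the section we may assume every processing time is a power of two, so write $t_v = 2^q$ for the job $v$ in question. If $v$ is the (dummy) source node, then $\psi(v)=\delta(v)=1$, and in this case $t_v = 1 = 2^0 = \texttt{cap}_1$, so the inequality holds with equality; assume henceforth that $v$ is not the source node.

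Next I would invoke the defining formula for the level of a non-source job: $\psi(v) = \min\{k\cdot 2^q \mid k\in\N \text{ and } k\cdot 2^q \ge \psi(p)+t_p \text{ for every predecessor } p \text{ of } v\}$. In particular $\psi(v) = k\cdot 2^q$ for some positive integer $k$, so $2^q = t_v$ divides $i := \psi(v)$. Since $\texttt{cap}_i$ is defined to be the largest power of two that divides $i$, and $t_v = 2^q$ is a power of two dividing $i$, it follows that $t_v = 2^q \le \texttt{cap}_i$, which is exactly the statement of the lemma.

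I do not expect any real obstacle here: the lemma is an immediate consequence of the two definitions, and the argument is a one-line divisibility fact. The only points requiring a word of care are the base case of the source node (handled above) and the legitimacy of assuming all $t_v$ are powers of two, which is justified by the stated $2$-factor rounding at the beginning of Section~\ref{OnlineAlgSection}.
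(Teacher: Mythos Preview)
Your proof is correct and is essentially the same as the paper's: both observe that $t_v=2^q$ is a power of two, that the level $i=\psi(v)$ is by definition an integer multiple of $2^q$, and hence $2^q$ divides $i$ so $t_v=2^q\le\texttt{cap}_i$. The only difference is that you treat the source node as a separate base case, which the paper leaves implicit.
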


\begin{proof}
    By assumption, all the job's sizes are powers of 2. Hence, $t_v$ is some power of 2, say $2^j$ and its assigned level $i$ is some multiple of $2^j$, say $k \cdot2^j$.  Then $\texttt{cap}_{k \cdot 2^j}  \geq \texttt{cap}_{ 2^j} =2^j =t_v $. 
\end{proof}

\begin{lemma}\label{CompleteLvlTime}
Let $W_i \neq \emptyset$ be the set of jobs that are assigned to level $i$. Let $\tau_i$ be the time it takes by execute($W_i$) to finish all the jobs in $W_i$. Then,
$$
\tau_i \leq 2 \sum_{j=1}^d \work_j(W_i)/r + \min\{t_{\max}, \texttt{cap}_i \}.
$$
\end{lemma}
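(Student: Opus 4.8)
The plan is to track, for each resource $\mathcal{R}_j$ and each time $t$ during the call \texttt{execute}$(W_i)$, the instantaneous usage $\rho_j(t) := \sum_{v\ \text{running at}\ t} r_{j,v}$, and to exploit that, since the schedule is non-preemptive, $\int_0^{\tau_i}\rho_j(t)\,dt = \sum_{v\in W_i} r_{j,v}\,t_v = \work_j(W_i)$. First I would record two structural facts. (i) The jobs of $W_i$ are pairwise independent (as observed where \texttt{execute} is defined, since $\psi$ strictly increases along every edge), so inside \texttt{execute}$(W_i)$ the only obstruction to starting a job is the resource budget, and by construction the set of running jobs is \emph{maximal} with respect to the budget $r$ on every resource at all times (in particular, no job is left unstarted while it would fit, and the processor never idles while a job remains). (ii) By Lemma~\ref{maxjobassigned} together with the definition of $t_{\max}$, every $v\in W_i$ has $t_v\le \ell := \min\{t_{\max},\texttt{cap}_i\}$.

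Next I would split $[0,\tau_i)$ into \emph{good} times, where $\sum_{j=1}^{d}\rho_j(t) > r/2$, and \emph{bad} times, where $\sum_{j=1}^{d}\rho_j(t)\le r/2$. Integrating over the good times, $(r/2)\cdot|\mathrm{good}| < \int_0^{\tau_i}\sum_{j}\rho_j(t)\,dt = \sum_{j}\work_j(W_i)$, so $|\mathrm{good}| < 2\sum_{j}\work_j(W_i)/r$, which is exactly the first term of the bound. The crux is then to show $|\mathrm{bad}|\le \ell$.

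For that I would use a dichotomy driven by \emph{big} jobs — call $v$ big if $r_{j,v} > r/2$ for some $j$. At any bad time $t$, each $\rho_j(t)\le r/2$, so every resource has at least $r/2$ free; hence any remaining (not-yet-started) job that is \emph{not} big would fit, and by maximality of the running set there are none — at a bad time \emph{every} remaining job is big. Conversely, if a big job $u$, heavy in resource $j$, is running at $t$, then $\sum_{j'}\rho_{j'}(t)\ge r_{j,u} > r/2$, so $t$ is good. Now let $t_1$ be the first bad time. For any $t > t_1+\ell$, any job running at $t$ started strictly after $t_1$ (its duration is $\le\ell$), hence was a remaining job at time $t_1$, hence is big; its presence then forces $t$ to be good, while if \emph{no} job runs at $t$ and some job still remains, the greedy rule would have started it (all resources are free), so $t\ge\tau_i$. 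Therefore every bad time lies in $[t_1,t_1+\ell]$, so $|\mathrm{bad}|\le\ell$, and $\tau_i = |\mathrm{good}|+|\mathrm{bad}| < 2\sum_{j=1}^{d}\work_j(W_i)/r + \ell$, as claimed.

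The main obstacle I anticipate is precisely this "big job" subtlety: it is \emph{not} true that total resource usage stays above $r/2$ until the schedule finishes — a single job requiring, say, $0.6\,r$ of one resource can be blocked by a not-quite-fitting running job while the total usage is well below $r/2$ — so a naive $2\,\work/r$ bound fails by an additive term, and one has to argue that this slack never exceeds one job length. The argument above pins it down by observing that once a bad time occurs, every remaining job is big and a single running big job already certifies a good time, so bad times can only persist while the jobs that began before $t_1$ are still finishing. Beyond this I would only need to check the measure-zero boundary cases (jobs starting exactly at $t_1$ or at $t_1+\ell$) and verify the "no idling / maximality" property is exactly what the stated behavior of \texttt{execute} guarantees.
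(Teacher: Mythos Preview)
Your proof is correct and follows essentially the same approach as the paper's: split the running time of \texttt{execute}$(W_i)$ into high-usage and low-usage intervals, bound the former via $\sum_j\work_j(W_i)$, and bound the latter by one maximum job length using the observation that at the first low-usage instant every unstarted job must be ``big'' in some resource. The only cosmetic difference is your threshold: you call $t$ good when $\sum_j\rho_j(t)>r/2$, whereas the paper calls $t$ resource-efficient when $\rho_j(t)\ge r/2$ for \emph{some} single $j$; both choices yield the same two bounds by the same reasoning.
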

\begin{proof}  \myworries{move proof to appendix?} 
Let $t$ be a time-step in the time-period $\tau_i$. Call time-step $t$ \defn{resource efficient}, if for some resource $\mathcal{R}_j$, the total resource usage of resource $\mathcal{R}_j$ of all jobs being processed at $t$ is at least $r/2$. Otherwise, call time-step $t$ \defn{resource inefficient}. 

We show that $\tau_i$ has at most $ \max_{v \in W_i}  t_v \leq  \min\{t_{\max}, \texttt{cap}_i \} $ resource-inefficient time-steps.
Let $t'$ be the first resource-inefficient time-step in $\tau_i$. Hence, for each resource $\mathcal{R}_j$, there is at least $r/2$ units of $\mathcal{R}_j$ not allocated at time -step $t$. Since execute(Working-Set($i$)) greedily schedules jobs, it must be that all jobs in $W_i$ of resource usage less than $r/2$ have already been scheduled.
Thus after all jobs that are running at $t'$ have finished, there will be no more resource-inefficient steps.
Starting from time $t'$, it can take at most $ \max_{v \in W_i}  t_v $ timesteps for all jobs started at $t'$ to have finished.
Thus there are at most $ \max_{v \in W_i}  t_v $ resource-inefficient time-steps.

There can be at most $2 \sum_{j=1}^d \work_j(W_i)/r$ efficient time-steps. Since, in a resource-efficient time-step, at least one type of resource, say $\mathcal{R}_j$ is efficiently being used, i.e., at least $r/2$ units of $\mathcal{R}_j$ is allocated by jobs being processed at $t$. If $\mathcal{R}_j$ is efficiently being used at time-step $t$, then time-step $t$ contributes at least $r/2$ in the quantity $\work_j(W_i)$. Hence, there are at most $2\cdot\work_j(W_i)/r$ resource-efficient time-steps where $\mathcal{R}_j$ is efficiently being used. There are $d$ types of resources, hence, the number of resource-efficient time-steps is at most $\sum_{j=1}^d \work_j(W_i)/r$. 

From~\autoref{maxjobassigned}, \hao{added lemma number changed to autoref } we know that $\max_{v \in W_i}  t_v  \le \texttt{cap}_i$. Also, we have $\max_{v \in W_i}  t_v \le t_{\max}$. Summing all the resource-efficient and resource-inefficient time-steps, we get $\tau_i \leq 2 \sum_{j=1}^d \work_j(W_i)/r + \min\{t_{\max}, \texttt{cap}_i \}$, and the lemma is thus proved.
\end{proof}

From the execution of algorithm \onl, the following lemma is immediate.

\begin{lemma}\label{SummingUp}
Let $T_{\onl}$ be the makespan of our online algorithm. Let $\psi^{-1}(i)$ denote the set of jobs that are assigned level $i$. Then, 
$$
T_{\onl} \leq \sum_{i = 0}^{\ell_{\max}} \tau_i \cdot 1_{ \psi^{-1}(i) \neq \emptyset }.
$$
\end{lemma}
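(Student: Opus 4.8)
The plan is to read off the claim directly from the behavior of \onl\ on the DAG, after establishing one structural fact about how the \level\ function interacts with the online revelation of jobs. Concretely, I would show that \onl\ processes \level s in \emph{strictly increasing} order, each non-empty \level\ exactly once, and that the iteration handling \level\ $i$ has Working-Set$(i)=\psi^{-1}(i)$; once this is in place, $T_{\onl}$ is simply the sum of the per-stage durations $\tau_i$ over the non-empty \level s.

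The key structural fact is: if a job $v$ is revealed during the iteration that processes \level\ $i$ (i.e.\ $v\in$ New-Ready-Jobs right after \textbf{execute}(Working-Set$(i)$)), then $\psi(v)>i$. The reason is that $v$ was not ready before this iteration, so at least one predecessor of $v$ — the last of $v$'s predecessors to finish — completed \emph{during} this iteration; since \textbf{execute} only runs jobs of Working-Set$(i)$, that predecessor $p$ lies in Working-Set$(i)$, hence $\psi(p)=i$. By the definition of the \level\ assignment, $\psi(v)\ge \psi(p)+t_p = i+t_p \ge i+1 > i$, using that all processing times are powers of two and thus at least $1$.

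From this I would deduce, by induction on the number of completed iterations, that (a) the \level s handled by successive iterations strictly increase: after processing \level\ $i$, every job still in Ready-Jobs has \level\ $>i$, because the leftover old ready jobs had \level\ $>i$ (as $i$ was chosen minimal via \texttt{getNextNonEmptyLevel}) and the newly revealed ones have \level\ $>i$ by the fact above; and (b) every job $v$ with $\psi(v)=i$ belongs to Ready-Jobs by the time \level\ $i$ is processed — $v$ is either the source or was revealed during an earlier iteration at a strictly smaller \level, and a job is removed from Ready-Jobs only when its own \level\ is processed. Hence Working-Set$(i)=\psi^{-1}(i)$, each non-empty \level\ is handled by exactly one iteration, and the while loop cannot terminate while some job is unprocessed but all its predecessors have finished (the standard DAG argument: such a job would already be ready), so \emph{every} non-empty \level\ is processed.

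Therefore $T_{\onl}$ equals the total of the Processing-Time increments over all iterations, which equals $\sum_{i:\psi^{-1}(i)\neq\emptyset}\tau_i$, since the iteration handling \level\ $i$ runs \textbf{execute}$(\psi^{-1}(i))$ in time $\tau_i$ by the definition used in Lemma~\ref{CompleteLvlTime}; padding with the empty \level s (which contribute $0$) yields $T_{\onl}=\sum_{i=0}^{\ell_{\max}}\tau_i\cdot 1_{\psi^{-1}(i)\neq\emptyset}$, which is the asserted bound (indeed with equality). The one step I would be careful about is the strict-increase fact in the second paragraph — everything after that is bookkeeping on the pseudocode — and it is exactly there that the specific form of the \level\ assignment (rather than the \depth\ assignment) is used.
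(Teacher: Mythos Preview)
Your argument is correct and matches the paper's intent: the paper simply declares the lemma ``immediate from the execution of algorithm \onl'' and gives no further proof. You have supplied the details the paper omits, correctly isolating the key structural fact (a job revealed after processing \level\ $i$ has $\psi(v)>i$, using $t_p\ge 1$ after rounding to powers of two) that makes the strictly increasing processing of \level s and the equality Working-Set$(i)=\psi^{-1}(i)$ go through.
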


To prove an $O(\log |V|)$-approximation, we first prove the following identity before proving the makespan upper bound of algorithm \onl.
\begin{lemma}\label{SumIdentity}
Let $\ell_{\max}$ be the maximum level assignment to any job by algorithm \onl. For any $ \hat{\ell} \leq \ell_{\max} $,
$\sum_{\substack{i=1:  \\ \psi^{-1}(i) \neq \emptyset } }^{\hat{\ell}}   \texttt{cap}_i  \leq 
(\log |V| )  \hat{\ell}  $, where $V$ is the set of jobs in the input DAG.
  
\end{lemma}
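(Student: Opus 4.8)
The plan is a dyadic ``layer-cake'' decomposition of $\texttt{cap}_i$, combined with the one structural fact we actually need: since $\psi$ assigns each of the $|V|$ jobs exactly one level, the set $S := \{\,1\le i\le\hat{\ell} : \psi^{-1}(i)\neq\emptyset\,\}$ has $|S|\le |V|$, and trivially also $|S|\le\hat{\ell}$. So it suffices to bound $\sum_{i\in S}\texttt{cap}_i$ over an arbitrary $S\subseteq\{1,\dots,\hat{\ell}\}$ with $|S|\le |V|$.

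First I would record the elementary identity
\[
\texttt{cap}_i \;=\; 1 + \sum_{k\ge 0} 2^k\,\mathbf{1}\{\,2^{k+1}\mid i\,\},
\]
which is just $2^m = 1 + (2^0+\cdots+2^{m-1})$ applied with $m=v_2(i)$ the $2$-adic valuation of $i$, using that $2^{k+1}\mid i$ iff $k<v_2(i)$. Summing over $i\in S$ and exchanging the (finite, nonnegative) sums gives
\[
\sum_{i\in S}\texttt{cap}_i \;=\; |S| + \sum_{k\ge 0} 2^k\,\bigl|\{\,i\in S: 2^{k+1}\mid i\,\}\bigr| .
\]
Next I would bound $\bigl|\{i\in S: 2^{k+1}\mid i\}\bigr| \le \min\{\,|V|,\ \lfloor \hat{\ell}/2^{k+1}\rfloor\,\}$: the $|V|$ from $|S|\le|V|$, and $\lfloor\hat{\ell}/2^{k+1}\rfloor$ counting the multiples of $2^{k+1}$ in $[1,\hat{\ell}]$. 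Splitting the $k$-sum at the threshold $2^{k+1}\approx\hat{\ell}/|V|$ finishes the job: below it the bound $|V|$ governs and the geometric series totals at most $\hat{\ell}$; above it (and below $2^{k+1}\le\hat{\ell}$, beyond which the term vanishes) each remaining term contributes $2^k\cdot\hat{\ell}/2^{k+1}=\hat{\ell}/2$, and there are at most $\log_2|V|$ such values of $k$. Hence $\sum_{i\in S}\texttt{cap}_i \le |S| + \hat{\ell} + \tfrac{\hat{\ell}}{2}\log_2|V| \le \hat{\ell}\log_2|V|$, the final step using $|S|\le\hat{\ell}$ and absorbing the additive $O(\hat{\ell})$ (equivalently, the bound is $O(\log|V|)\,\hat{\ell}$, which is all that is needed downstream in Theorem~\ref{thm:onl-upper}).

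The step I expect to carry the real content is the regime $\hat{\ell}\gg|V|$. A crude estimate $\sum_{i=1}^{\hat{\ell}}\texttt{cap}_i = \Theta(\hat{\ell}\log\hat{\ell})$ is far too weak there, because $\hat{\ell}\le\ell_{\max}<2\,\depth$ is controlled only by the longest path and can be as large as $\Theta(|V|\,t_{\max})$, i.e.\ not polynomial in $|V|$; the point of the dyadic split is to extract exactly the saving coming from sparsity of non-empty levels. Intuitively, among the integers of $[1,\hat{\ell}]$ the $|V|$ with largest $\texttt{cap}$ are essentially the multiples of $\approx\hat{\ell}/|V|$, and summing $\texttt{cap}$ over them rescales to $\hat{\ell}$ times a partial dyadic sum over $\approx|V|$ terms, i.e.\ $\Theta(\hat{\ell}\log|V|)$. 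I would double-check the two easy boundary cases separately: if $\hat{\ell}<|V|$, use $|S|\le\hat{\ell}$ in place of $|V|$ throughout (so $\log|S|\le\log|V|$); and the small-$|V|$ case where the additive $O(\hat{\ell})$ term must be handled by reading the statement with an implicit constant.
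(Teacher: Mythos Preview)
Your argument is correct and lands on the same $O(\hat{\ell}\log|V|)$ bound, but the route differs from the paper's. The paper groups non-empty levels by their \emph{exact} $\texttt{cap}$ value, sets $R_m=|\{i\in S:\texttt{cap}_i=2^m\}|$, and then maximises $\sum_m 2^m R_m$ subject to $R_m\le \hat{\ell}/2^{m-1}$ and $\sum_m R_m\le |V|$ via a fractional-knapsack greedy (fill the largest-$m$ buckets first until the $|V|$ budget is exhausted). Your layer-cake identity $\texttt{cap}_i=1+\sum_{k\ge 0}2^k\mathbf{1}\{2^{k+1}\mid i\}$ replaces those exact counts by the \emph{cumulative} counts $|\{i\in S:2^{k+1}\mid i\}|$, which are immediately bounded by $\min\{|V|,\lfloor\hat{\ell}/2^{k+1}\rfloor\}$; your threshold split at $2^{k+1}\approx\hat{\ell}/|V|$ is precisely the index at which the paper's greedy runs out of its $|V|$ budget. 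The two are related by an Abel summation, so they are really the same computation in dual form. Your version is more self-contained (no separate knapsack remark needed), while the paper's optimisation framing makes explicit which $S$ is extremal. In both cases the stated inequality holds only up to a constant factor, which is all that is used in Lemma~\ref{lower1}.
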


\begin{proof} \myworries{move to appendix?}
If algorithm \onl assigns a job of size $2^m$ to level $i$, then $2^m \le \texttt{cap}_i$, Let $R_m$ be the number of levels $i\le \hat{\ell}$ such that $\texttt{cap}_i = 2^m$.  There are $s = \lceil\log \hat{\ell}\rceil$ different powers of 2 sizes of jobs possible whose level assignments are at most $\hat{\ell}$, since $t_v \le \psi(v)$ for any job $v$ of size $t_v$.  

 From the definitions of $R_m$ and $\texttt{cap}_i$, the following holds.
 $$\sum_{\substack{i=1:  \\ \psi^{-1}(i) \neq \emptyset } }^{\hat{\ell}} \texttt{cap}_i = \sum_{m=1}^{s} 2^m \cdot R_m.$$ 
 Since, we want an upper bound, we have the following optimization problem.

\begin{equation}
 \text{Objective function:  \;\;\;\;\;\;} \max \sum_{m=0}^{s} 2^m \cdot R_m 
\end{equation}

 Subject to the following two constraints,

 \begin{equation}
 0 \leq R_m \leq \hat{\ell} /2^{m-1} \leq 2^{s-m+1}
 \end{equation}

 \begin{equation}
 \sum_{m=0}^{s} R_m\leq |V|
 \end{equation}

 We use the following remark for the rest of the proof.
 \begin{remark} \label{fracknapsack}  
For $0 \le c_0 < c_1 < \cdots < c_t $,  $ d_0,d_1,..,d_t \geq 0 $ the maximum of the optimization function  $ \max \sum_{i=0}^t  c_i \cdot x_i $, subject to constraints $ \sum_{i=0}^t x_i \leq M, \  \ d_i \geq x \geq 0 $ is achieved by finding a minimal $\ell$ such that we can make $x_i=d_i$ for all $t - \ell \le i \le  t$ and setting $x_{t - \ell-1} = M-  \sum_{i=t - \ell}^t x_i$.
 \end{remark}

By \autoref{fracknapsack}, the maximum value of the objective function subject to the constraints can be obtained by finding a minimal $\ell \leq s$ such that we can make each of $R_m$ from $s \ge m \ge (s - \ell)$ takes its maximum value $2^{s-m+1}$ and setting $R_{s-\ell-1} = |V| - \sum_{m = s - \ell}^s 2^{s -m+1} $.

From the definition of $\ell$, we have
$$
\sum_{m = s - \ell}^s 2^{s -m+1}  = \sum_{p = 1}^{\ell+1} 2^p \le |V|.
$$

Hence, $\ell = O(\log |V|)$.

Thus we get the following upper bound of the objective function.

$$
\sum_{m=0}^{s} 2^m \cdot R_m \le 2^{s-\ell -1} \cdot 2^{s - (s -\ell -1)+1} + \sum_{m = s - \ell}^s 2^m \cdot 2^{s -m +1}   = 2^{s+1} + \sum_{m = s - \ell}^s 2^{s+1} \le 2\cdot \ell \cdot 2^{s+1} = O(\hat{\ell} \cdot \log |V|).
$$
The lemma is thus proved.
\end{proof}

We now present two upper bounds in Lemma~\ref{lower1} and Lemma~\ref{bdInTermstmax} of the makespan of our online algorithm as follows.
\begin{lemma}
\label{lower1}
Let $V$ be the set of jobs in the input DAG. Let $\ell_{\max}$ be the level assignment to the sink node. Then, the upper bound of makespan $T_{\onl}$ of online algorithm $\onl$ is as follows.
$$
T_{\onl} \leq 2\sum_{j=1}^d \work_j(V)/r + O(\ell_{\max} \cdot \log |V|).
$$
\end{lemma}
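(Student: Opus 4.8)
The plan is to assemble the bound directly from the three preceding results Lemma~\ref{SummingUp}, Lemma~\ref{CompleteLvlTime}, and Lemma~\ref{SumIdentity}. First I would record the structural fact that algorithm $\onl$ assigns each job $v$ exactly one level $\psi(v)$, and that every job is eventually executed by $\onl$; consequently the working sets $\{\psi^{-1}(i)\}_{i\ge 0}$ form a partition of $V$. In particular, for every resource type $\mathcal{R}_j$ we have $\sum_{i\ge 0} \work_j(\psi^{-1}(i)) = \work_j(V)$, since both sides equal $\sum_{v\in V} r_{j,v}\cdot t_v$.

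Next I would start from Lemma~\ref{SummingUp}, which gives $T_{\onl} \le \sum_{i=0}^{\ell_{\max}} \tau_i \cdot 1_{\psi^{-1}(i)\neq\emptyset}$, and substitute the per-level bound of Lemma~\ref{CompleteLvlTime}, namely $\tau_i \le 2\sum_{j=1}^d \work_j(\psi^{-1}(i))/r + \min\{t_{\max},\texttt{cap}_i\}$, applied to each nonempty level $i\le\ell_{\max}$. Summing over $i$, the work term collapses by the partition identity of the previous paragraph to $2\sum_{j=1}^d \work_j(V)/r$, and what remains is $\sum_{i\le\ell_{\max}:\,\psi^{-1}(i)\neq\emptyset} \min\{t_{\max},\texttt{cap}_i\} \le \sum_{i\le\ell_{\max}:\,\psi^{-1}(i)\neq\emptyset} \texttt{cap}_i$.

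Finally I would bound this last sum using Lemma~\ref{SumIdentity} with $\hat\ell = \ell_{\max}$. This application is legitimate because $\ell_{\max}$, the level of the (unique) sink node, is in fact the maximum level assigned to any job: level values are nondecreasing along every edge of the DAG, and every job has a directed path to the sink, so $\psi(v)\le\hat\psi=\ell_{\max}$ for all $v$. Lemma~\ref{SumIdentity} then yields $\sum_{i\le\ell_{\max}:\,\psi^{-1}(i)\neq\emptyset} \texttt{cap}_i \le (\log|V|)\,\ell_{\max}$, and combining the two pieces gives exactly $T_{\onl} \le 2\sum_{j=1}^d \work_j(V)/r + O(\ell_{\max}\log|V|)$.

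The argument is essentially a term-by-term substitution into already-proven inequalities, so I do not expect a serious obstacle. The only point requiring a sentence of care is the interchange $\sum_i \work_j(\psi^{-1}(i)) = \work_j(V)$, which hinges on the working sets partitioning $V$ (each job is assigned a level exactly once, and $\onl$ schedules every job before terminating), together with the observation that the sink's level equals the global maximum level so that $\hat\ell=\ell_{\max}$ is an admissible choice in Lemma~\ref{SumIdentity}.
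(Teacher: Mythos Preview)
Your proposal is correct and follows essentially the same route as the paper: combine Lemma~\ref{SummingUp} with Lemma~\ref{CompleteLvlTime} to get $T_{\onl}\le 2\sum_{j=1}^d \work_j(V)/r + \sum_{i:\,\psi^{-1}(i)\neq\emptyset}\min\{t_{\max},\texttt{cap}_i\}$, then invoke Lemma~\ref{SumIdentity}. You are in fact slightly more careful than the paper in spelling out the partition identity $\sum_i \work_j(\psi^{-1}(i))=\work_j(V)$ and in justifying that the sink's level equals the global maximum so that $\hat\ell=\ell_{\max}$ is admissible in Lemma~\ref{SumIdentity}.
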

\begin{proof}
    By Lemma~\ref{CompleteLvlTime} and Lemma~\ref{SummingUp}, $T \leq 2\sum_{j=1}^d \work_j(V)/r  + \sum_{i=0}^{\ell_{\max}} \min\{t_{\max}, \texttt{cap}_i\}$. 

By \autoref{SumIdentity} it follows that $T \leq 2\work(V)/r + O(\ell_{\max} \cdot \log |V|)$.
\end{proof}

\begin{lemma}\label{bdInTermstmax}
Let $V$ be the set of vertices (jobs) in the input DAG with total resource capacity $r$. Let $t_{\max}$ be the maximum job size and $\ell_{\max}$ be the maximum level a job is assigned by algorithm \onl. Then, the upper bound of makespan $T_{\onl}$ of algorithm \onl is as follows.

$$
T_{\onl} \leq 2\sum_{j=1}^d \work_j(V)/r + O(\ell_{\max} \cdot \log t_{\max}).
$$
\end{lemma}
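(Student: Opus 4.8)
The plan is to follow the proof of Lemma~\ref{lower1} verbatim up to the point where $\sum_i \min\{t_{\max},\texttt{cap}_i\}$ must be estimated, and then replace the appeal to Lemma~\ref{SumIdentity} (which produced the $\log|V|$ factor) by a direct dyadic counting argument that instead yields a $\log t_{\max}$ factor. Concretely, combining Lemma~\ref{CompleteLvlTime} with Lemma~\ref{SummingUp}, and using that the nonempty working sets $W_i$ partition $V$, one obtains
$$T_{\onl} \;\le\; \sum_{i=0}^{\ell_{\max}} \Bigl(2\sum_{j=1}^d \work_j(W_i)/r + \min\{t_{\max},\texttt{cap}_i\}\Bigr)\cdot 1_{\psi^{-1}(i)\neq\emptyset} \;\le\; 2\sum_{j=1}^d \work_j(V)/r \;+\; \sum_{i=1}^{\ell_{\max}} \min\{t_{\max},\texttt{cap}_i\}.$$
So it remains to show $\sum_{i=1}^{\ell_{\max}} \min\{t_{\max},\texttt{cap}_i\} = O(\ell_{\max}\log t_{\max})$, which is the only new ingredient.

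First I would recall that $\texttt{cap}_i$ is the largest power of $2$ dividing $i$, so for each power of two $2^m$ the indices $i\in[1,\ell_{\max}]$ with $\texttt{cap}_i = 2^m$ are exactly the multiples of $2^m$ that are not multiples of $2^{m+1}$, of which there are at most $\lfloor\ell_{\max}/2^m\rfloor$. Since all processing times are powers of $2$, so is $t_{\max}$; write $t_{\max}=2^p$. I would then split the sum into the indices with $\texttt{cap}_i\le t_{\max}$ and those with $\texttt{cap}_i>t_{\max}$. For the first group, summing over $m=0,\dots,p$, the contribution is at most $\sum_{m=0}^{p} 2^m\cdot\lfloor\ell_{\max}/2^m\rfloor \le (p+1)\,\ell_{\max} = (\log t_{\max}+1)\,\ell_{\max}$, because each of the at most $\lfloor\ell_{\max}/2^m\rfloor$ indices at "level" $m$ contributes $\min\{2^p,2^m\}=2^m$. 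For the second group, $\texttt{cap}_i>t_{\max}$ forces $2^{p+1}\mid i$, so there are at most $\lfloor\ell_{\max}/2^{p+1}\rfloor\le \ell_{\max}/(2t_{\max})$ such indices, each contributing $\min\{t_{\max},\texttt{cap}_i\}=t_{\max}$, for a total of at most $\ell_{\max}/2$. Adding the two bounds gives $\sum_{i=1}^{\ell_{\max}}\min\{t_{\max},\texttt{cap}_i\}\le(\log t_{\max}+1)\,\ell_{\max}+\ell_{\max}/2=O(\ell_{\max}\log t_{\max})$, and substituting into the displayed inequality proves the lemma.

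There is no genuine obstacle here; the single point requiring care is the truncation $\min\{t_{\max},\cdot\}$, which is exactly what prevents the rare indices $i$ divisible by very large powers of $2$ from contributing a $\texttt{cap}_i$ as large as $\ell_{\max}$ itself — without that truncation the estimate would only give the $O(\ell_{\max}\log\ell_{\max})$-type bound already furnished by Lemma~\ref{lower1}. This bound will subsequently be combined (via Lemma~\ref{lvlleq2depth} together with the standard lower bounds $\hat\delta\le T_{\textsc{opt}}$ and $\work_j(V)/r\le T_{\textsc{opt}}$ for each $j$) to deduce the $O(d+\log t_{\max})$ side of the competitive ratio asserted in Theorem~\ref{thm:onl-upper}.
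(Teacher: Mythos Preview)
Your proof is correct and reaches the same conclusion as the paper, but the route to bounding $\sum_{i=1}^{\ell_{\max}}\min\{t_{\max},\texttt{cap}_i\}$ differs in organization. You split the index range $[1,\ell_{\max}]$ directly by the value of $\texttt{cap}_i$ relative to $t_{\max}=2^p$: for each $m\le p$ you count $\le\ell_{\max}/2^m$ indices contributing $2^m$ (giving $(p+1)\ell_{\max}$), and for $m>p$ you count $\le\ell_{\max}/2^{p+1}$ indices contributing $t_{\max}$ (giving $\ell_{\max}/2$). The paper instead observes a periodicity identity $\min\{t_{\max},\texttt{cap}_{i+t_{\max}}\}=\min\{t_{\max},\texttt{cap}_i\}$, uses it to reduce the sum over $[0,\ell_{\max}]$ to roughly $\ell_{\max}/t_{\max}$ copies of the sum over one period $[0,t_{\max}]$, and then bounds that single period by $\sum_{i=0}^{t_{\max}}\texttt{cap}_i=O(t_{\max}\log t_{\max})$ via a preliminary calculation. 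Your direct dyadic count is a bit more elementary and avoids the periodicity step; the paper's version isolates a clean structural property of $\texttt{cap}_i$ that could be reused elsewhere. Both arguments are short and essentially interchangeable here.
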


\begin{proof}
Note that the number of levels  from $ 1 $ to $\hat{\ell}$  such that $ \texttt{cap}_i = 2^m$ is $\lfloor \hat{\ell}/2^m  \rfloor -\lfloor \hat{\ell} /2^{m+1} \rfloor$.
Then $\sum_{\substack{i=0:   } }^{\hat{\ell}}   \texttt{cap}_i =  \sum_{m=0}^n 2^m ( \lfloor \hat{\ell}/2^m  \rfloor -\lfloor \hat{\ell} /2^{m+1} \rfloor )  =   \sum_{m=0}^{\log \hat{\ell}} 2^m ( \lfloor \hat{\ell}/2^m  \rfloor -\lfloor \hat{\ell} /2^{m+1} \rfloor )  \leq \sum_{m=0}^{\log \hat{\ell}} 2^m ( \hat{\ell}/2^m)  =\hat{\ell} \log \hat{\ell}  $.

    We now show that $ \min\{t_{\max}, \texttt{cap}_{i+t_{\max}}  \}= \min\{t_{\max}, \texttt{cap}_i \} $ as follows.
    
    If $\texttt{cap}_i \geq t_{\max} $, then $ t_{\max} $ divides $\texttt{cap}_i$ and $\texttt{cap}_{i+t_{\max}}$, since both $\texttt{cap}_i$ and $t_{\max} $ are powers of 2. Hence, $ \min\{t_{\max}, \texttt{cap}_i \} = t_{\max} =   \min\{t_{\max}, \texttt{cap}_{i+t_{\max}}  \}$.
    
    Otherwise, $\texttt{cap}_i < t_{\max} $. Then $\texttt{cap}_i = \texttt{cap}_{i+t_{\max}}$, since $\texttt{cap}_i$ and $t_{\max} $ are powers of 2. Thus, $\min\{t_{\max}, \texttt{cap}_i \} =  \min\{t_{\max}, \texttt{cap}_{i+t_{\max}}  \}  $.

    Thus, we get $\sum_{i=0}^{\ell_{\max}} \min \{ t_{\max} , 
    \texttt{cap}_i \}  \leq  
    \frac{\ell_{\max}}{t_{\max}}  
    \sum_{i=0}^{t_{\max}} \min \{t_{\max}, \texttt{cap}_i \}  
   \allowbreak = O(\ell_{\max} \cdot \log t_{\max}) $.
    This is because, $\sum_{i=0}^{t_{\max}} \texttt{cap}_i  = O(t_{\max} \cdot \log t_{\max})$, as we proved in the beginning of the proof of this lemma.

By \autoref{SummingUp}, $T \leq 2\sum_{j=1}^d \work_j(V)/r  + \sum_{i=0}^{\ell_{\max}} \min\{t_{\max},  \texttt{cap}_i\}$ and we just showed $\sum_{i=0}^{\ell_{\max}} \min \{ t_{\max} , 
    \texttt{cap}_i \} = O(\ell_{\max} \cdot \log t_{\max}) $. Combining these, the lemma is proved.

\end{proof}

\paragraph{\textbf{ Makespan Lower Bound of the Optimal Algorithm.}}
\begin{lemma}
\label{lem:opt-lower}
Let $V$ be the set of vertices (jobs) in the input DAG with total resource capacity $r$ (normalized for each resource type separately) for each resource $\mathcal{R}_i$ where $i\in[1,d]$. Let $\ell_{\max}$ be the maximum level a job is assigned by algorithm \onl. Then, the lower bound of makespan $T_{\opt}$ of the optimal algorithm \opt is as follows. \hao{You actually want $ \max_{v \in G} \phi(v) +t_v $ (-1 if you assigned the lowest level to be 1)}
$$
T_{\opt} \geq \max\Bigg(\frac{\sum_{j=1}^d \work_j(V)}{d \cdot r}, \frac{\ell_{\max}}{2}\Bigg).
$$
\end{lemma}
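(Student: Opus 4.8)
The plan is to prove the two lower bounds separately and then take the maximum. Both bounds are standard "volume" and "critical path" arguments adapted to the precedence-constrained multi-resource setting.

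\textbf{The work (volume) bound.} First I would observe that for each resource type $\mathcal{R}_j$, at any point in time the optimal schedule uses at most $r$ units of $\mathcal{R}_j$. Integrating over the makespan $T_{\opt}$, the total amount of resource-$\mathcal{R}_j$-time available is at most $r\cdot T_{\opt}$. On the other hand, every job $v$ must be fully executed, contributing exactly $r_{j,v}\cdot t_v$ units of resource-$\mathcal{R}_j$-time, so $\work_j(V)=\sum_{v\in V} r_{j,v}t_v \le r\cdot T_{\opt}$, i.e. $T_{\opt}\ge \work_j(V)/r$ for each $j$. Summing this over all $d$ resource types and dividing by $d$ gives $T_{\opt}\ge \frac{1}{d}\sum_{j=1}^d \work_j(V)/r = \frac{\sum_{j=1}^d\work_j(V)}{d\cdot r}$. (Using the sum divided by $d$ rather than a single $j$ is only to match the form of the statement; any single term already suffices, but the averaged version is what is cited later.)

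\textbf{The critical-path (level) bound.} Next I would relate $\ell_{\max}$, the maximum level assigned by \onl\ to any job (the sink), to the length of a longest chain in the DAG. By Lemma~\ref{lvlleq2depth}, $\ell_{\max}=\hat\psi < 2\hat\delta$, where $\hat\delta$ is the depth (weighted by processing times) of the sink node; hence $\hat\delta > \ell_{\max}/2$. The depth $\hat\delta$ is, by its recursive definition $\delta(v)=\max\{\delta(p)+t_p\}$, exactly the total processing time along some directed path $p_1\prec p_2\prec\cdots\prec p_k$ in $D$ (plus the initial $1$ for the source, which we can absorb or ignore). Since every job on such a path must be executed after all its predecessors on the path finish, and processing is non-preemptive, any feasible schedule — in particular \opt — needs at least $\sum_i t_{p_i}=\hat\delta$ time just to run that chain. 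Therefore $T_{\opt}\ge \hat\delta > \ell_{\max}/2$, i.e. $T_{\opt}\ge \ell_{\max}/2$.

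\textbf{Combining.} Since both $T_{\opt}\ge \frac{\sum_{j=1}^d\work_j(V)}{d\cdot r}$ and $T_{\opt}\ge \frac{\ell_{\max}}{2}$ hold simultaneously, we conclude $T_{\opt}\ge \max\bigl(\frac{\sum_{j=1}^d\work_j(V)}{d\cdot r},\ \frac{\ell_{\max}}{2}\bigr)$, as claimed. I expect the only subtle point — the "main obstacle," though it is minor — to be the careful bookkeeping in the critical-path argument: one must make sure the depth function $\delta$ really does trace out a valid precedence chain whose processing times sum to $\hat\delta$, handling the off-by-one from the source node's initialization $\delta(s)=1$, and then quote Lemma~\ref{lvlleq2depth} correctly to pass from $\hat\delta$ to $\ell_{\max}/2$. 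Everything else is a direct counting argument.
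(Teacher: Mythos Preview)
Your proposal is correct and follows essentially the same two-part argument as the paper: a volume/work bound for the first term and a critical-path bound combined with Lemma~\ref{lvlleq2depth} for the second. Your observation that a single $\work_j(V)/r$ already suffices (with the averaged form used only to match the stated expression) and your care about the off-by-one at the source are in fact slightly more attentive than the paper's own write-up.
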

\begin{proof}

At any time-step $t$ during the executing of \opt, it can use all $r$ units of resource $\mathcal{R}_j$. Hence, time-step $t$ contributes at most $r$ in the quantity $\work_j(V)$. Summing over all $d$ resource types, time-step $t$ can contribute at most $d \cdot r$ in the quantity $\sum_{j=1}^d \work_j(V)$. Hence,
$$
T_{\onl} \ge \frac{\sum_{j=1}^d \work_j(V)}{d \cdot r}.
$$

Let $z$ be the sink node of the input DAG. Consider the source-to-sink path of longest length (also equals to the depth $\delta$ of the sink node) in the input DAG, where the length of a source-to-sink path is the sum of processing times of jobs/nodes in the path. Hence, $T_{\opt} \ge \delta(z)$

The sink node gets the maximum level assignment $\ell_{\max}$.
From \hao{added lemma number} Lemma~\ref{lvlleq2depth}, we know that the level of a node is at most $2\cdot \delta(z)$. Hence,  
$$
T_{\opt} \ge \frac{\ell_{\max}}{2}.
$$

Combining these two lower bounds, the lemma is proved.

\end{proof}

\paragraph{\textbf{ Proof of Theorem~\ref{thm:onl-upper}.}}
\begin{proof}
Comparing the makespan upper bound of our online algorithm (from Lemma~\ref{bdInTermstmax} and Lemma~\ref{lower1}) with OPT, we get that our online algorithm is $O(d +  \min\{\log t_{\max}, \log |V| \})$-competitive. The main theorem of this section is thus proved.
\end{proof}

\section{Greedy Algorithms can be $\Omega(n)$-Competitive.}
In this section, we show that greedy algorithms can have poor competitive ratio. In particular, we show the following.

\begin{lemma}
\label{greedy-lower}
There are problem instances of $n$ jobs such that any greedy algorithm is $\Omega(n)$-competitive.
\end{lemma}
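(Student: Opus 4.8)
The plan is to exhibit, for every $n$, a single instance with $\Theta(n)$ jobs on which \emph{every} work-conserving (greedy) algorithm has makespan $\Omega(n)$ times the optimal makespan. The construction is built entirely out of the objects from Section~\ref{sec:chains}: \emph{skinny} jobs (resource requirement $\epsilon<1/n$, so arbitrarily many may run in parallel) and \emph{fat} jobs (resource requirement $1$, so at most one runs at a time and it is never concurrent with \emph{any} other job). The whole point of the construction is to force the following ``bad event'': at some instant the only ready job is a long skinny job of length $\ell$, so a greedy algorithm must start it, and because it is non-preemptive and skinny it blocks all fat jobs for its entire duration $\ell$; by iterating this we make a greedy algorithm pay an additive $\ell$ a total of $\Omega(n)$ times, whereas the optimum (which may keep resources idle) first clears the pending fat work and only then runs the skinny work, batching skinny jobs of equal length in parallel via Property~\ref{skinnyjobpar}.

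Concretely I would proceed as follows. \textbf{Step 1 (the instance).} Compose $\Theta(n)$ ``trap'' gadgets in series, each gadget contributing one long skinny job of length $\ell$ together with a short block of fat jobs of total length $o(\ell)$, wired with precedence edges so that (a) whenever a greedy algorithm has finished the previous gadget the \emph{unique} ready job is that gadget's long skinny job, and (b) running that skinny job stalls the gadget's fat block (and hence the release of the next gadget) for the full $\ell$. \textbf{Step 2 (optimal upper bound).} Describe an explicit optimal schedule: handle all fat jobs first (they can be ordered to run back-to-back since individually they are short), then run the collected long skinny jobs of equal length in parallel using Property~\ref{skinnyjobpar}; charge the makespan and get $\textsc{opt} = O(\ell)$ (or $O(1)$ after normalizing $\ell$, and $O(\ell/n)$ worth of essential critical-path length per step). \textbf{Step 3 (greedy lower bound).} Show that any greedy algorithm, being work-conserving, cannot leave the forced long skinny job idle, so for each of the $\Theta(n)$ gadgets it inserts a length-$\ell$ stall before the next gadget can even be revealed; summing over gadgets gives makespan $\Omega(n\ell)$. \textbf{Step 4 (bookkeeping).} Count jobs (there are $\Theta(n)$ of them, using $0$-length fat jobs where convenient, which Section~\ref{par:zero-time} permits up to an $O(\epsilon)$ loss) and divide: $\Omega(n\ell)/O(\ell) = \Omega(n)$.

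The main obstacle — and the part that needs the most care — is Step 1 together with the ``$\emph{every}$ greedy algorithm'' quantifier: a greedy algorithm has freedom in how it breaks ties among simultaneously-ready jobs, so the gadget must be designed so that no tie-breaking rule lets it dodge starting the long skinny job (e.g.\ by racing ahead on the fat block instead). The natural way to enforce this is to make the fat block of a gadget depend on that gadget's skinny job, so that at the decision instant the skinny job is genuinely the only option; I would need to check that this dependency does not also force the \emph{optimum} to serialize the skinny jobs (which would kill the gap), and if it does, to decouple the ``forcing'' copies of the skinny jobs from the ``batchable'' copies — i.e.\ give each gadget both a forced skinny job on the critical path \emph{and} a reservoir of independent equal-length skinny jobs that only the idling optimum can collect and run in one parallel batch. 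Getting this decoupling right, while keeping the job count $\Theta(n)$, is the technical heart of the proof.
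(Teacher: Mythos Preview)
Your overall picture is right—long skinny side-jobs that a work-conserving scheduler cannot resist starting, which then block subsequent fat jobs—but condition (a) in Step~1 is the wrong move, and it is precisely what creates the obstacle you spend the last paragraph worrying about. If the long skinny job is the \emph{unique} ready job at the start of a gadget, then it lies on the critical path, so the optimum must also run it before the next gadget can begin; that serializes the $\Theta(n)$ long skinny jobs for \textsc{opt} as well and kills the gap. Your proposed fix (separate ``forcing'' copies on the path plus ``batchable'' copies off it) does not help: either the forcing copies are long, in which case \textsc{opt} still pays $\Omega(n\ell)$, or they are short, in which case greedy is no longer forced into a long stall.

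The paper's construction avoids this entirely by \emph{not} making the long skinny job the only option. After the fat job $a_i$ (length $1$, full resource), two skinny jobs become ready simultaneously: a short one $c_i$ (length $1$) that is the sole predecessor of the next fat job $a_{i+1}$, and a long one $b_i$ (length $n$) that is a dead-end side branch with no successors. Because both $b_i$ and $c_i$ are skinny, the greedy definition (schedule whenever adequate resource is available) forces any greedy algorithm to start \emph{both}; $c_i$ finishes after one step and releases $a_{i+1}$, but $a_{i+1}$ is fat and must wait the remaining $n-1$ steps for $b_i$. Iterating over $n$ gadgets gives greedy makespan $\Omega(n^2)$. For \textsc{opt}, since no $b_i$ has any successor, one may run the entire $a_1,c_1,a_2,c_2,\ldots$ chain first in $O(n)$ time and then run all $b_i$ in parallel in $n$ time, for total $O(n)$. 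So the key idea you are missing is that the long skinny job should sit \emph{off} the critical path as a side branch released alongside a short trigger; you never need to make it the unique ready job, and doing so is exactly what breaks the argument.
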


\begin{proof}
    
\begin{figure}
    \centering
\includegraphics[width=0.7\textwidth]{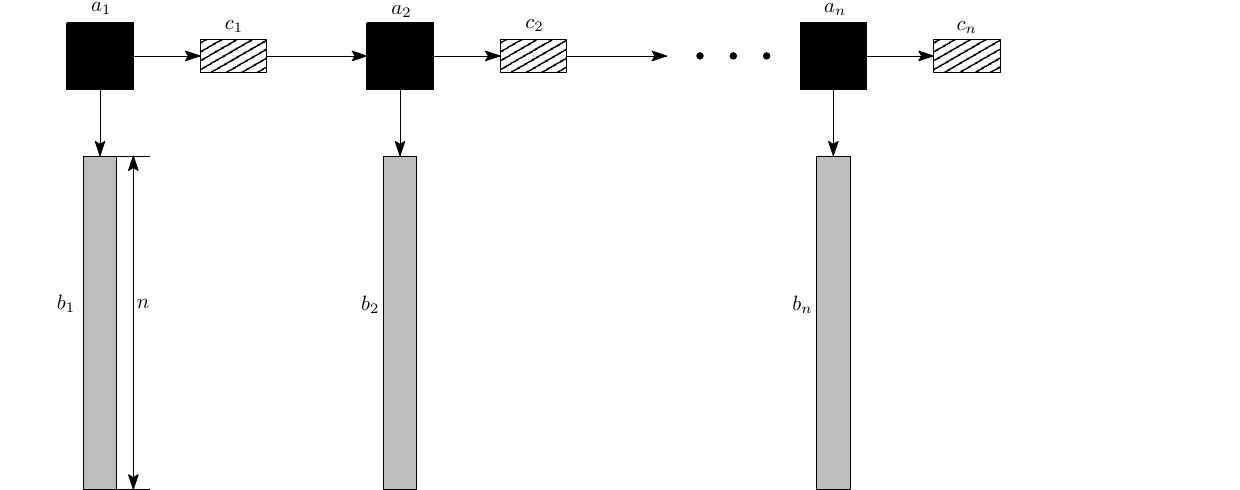}
   \caption{Lower bound construction of greedy algorithms}
    \label{fig:greedy}
\end{figure}
In the DAG, there are three types of jobs: Job $a_i$ takes 1 unit of resource and 1 unit of time. Job $b_i$ takes $\epsilon = 1/2n$ units of resource and $n$ units of time. Job $c_i$ takes $\epsilon = 1/2n$ units of resource and $1$ units of time. The precedence relations are as follows: for each $i$, $a_i \prec b_i$ and $a_i \prec c_i$. Moreover, $c_i \prec a_{i+1}$ (see Figure~\ref{fig:greedy}.
Let the resource budget be 1.

Any online greedy algorithm runs jobs $b_i$ and $c_i$ in parallel. Since job $a_{i+1}$ takes all the resource budget, until job $b_i$ is finished, job $a_{i+1}$ can not start. However, there is no precedence constraint between job $b_i$ and $a_{i+1}$. Thus, each job $b_i$ creates a bottleneck of size $n$. Since there are $n$ number of $b_i$ jobs, any greedy algorithm's makespan is $\Omega(n^2)$.

On the other hand, an optimal schedule would be as follows: Execute all the $a_i$ and $c_i$ jobs, but not $b_i$. This takes $O(n)$ time. After that, execute all the $b_i$ jobs in parallel, which takes another $n$ time steps. Hence, the makespan is $O(n)$. This proves the lemma.
\end{proof}

\section{Conclusions and Open Problems}
We resolve an important open question in resource scheduling: Can a polynomial-time $O(1)$-factor approximation algorithm be found? 
We also show a connection between the scheduling problem and a seemingly unrelated problem called the shortest common super-sequence (SCS) problem and prove that improving the current state-of-the-art in the scheduling problem would make significant progress in the sequencing problem. We then consider the online setting of the resource scheduling problem and present a deterministic online algorithm with matching upper and lower bounds in the competitive ratio.

We introduce a new toolbox called ``chains" to prove all the lower bounds. We hope this tool will be further helpful in proving lower bounds in other resource scheduling problems.

An immediate open question is to investigate the other direction in the relationship of scheduling and sequencing: If we improve the approximation ratio for SCS, would it imply improving the same for resource scheduling?
Another direction for future research is to investigate better resource scheduling algorithms for other important objectives, such as minimizing weighted completion time, weighted flow time, or $\ell_{p}$-norms.
\subsection*{Acknowledgments}
We thank the anonymous reviewers of SPAA for their helpful suggestions.
\appendix
\newpage
\section{Appendix}
\label{sec:appendix}
\paragraph{Proof of Lemma~\ref{lvlleq2depth}}
\begin{proof}
We prove the stronger statement that $\delta(v) \geq \frac{\psi(v)- t_v}{2} $ by induction on  the number of predecessors of $v$. 
Let $s$ be the source.

  \textbf{Base case:}  The inequality holds since $\delta(s) = \psi(s) =1$ and $t_s =0$.
  
  \textbf{Induction step:}
  Let $v$ be a  node that is not the source.
  Let $w$ be an immediate predecessor of $v$ maximizing $\delta(w) + t_w$.
   By induction, $\delta(w) \geq \frac{\psi(w)- t_w}{2} $.  
   The smallest multiple of $t_v$ greater than $\psi(w) + t_w$ is at most $\psi(w) + t_w+ t_v$.  
   So $\psi(v) \leq \psi(w) + t_w+ t_v$.
Then   $ \frac{ \psi(v) -t_v}{2}  \leq  \frac{ \psi(w) +t_w + t_v  -t_v}{2} \leq  \delta(w) + t_w \leq \delta(v)   $.
\end{proof}

\paragraph{Proof of Property \ref{parallelmakespan} }
Let us first prove the following lemma
\begin{lemma}\label{countingspecial}
    Let $ \{i_1,i_2,\ldots, i_p \}$ be a set of $p$ distinct positive integers, and let  $\{m_1,m_2,.., m_p\}$ be another set of $p$ positive integers.
    Let $q_0,q_1,.., q_r$ be a sequence of powers of 2.
    Define  $ s_i= | \{ j\in [r] : q_j  \geq 2^i   \} |   $.
    Suppose for each $t=1,2, \ldots ,p$, $s_{ i_t}  \geq 2^{m_t}  /2^{i_t} $.
    Then $\sum_{j=0}^r  q_j \geq \Omega(2^{m_1} + 2^{m_2} + 2^{m_3} + \cdots + 2^{m_p})$
\end{lemma}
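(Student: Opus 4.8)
The plan is to prove the lemma with an explicit constant, namely to show $\sum_{j} q_j > \frac{1}{2}\sum_{t=1}^{p} 2^{m_t}$, which certainly gives the claimed $\Omega(2^{m_1}+2^{m_2}+\cdots+2^{m_p})$ bound. First I would write each $q_j$ as $q_j = 2^{a_j}$ with $a_j\in\mathbb{N}$ (possible since every $q_j$ is a power of $2$). The hypothesis $s_{i_t}\ge 2^{m_t}/2^{i_t}$ rearranges into the single clean inequality
\[
2^{m_t}\;\le\;2^{i_t}\,s_{i_t}\qquad\text{for every }t\in\{1,\dots,p\},
\]
so, summing over $t$, it suffices to bound $\sum_{t=1}^{p}2^{i_t}s_{i_t}$ from above by $2\sum_{j}q_j$.

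For that bound I would rewrite $s_{i_t}=\sum_{j}\mathbf{1}[\,a_j\ge i_t\,]$ and exchange the order of summation:
\[
\sum_{t=1}^{p}2^{i_t}s_{i_t}\;=\;\sum_{t=1}^{p}2^{i_t}\sum_{j}\mathbf{1}[\,a_j\ge i_t\,]\;=\;\sum_{j}\ \sum_{t\,:\,i_t\le a_j}2^{i_t}.
\]
Now the one remaining hypothesis comes into play: the integers $i_1,\dots,i_p$ are pairwise distinct, so for each fixed $j$ the set $\{\,i_t : i_t\le a_j\,\}$ is a set of distinct positive integers contained in $\{1,2,\dots,a_j\}$. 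Hence $\sum_{t\,:\,i_t\le a_j}2^{i_t}\le\sum_{i=1}^{a_j}2^{i}=2^{a_j+1}-2<2\cdot 2^{a_j}=2q_j$. Summing over $j$ yields $\sum_{t=1}^{p}2^{i_t}s_{i_t}<2\sum_{j}q_j$, and combining with the first display gives $\sum_{t=1}^{p}2^{m_t}<2\sum_{j}q_j$, as desired.

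There is no serious obstacle here; the one point that must be spotted is that distinctness of the $i_t$ is exactly what makes the inner geometric sum collapse to within a factor $2$ of its largest term $2^{a_j}$ — without distinctness the inner sum could be as large as $p\cdot 2^{a_j}$ and the statement would fail. A couple of edge cases deserve a line each: if some $m_t<i_t$ the inequality $2^{m_t}\le 2^{i_t}s_{i_t}$ is still valid as a statement about reals (and in the intended application via Property~\ref{parallelmakespan} one always has $m_t\ge i_t$), and the harmless $-2$ in $2^{a_j+1}-2$ only affects the constant, not the $\Omega(\cdot)$ conclusion. Finally, to derive Property~\ref{parallelmakespan} from the lemma I would take $q_0,q_1,\dots,q_r$ to be the lengths of the successive intervals of a schedule normalised via Lemma~\ref{cleanschedule} (each interval running skinny jobs of a single length): since each chain $C(m_t,i_t)$ consists of $2^{m_t}/2^{i_t}$ skinny jobs of length $2^{i_t}$ separated by fat jobs that cannot share an interval with a skinny job, at least $2^{m_t}/2^{i_t}$ of these intervals must have length at least $2^{i_t}$, so $s_{i_t}\ge 2^{m_t}/2^{i_t}$ and the lemma gives makespan $\sum_j q_j=\Omega(\sum_t 2^{m_t})$.
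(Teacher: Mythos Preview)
Your proof is correct and is essentially the same argument as the paper's, just presented without the ``credit/bucket'' accounting language: the paper also bounds $\sum_t 2^{m_t}\le\sum_t 2^{i_t}s_{i_t}$ and then, by swapping the roles of $t$ and $j$, bounds each $q_j$'s contribution $\sum_{t:\,2^{i_t}\le q_j}2^{i_t}$ by $2q_j$ via the geometric series (which is exactly where distinctness of the $i_t$ is used). Your direct indicator-and-swap presentation is if anything slightly cleaner than the paper's credit metaphor, but the two are mathematically identical.
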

\begin{proof}
    W.l.o.g., assume $ 0<i_1 <i_2 <..< i_p $. 
    For each $t=1,2,..,p$, we define a bucket which corresponds to $m_t$ and $i_t$.
    Each $ q_j $ allocates $ 2^{i_t} $ units of credit to each bucket $t$ such that $2^{i_t}  \leq q_j $.
    The total amount of credit allocated by $q_j$ is at most $\sum_{ t: \ \ 2^{i_t}  \leq q_j } 2^{i_t}  \leq \sum_{ t=0 }^{\log_2 (q_j)}  2^t \leq 2 q_j  $. 
    The total amount of credit allocated to bucket $t$ is $ 2^{i_t} s_{i_t}  \geq 2^{m_t} $.
    Thus the total amount of credit allocated to all buckets is $2^{m_1} + 2^{m_2} + 2^{m_3} + \cdots + 2^{m_p}$. 
    This shows that $\sum_{j=0}^r  q_j \geq \Omega(2^{m_1} + 2^{m_2} + 2^{m_3} + \cdots + 2^{m_p})$.
\end{proof}

\begin{proof}(of Property~\ref{parallelmakespan})
    Let $\mathcal{A}$ be any algorithm.  By \autoref{immediatestartchain}, we can modify $\mathcal{A}$ so that it does not start any job   while another job has already started but not finished.
     Henceforth, we assume that $\mathcal{A}$ does not start any job while another job is running.

    Let $ \mathcal{A}_{job}(t) $ denote the length of the longest job started or running at time $t$. 
     Define $ q_0=  \mathcal{A}_{job}(0) $ and $ q_{i+1}= \mathcal{A}_{job}( \sum_{j=0}^i q_j) $.
     Note that at each time $\sum_{j=0}^i q_j$, a job of length $q_{i+1}$ is started.
     Let $r$ be the index of the last $q_i$.
    Since we do not start any jobs while another is running, each job is started at time  $ \sum_{j=0}^i q_j$ for some $i$.
    Further if the length of the job started is $ q' $, than $q' \leq q_j $.
    Thus if chain $C(m_t,i_t)$ is completed, then there must be indices  $ j(1),j(2),.., j( 2^{m_t} /2^{i_t} ) $ such that the $i$-th skinny job of $C(m_1,i_1)$ is started at time $\sum_{j=0}^{j(i)-1} q_j$. 
    Note $ q_{j(i)}  \geq 2^{i_t} $.
    Thus, for $\mathcal{A}$ to complete chain $C(m_t,i_t)$, there must be at least $ 2^{m_t} /2^{t_1} $ indices  $ j(1),j(2),.., j( 2^{m_t} /2^{i_t} )  $ such that $q_{j(i)} \geq 2^{i_1} $.
    By \autoref{countingspecial}, the makespan of $\mathcal{A}$, or  $\sum_{j=0}^r  q_j $ is at least $ \Omega(2^{m_1} + 2^{m_2} + 2^{m_3} + \cdots + 2^{m_p})$.
    
\end{proof}
\newpage 

\bibliography{main}
\end{document}